\documentclass[11pt]{article}

\pdfoutput=1
\usepackage[T1]{fontenc}
\usepackage{lmodern}
\usepackage[protrusion=true,expansion=true]{microtype}
\usepackage{amsmath,amssymb,amsfonts,amsthm}
\usepackage{subcaption}
\usepackage{graphicx}
\usepackage{fullpage}
\usepackage{lipsum}
\usepackage[english]{babel}
\usepackage[backref=page,linktocpage=true]{hyperref}
\usepackage{color}
\usepackage{wrapfig}
\usepackage{tikz}
\usetikzlibrary{decorations.pathreplacing}
\usepackage{algorithm}
\usepackage[noend]{algpseudocode}
\usepackage[framemethod=tikz]{mdframed}
\usepackage{xspace}
\usepackage{pgfplots}
\usepackage{framed}
\usepackage{thmtools}
\usepackage{thm-restate}
\usepackage{tabu}
\usepackage{fancyhdr}
\pgfplotsset{compat=1.5}
\usepackage{enumitem}

\usepackage{subcaption}
\usepackage{placeins}
\usepackage{afterpage}
\usepackage{booktabs}

\newtheorem{lemma}{Lemma}[section]
\newtheorem{proposition}{Proposition}[section]
\newtheorem{definition}{Definition}

\newtheorem{claim}[lemma]{Claim}

\newenvironment{proofof}[1]{\begin{trivlist} \item {\bf Proof
#1:~~}}
  {\qed\end{trivlist}}

\newcommand{\namedref}[2]{\hyperref[#2]{#1~\ref*{#2}}}

\newcommand{\claimlab}[1]{\label{claim:#1}}
\newcommand{\claimref}[1]{\namedref{Claim}{claim:#1}}

\newcommand{\degT}{\ensuremath{\widetilde{\deg}}\xspace}

%---------------------------------------------------------------------------------------------------------------------------------------------------

%---------------------------------------------------------------------------------------------------------------------------------------------------

\newcommand{\eps}{\varepsilon}

\def \calA    {\mdef{\mathcal{A}}}

\newcommand{\myqed}[1]{\let\qed\relax \hspace*{\fill} #1 \ensuremath{\square}}

\newcommand{\SDD}{\ensuremath{\texttt{SDD}}\xspace}

\newenvironment{tbox}{\begin{tcolorbox}[
		enlarge top by=5pt,
		enlarge bottom by=5pt,
		 breakable,
		 boxsep=0pt,
                  left=4pt,
                  right=4pt,
                  top=10pt,
                  arc=0pt,
                  boxrule=1pt,toprule=1pt,
                  colback=white
                  ]%%
	}
{\end{tcolorbox}}

%%% Unified Formating
\newcommand{\mdef}[1]{{\ensuremath{#1}}\xspace}  % Math Def which can also be used in normal text.
     % Distribution should use mathcal.
       % (Important) Sets should use mathbb.
       % Algorithms should use mathtt.
      % Functions denoted with text (e.g. size()) should use mathsf.
               % Vector
                         % Shorthand for ``displaystyle''.

\DeclareMathOperator*{\polylog}{polylog}

%\DeclareMathOperator*{\dist}{dist}

%%% Text notation

     % 'The i-th entry it a list...' --> i\th
     % 'The 1-st entry it a list...' --> 1\st
     % 'The 2-nd entry it a list...' --> 2\nd
     % 'The 3-rd entry it a list...' --> 3\rd

%%% Common Sets & Symbols & Functions
                % Function outputing vector of cardinalities of a given vector of sets.
                % Negliglbe function
%\def \polylog  {\mdef{\myfunc{polylog}}}             % Polylogarithm function
       % Any function that is polynomial in the logarithm of the logarithm of the argument
         % Absolute value
              % Absolute value
               % Absolute value
                     % Dimension of a vector
\newcommand{\set}[1]{\mdef{\left\{#1\right\}}}                        % Absolute value
 % Expected value

\newcommand{\ignore}[1]{}

\newif\ifnotes\notestrue %set this to true if notes are visible and to false (next line) if they should be hidden
% \newif\ifnotes\notesfalse
\ifnotes
\newcommand{\samson}[1]{\textcolor{blue}{{\bf (Samson:} {#1}{\bf ) }} \marginpar{\tiny\bf
             \begin{minipage}[t]{0.5in}
               \raggedright S:
            \end{minipage}}}
\newcommand{\david}[1]{\textcolor{purple}{{\bf (David:} {#1}{\bf ) }} \marginpar{\tiny\bf
             \begin{minipage}[t]{0.5in}
               \raggedright D:
            \end{minipage}}} 
\else
\newcommand{\samson}[1]{}
\newcommand{\david}[1]{}
\fi

\makeatletter
\renewcommand*{\@fnsymbol}[1]{\textcolor{mahogany}{\ensuremath{\ifcase#1\or *\or \dagger\or \ddagger\or
 \mathsection\or \triangledown\or \mathparagraph\or \|\or **\or \dagger\dagger
   \or \ddagger\ddagger \else\@ctrerr\fi}}}
\makeatother

\providecommand{\email}[1]{\href{mailto:#1}{\nolinkurl{#1}\xspace}}

\definecolor{mahogany}{rgb}{0.75, 0.25, 0.0}
\definecolor{darkblue}{rgb}{0.0, 0.0, 0.55}
\definecolor{darkpastelgreen}{rgb}{0.01, 0.75, 0.24}
\definecolor{darkgreen}{rgb}{0.0, 0.2, 0.13}
\definecolor{darkgoldenrod}{rgb}{0.72, 0.53, 0.04}
\definecolor{darkred}{rgb}{0.55, 0.0, 0.0}
\definecolor{forestgreenweb}{rgb}{0.13, 0.55, 0.13}
\definecolor{greencss}{rgb}{0.0, 0.5, 0.0}
\definecolor{bleudefrance}{rgb}{0.19, 0.55, 0.91}

\hypersetup{
     colorlinks   = true,
     citecolor    = darkblue,
	 linkcolor	  = darkred
}

\fancypagestyle{pg}
{
\lhead{}
\rhead{}
\cfoot{--\ \thepage\ --}

}

\newcommand{\paren}[1]{\ensuremath{\left(#1\right)}\xspace}
\newcommand{\card}[1]{\left\vert{#1}\right\vert}

\newcommand{\Vsparse}{V_{\text{sparse}}}

\newcommand{\Vup}{\ensuremath{V_{\textnormal{update}}}\xspace}

\newcommand{\sym}{\,\triangle\,}

\newcommand{\bracket}[1]{\left[#1\right]}
\newcommand{\expect}[1]{\mathbb{E}\bracket{#1}}

\newcommand{\DeltaT}{\ensuremath{\widetilde{\Delta}}\xspace}

\newcommand{\pl}{\ensuremath{(+)}\xspace}
\newcommand{\nl}{\ensuremath{(-)}\xspace}

\newcommand{\Utilde}{\ensuremath{\widetilde{U}}\xspace}

\usepackage{tcolorbox}
\tcbuselibrary{skins,breakable}
\tcbset{enhanced jigsaw}

\usepackage[normalem]{ulem}
\usepackage[compact]{titlesec}

\definecolor{DarkRed}{rgb}{0.5,0.1,0.1}
\definecolor{RURed}{rgb}{0.95,0.1,0.1}
\definecolor{DarkBlue}{rgb}{0.1,0.1,0.5}

\newtheorem{mdresult}{Result}

\definecolor{Red}{rgb}{0.9,0,0}

\usepackage{nameref}
\usepackage[nameinlink]{cleveref}

\crefname{property}{property}{Property}
\creflabelformat{property}{(#1)#2#3}

\crefname{equation}{eq}{Eq}
\creflabelformat{equation}{(#1)#2#3}

\crefname{thm}{theorem}{Theorem}
\creflabelformat{Theorem}{(#1)#2#3}

\crefname{algocf}{Algorithm}{Algorithm}
\creflabelformat{algocf}{#1#2#3}

\theoremstyle{definition}
\newtheorem{mdalg}{Algorithm}
\newenvironment{Algorithm}{\begin{tbox}\begin{mdalg}}{\end{mdalg}\end{tbox}}

\title{Fully Dynamic Adversarially Robust Correlation Clustering in Polylogarithmic Update Time}
\author{Vladimir Braverman\thanks{Rice University, Google Research, UT Health, and Johns Hopkins University. \texttt{email:~vova@vs.jhu.edu}}
    \and
    Prathamesh Dharangutte \thanks{Rutgers University. \texttt{email:~prathamesh.d@rutgers.edu}.}
    \and
    Shreyas Pai\thanks{Indian Institute of Technology Madras. \texttt{email:~shreyas@cse.iitm.ac.in}.}
    \and
    Vihan Shah\thanks{University of Waterloo. \texttt{email:~vihanshah98@gmail.com}.}\vspace{2pt}
    \and
    Chen Wang\thanks{Rice University and Texas A\&M University. \texttt{email:~chen.wang.research@gmail.edu}.}
}
\date{}

\begin{document}

\maketitle

\begin{abstract}
% \chen{TODO: check the abstract and polish before abstract submission ddl}
We study the dynamic correlation clustering problem with \emph{adaptive} edge label flips. In correlation clustering, we are given a $n$-vertex complete graph whose edges are labeled either $\pl$ or $\nl$, and the goal is to minimize the total number of $\pl$ edges between clusters and the number of $\nl$ edges within clusters.
We consider the dynamic setting with adversarial robustness, in which the \emph{adaptive} adversary could flip the label of an edge based on the current output of the algorithm. 
Our main result is a randomized algorithm that always maintains an $O(1)$-approximation to the optimal correlation clustering with $O(\log^{2}{n})$ amortized update time. 
Prior to our work, no algorithm with $O(1)$-approximation and $\polylog{(n)}$ update time for the adversarially robust setting was known.
% For the adaptive adversary setting, our algorithm is the first to obtain $O(1)$-approximation with $\polylog{(n)}$ update time.
% In contrast,
% existing algorithms for dynamic correlation clustering only work for \emph{oblivious} adversary with edge updates specified in the beginning. 
We further validate our theoretical results with experiments on synthetic and real-world datasets with competitive empirical performances.
Our main technical ingredient is an algorithm that maintains \emph{sparse-dense decomposition} with $\polylog{(n)}$ update time, which could be of independent interest.
\end{abstract}

\section{Introduction}
\label{sec:intro}
Clustering, or more simply, putting similar elements in the same group, is a fundamental task in many machine learning and data science applications. Correlation clustering \cite{bansal2004correlation} is a classic problem where we are given as input an $n$-vertex \emph{labeled complete graph} $G=(V, E^+\cup E^-)$, where each edge is labeled either positive $(+)$ or negative $(-)$, indicating pairwise similarity or dissimilarity of the two endpoints. The goal is to cluster the vertices so that the total number of disagreements is minimized. A disagreement is either a positive edge whose both endpoints are in different clusters and a negative edge whose both endpoints are in the same cluster. Unlike the $k$-clustering tasks, there is no restriction on the number of clusters in a solution as long as it minimizes the total disagreement. Correlation clustering has been extensively studied in theoretical computer science and machine learning literature (see, e.g.~\cite{bansal2004correlation,GiotisG06,BonchiGK13,BhaskaraDV18,AhmadianE0M20,Cohen-AddadLMNP21,Assadi022,Cohen-AddadFLMN22,Cohen-AddadLMP24,CaoCL0NV24}, and references therein); furthermore, correlation clustering has been widely applied to a broad spectrum of applications, including document clustering \cite{bansal2004correlation,zhang2008correlation}, social networks \cite{levorato2017evaluating,AADGW22}, community detection \cite{ShiDELM21}, image segmentation \cite{kim2011higher,kim2014image}, to name a few.

The original form of correlation clustering is defined on any \emph{fixed} labeled complete graph $G=(V, E^+\cup E^-)$. 
However, in many scenarios, the input labels may not be static and are subject to changes over time. These changes could be due to unreliable input information or the evolution of the environment. 
For instance, in social networks, we often use $\pl$ and $\nl$ to denote the ``friendly'' and ``hostile'' relationships between individuals, which would naturally evolve as time changes. 
In such situations, we would like to \emph{maintain} a solution that performs well with the most recent labeling. To this end, the trivial solution is to compute a new correlation clustering on the \emph{entire graph} upon every label update. However, solving the entire problem from scratch with only local updates appears to be very inefficient. As such, a very motivating question is whether we could maintain a solution with \emph{efficient} update times.

The above question motivates the study of \emph{dynamic} algorithms for correlation clustering \cite{BehnezhadDHSS19,ChechikZ19,BehnezhadCMT23,DalirrooyfardMM24,Cohen-AddadLMP24,BCCGM24}. In this setting, we are given a labeled complete graph whose entire set of edges is labeled $\nl$ in the beginning. For every update step, the adversary could flip an edge label from $\nl$ to $\pl$ or from $\pl$ to $\nl$. The efficiency is measured by amortized update time, defined as the total update time divided by the number of edge updates. The ``sweet spot'' of the update time is $\polylog(n)$, which is asymptotically proportional to the time needed to read a single edge update. The work of \cite{BehnezhadDHSS19,ChechikZ19} was the first to design algorithms with $3$-approximation and $\polylog(n)$ update time. Subsequently, \cite{BehnezhadCMT23,DalirrooyfardMM24} improved the update time to $O(1)$ for the same approximation guarantee, and a very recent work of \cite{BCCGM24} improved the approximation guarantee to $(3-\Omega(1))$ with $\polylog(n)$ update time. 

%The trivial solution is to compute a new clustering from scratch, but this is not the most efficient. A better approach is to compute a solution in such a way that we can \emph{efficiently update} it to be a good solution to the most recent labeling. 
Almost the entire literature of dynamic clustering focuses on \emph{oblivious} adversary. In this setting, the edge label flips are prespecified by an adversary in the beginning, and the later sequence of edge updates will \emph{not} change during the algorithm updates. However, in many applications, the adversary might be \emph{adaptive}, e.g., the label flips of the next step could be \emph{dependent} on the current output of the clustering. Algorithms that work against adaptive adversaries are called \emph{adversarially robust} algorithms. The study of adversarial robustness has real-world motivations, e.g., in internet routing (see also~\cite{MironovNS11}), the packets sent by the users are based on their current experience of the current latency time. Adversarially robust algorithms have been extensively studied in various online settings (see, e.g.~\cite{MironovNS11,GilbertHSWW12,HardtW13,Ben-EliezerJWY22,BeimelKMNSS22,BernsteinBGNSS022}).

Most existing algorithms for dynamic correlation clustering are based on the classic (sequential) pivot algorithm \cite{ailon2008aggregating}. It is unclear how the pivot-based algorithms could be made to work in the adversarially robust setting. One reason is that the approximation guarantee of the pivot algorithm crucially relies on having a \emph{uniformly random permutation} of the vertices. However, in the adversarially robust setting, the adversary could learn the permutation from the clustering output and make label updates such that the sampled permutation gives a bad solution. As such, there has not been any known adversarially robust correlation clustering algorithm that maintains $O(1)$-approximation in $o(m)$ time. The only known algorithm that deals with the adversarially robust setting is a recent work by \cite{Cohen-AddadLMP24}; however, their algorithm deals with \emph{vertex updates}, and their flips from $\pl$ to $\nl$ is assumed to be random. As such, we could ask the following motivating and open question.
\begin{quote}
\centering
\textit{Can we design any adversarially robust dynamic algorithm for correlation clustering with $O(1)$-approximation and $\polylog(n)$ amortized update time?}
\end{quote}

% The hardest situation is when the input changes are decided by an \emph{adaptive adversary}, that has complete information about the behaviour of the algorithm and can orchestrate the future changes after looking at the current solution that the algorithm has constructed.

\subsection{Our contribution}
In this paper, we answer the above question in the affirmative: we design an adversarially robust algorithm that maintains an $O(1)$-approximation to correlation clustering in $O(\log^{2}{n})$ update time. Our model is the adjacency list of the $G^{+}$ subgraph, which is consistent with the models used in \cite{Assadi022,Cohen-AddadLMP24}. To continue, we first formally introduce the model.

\begin{definition}[Adjacency list model for dynamic edge label flips]
\label{def:adj-list-pos-graph}
Let $G=(V,E^+\cup E^-)$ be a labeled complete graph with dynamic and adaptive edge label updates. The graph starts with all edges being labeled as $\nl$.
We assume that we could access the adjacency list of the positive subgraph $G^+ = (V, E^+)$ of $G$. In particular, this means in $O(1)$ time we could: $i).$ given a vertex $v\in V$, query a $\pl$ neighbor of $v$; $ii).$ given a vertex $v\in V$, query the positive degree $\deg^{+}(v)$ of $v$; and $iii).$ query an arbitrary edge $(u,v)\in E^+$. 
% \chen{at some place, we should say why the model could be interpreted as edge insertions and deletion of the $G^+$ graph. In the technical overview, we already used the language of edges and non-edges.}
% worst case vs. amortized
\end{definition}

In a labeled complete graph, the adjacency list of $G^+$ could easily be constructed by looking at the \emph{insertion} ($\nl \rightarrow \pl$) and \emph{deletion} ($\pl \rightarrow \nl$) of $\pl$ edges. The main contribution of our work is the following theorem.

\begin{mdframed}[backgroundcolor=lightgray!40,topline=false,rightline=false,leftline=false,bottomline=false,innertopmargin=2pt]
\begin{restatable}{thm}{ourmaintheorem}
\label{thm:main-alg}
    There is an adversarially robust algorithm that given a fully dynamic labeled complete graph $G=(V, E^+ \cup E^-)$ such that the edge labels are adaptively flipped, maintains an implicit representation of an $O(1)$-approximation to the optimal correlation clustering disagreement cost in $\polylog(n)$ amortized update time.
\end{restatable}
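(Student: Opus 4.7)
The plan is to decouple the problem into two components: (i) maintaining a \emph{sparse-dense decomposition} (SDD) of the positive subgraph $G^+$ dynamically, and (ii) outputting the clustering naturally induced by the SDD---each almost-clique becomes a cluster and each sparse vertex becomes its own singleton. Because the SDD and the induced clustering are deterministic functions of the current graph $G^+$ (which is public to the adversary), the algorithm's output at each step is determined purely by the current graph state; this is what frees us from the random-permutation assumption of pivot-based algorithms and is the structural reason the scheme is adversarially robust.

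For the approximation guarantee I would formalize the SDD in the standard way: a vertex $v$ is \emph{dense} if a $(1-\eps)$-fraction of its $\pl$-neighbors forms a set $K\subseteq N^+(v)$ with $|K|=(1\pm O(\eps))\deg^+(v)$ and $v$ has at most $O(\eps\,\deg^+(v))$ $\pl$-edges leaving $K$; the remaining vertices are \emph{sparse} and become singletons. Bounding the disagreement cost of the induced clustering against \OPT\ is then a purely static charging argument: disagreements inside an almost-clique, on its boundary, and incident to sparse vertices are charged to \OPT's disagreements in the same neighborhood (the \texttt{clique-charge}, \texttt{uneven-charge}, and \texttt{sparse-charge} primitives hinted at in the preamble). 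Combinatorial counting driven by the density parameters yields the $O(1)$-approximation and depends on no randomness whatsoever, so it holds deterministically against the adaptive adversary once we certify the SDD.

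The main obstacle, and the bulk of the proof, is maintaining the SDD in $O(\log^2 n)$ amortized time per edge flip. A single flip changes two degrees by one but can in principle cascade to reclassify many vertices across the dense/sparse boundary and shift their associated almost-cliques. To control this I would (a) partition vertices into $O(\log n)$ \emph{levels} by $\lceil\log\deg^+(\cdot)\rceil$, and (b) apply hysteresis at each level: reclassify $v$ only when its degree or its clique-overlap fraction crosses a threshold with a constant-factor gap, not just at the boundary. This guarantees that any reclassification of $v$ is preceded by $\Omega(\deg^+(v))$ incident flips, which pays for the $\deg^+(v)$ work of locally recomputing $v$'s clique membership and updating the affected neighbors' counters, giving $O(1)$ amortized cost per flip for the structural reclassification. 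One factor of $\log n$ comes from locating and updating level-specific data; the second factor of $\log n$ comes from maintaining $\polylog(n)$-size sampled sketches of each vertex's neighborhood, so that the clique-overlap test can be evaluated in $\polylog(n)$ time rather than $\deg^+(v)$ time, with fresh independent randomness drawn each time $v$ is reclassified.

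Adversarial robustness of the maintenance is then a direct consequence of this refresh rule: the clustering depends only on the current SDD, and the randomness used to certify the SDD is redrawn at every reclassification, so stale random bits are never used by the adversary to force an incorrect density test. Combining the $O(1)$-approximation of the SDD-induced clustering with the $O(\log^2 n)$ amortized update time of the maintenance procedure proves \thmref{main-alg}.
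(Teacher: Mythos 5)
Your high-level plan is the same as the paper's: cluster by a sparse-dense decomposition of $G^+$, refresh a vertex's classification lazily only after $\Theta(\eps\deg^+(v))$ incident flips so the recomputation cost amortizes, and draw fresh randomness at each refresh. But two things are off. First, the opening claim that ``the SDD and the induced clustering are deterministic functions of the current graph'' is false for your own construction: lazy reclassification makes the maintained partition depend on the update \emph{history}, not only on the current snapshot, and the classification tests are randomized. You patch this later by appealing to fresh coins at each reclassification, which is the correct robustness argument (and the one the paper uses), but the deterministic-function framing is a red herring.

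Second, and more importantly, you skip over the central technical obstacle: when you locally recompute clique membership in $N[v]$ (or with neighborhood sketches), the result is a decomposition of the \emph{induced subgraph}, and a vertex that is dense inside $N[v]$ can be globally sparse or belong to a different, larger almost-clique, while a locally-sparse vertex can be globally dense; a local refresh can also erode almost-cliques that lie mostly outside $N[v]$. Your ``hysteresis + $O(\log n)$ degree levels'' scheme does not address any of this --- it only controls \emph{when} you recompute, not \emph{how} to reconcile the local answer with the maintained global partition. The paper's actual work is exactly here: a validity filter comparing each candidate's global degree $\deg(v)$ to the candidate clique size $|K|$, a merge test (\Cref{alg:AC-dense-test}) to pull in clique members that $v$'s neighborhood missed, an explicit sparsity test (\Cref{alg:vertex-dense-test}) to peel vertices out of almost-cliques, and a dismantling rule that destroys an almost-clique once an $\eps$-fraction of it has leaked away --- followed by an inductive degradation argument (that non-refreshed sparse vertices stay $\Omega(\eps)$-sparse and non-refreshed almost-cliques stay $O(\eps)$-dense precisely because a larger change would have triggered a refresh). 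Without that machinery and that invariant, the approximation guarantee for the vertices you did \emph{not} just refresh has no proof, so the argument does not close.
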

\end{mdframed}

To the best of our knowledge, our algorithm in \Cref{thm:main-alg} is the first to maintain an $O(1)$-approximation in $\polylog{(n)}$ update time against an adaptive adversary. This answers the open problem for adversarially robust correlation clustering. We now provide a comparison between our results and two closely related works.

\paragraph{Comparison with \cite{Assadi022}.} Our algorithm uses the sparse-dense decomposition of \cite{Assadi022} as a starting point. Furthermore, \cite{Assadi022} and our work share the same adjacency list model for the $G^+$ subgraph. However, our algorithm contains many non-trivial ideas beyond the scope of \cite{Assadi022}. Since \cite{Assadi022} studied the problem in the sublinear setting, where the goal is to output a correlation clustering with a \emph{static} input graph, the running time of their algorithm is $\Theta(n \log^{2}{n})$. Naively running the algorithm after every step would result in exponentially worse update time than our algorithm.

\paragraph{Comparison with \cite{Cohen-AddadLMP24}.} The work of \cite{Cohen-AddadLMP24} also studied correlation clustering with dynamic updates. However, there are two major differences: $i).$ \cite{Cohen-AddadLMP24} studied \emph{vertex} updates, where the vertices are arrived and deleted together with the adjacency list of the $\pl$ edges; $ii).$ Although the vertex insertions in \cite{Cohen-AddadLMP24} are adversarial, the deletions are random, and the randomness is essential for their correctness proof. In terms of the techniques, \cite{Cohen-AddadLMP24} is also based on the idea of sparse-dense decomposition, albeit they use a different version of sparse-dense decomposition (the version in \cite{Cohen-AddadLMNP21}). Therefore, our results are closely related to \cite{Cohen-AddadLMNP21} but not directly comparable.

\paragraph{Insertion, deletion, edges, and non-edges.} Since our model accesses the adjacency list of $G^+$ subgraph, every edge flip $\nl \rightarrow \pl$ could be viewed as an \emph{insertion} of the $\pl$ edge. Conversely, every edge flip  $\pl \rightarrow \nl$ is equivalent to a \emph{deletion} of the $\pl$ edge. As such, in the rest of the paper, we talk about \textbf{edge insertions and deletions} of the $G^+$ subgraph. We further slightly overload the term of \textbf{edge} to refer to the $\pl$ edges, and \textbf{non-edge} as the $\nl$ edges. When the context is clear, we slightly overload $G$ to denote $G^+$.

\paragraph{Experiments.} 
% \chen{Talk about a high-level summary of the experiments here.} \prat{\checkmark} % We implement this algorithm and provide an empirical analysis of its performance.
We empirically evaluate our algorithm against the baseline approach of running the classical pivot algorithm on the entire graph after every update. This is robust against an adaptive adversary as it ignores the previous output and generates a new result completely from scratch. However, this approach takes $\Omega(n^2)$ amortized update time in the worst case. In \Cref{sec:experiments}, we show that our algorithm performs better and has a stable clustering cost across updates for various settings across synthetic and real-world datasets.

% We now give an overview of the techniques used to design and analyze the algorithm in \Cref{sec:tech-overview}. As we have mentioned, from now on, we mostly talk about the positive subgraph, and we denote it as $G=(V,E)$ for the cleanness of the presentation.

\subsection{Technical overview}
\label{sec:tech-overview}
The starting point of our techniques is the sparse-dense decomposition of graphs that was recently introduced to the correlation clustering literature \cite{Cohen-AddadLMNP21,Bun0K21,Assadi022,Cohen-AddadFLMN22,AssadiSW23,Cohen-AddadLMP24}. 
Roughly speaking, graph sparse-dense decomposition is a family of techniques that given a $n$-vertex graph $G=(V,E)$, divide the vertices into \emph{almost-cliques}, defined as vertices that could be ``bundled'' together with few internal non-edges and outgoing edges, and \emph{sparse vertices}, defined as vertices with low degrees or spread connectivity to various almost-cliques. \cite{Cohen-AddadLMNP21,Assadi022} showed that we could obtain an $O(1)$-approximation for correlation clustering by clustering vertices based on a sparse-dense decomposition on the $G^+$ subgraph. Furthermore, \cite{Assadi022} proved that we could compute the sparse-dense decomposition in $O(n \log^{2}{n})$ time.

The results of \cite{Assadi022} already implied an adversarial robust dynamic algorithm with $O(n \log^{2}{n})$ amortized update time: we could simply restart the algorithm after every edge update. However, running the algorithm from scratch appears to be very ``wasteful''. Previous work like \cite{Cohen-AddadLMP24} has shown that the decomposition is fairly robust and resistant to a small number of edge updates. As such, we could hope to take advantage of robustness and update clustering based on the existing structure of the decomposition. However, the task is highly non-trivial since the targeted running time, i.e., $\polylog{(n)}$, is almost exponentially smaller than the time to run the algorithm. 
%\chen{If we only care about $O(n)$ update time, we could start fresh and run the sparse-dense decomposition algorithm after every update. But how should we get exponentially smaller running time?}

On a very high level, the main idea for our algorithm to achieve $\polylog{n}$ amortized update time is by running sparse-dense decomposition locally in $N[u]$ for a vertex $u$ after the degree of $u$ changes by a small constant factor. For a relatively easy example, consider the scenario that all vertices are of degree $\Theta(n)$. In this case, we could perform sparse-dense decomposition on the entire graph after the updates of $0.01 \, n$ edges, and the amortized running time would become $O(\log^2{(n)})$ since we have enough updates to ``take charge'' for the updates. % The idea could be visualized as in \Cref{fig:overall-strategy}.

% \begin{figure}[!hbtp]
% \centering
% \begin{subfigure}
%   \includegraphics{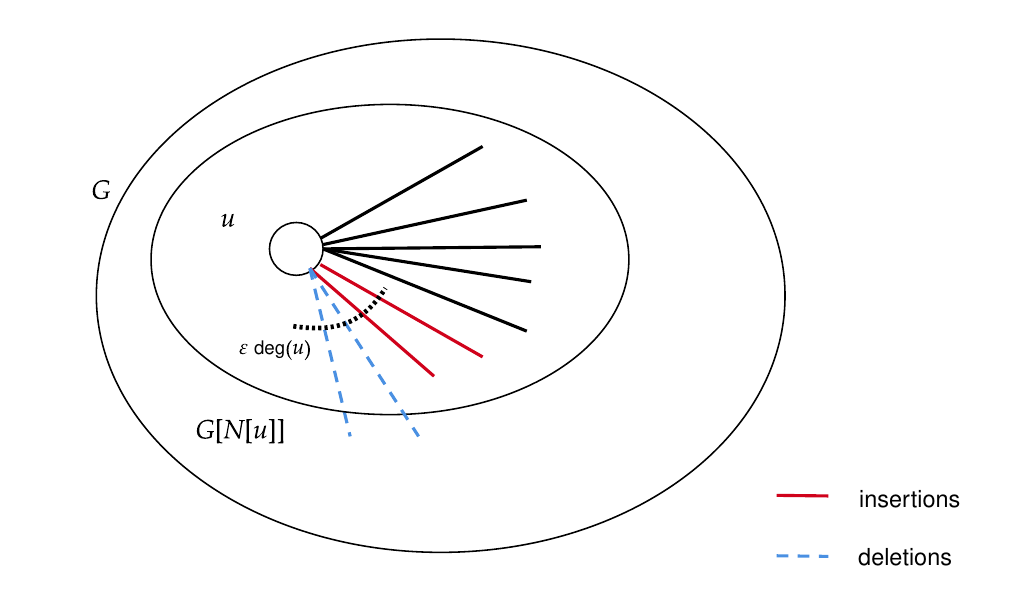}
%   %\caption{$k$-center}
%   \label{fig:overall-strategy}
% \end{subfigure}
% \begin{subfigure}
%   \centering
%   \includegraphics{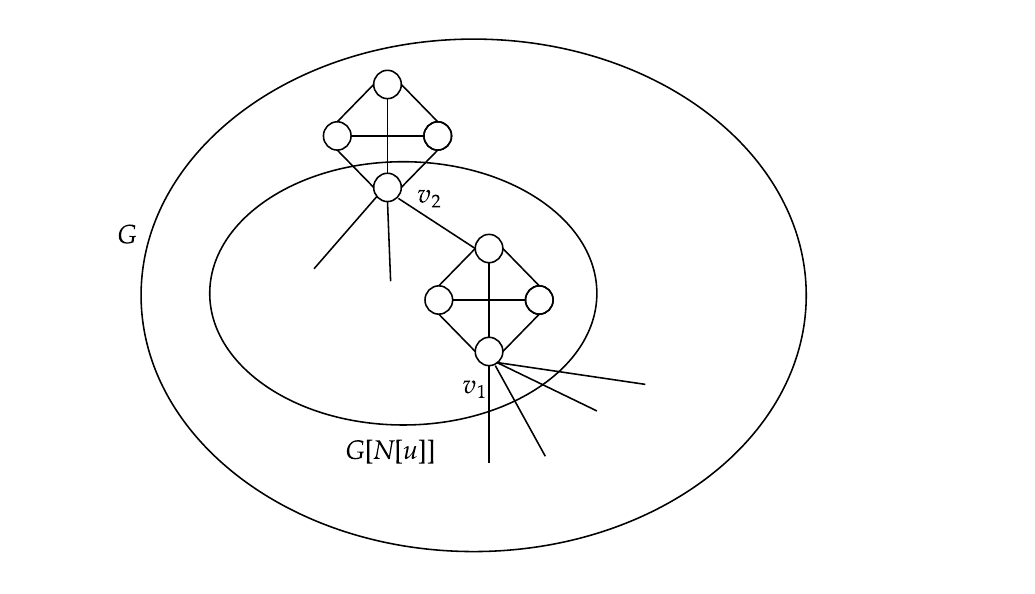}
%   %\caption{$k$-means}
%   \label{fig:local-decomp-difficulty}
% \end{subfigure}
% \caption{Add something}
% \label{fig:strategy-and-difficulty}
% \end{figure}

\begin{figure}


  \centering
  \begin{subfigure}[t]{.4\linewidth}
    \centering\includegraphics[width=\linewidth]{figs/overall-strategy.pdf}
    \caption{}
    \label{fig:overall-strategy}
  \end{subfigure}
  \begin{subfigure}[t]{.4\linewidth}
    \centering\includegraphics[width=\linewidth]{figs/local-decomp-difficulty.pdf}
    \caption{}
    \label{fig:local-decomp-difficulty}
  \end{subfigure}
  \caption{An illustration of the overall update strategy and the discrepancy between the local and global sparse-dense decomposition. In \Cref{fig:overall-strategy}, we run algorithm updates on $G[N[u]]$ once the number of insertions and deletions on $u$ has exceeded $\eps\cdot \deg(u)$. In \Cref{fig:local-decomp-difficulty}, note that vertex $v_1$ belongs to an almost-clique in $G[N[u]]$, but is sparse globally. $v_2$ is a sparse vertex in $G[N[u]]$, but it belongs to another almost-clique.}
\end{figure}

% \chen{On a very high level: the key idea is to run the sparse-dense decomposition for $N(v)$ every time its degree changes by an $\eps$ factor. Note that if the degree of the vertex is already high ($\Theta(n)$), then re-run the sparse-dense would indeed only induce $\polylog{n}$ running time.}

There are however significant technical challenges to make the above idea work for the general case. For general vertex degrees, if we run the decomposition on the induced subgraph of $N[u]$ after $0.01\, \deg(u)$ updates, we could still guarantee $O(\log^2{n})$ amortized running time\footnote{Careful readers might have noticed that it is not clear how could we run sparse-dense decomposition on \emph{induced subgraphs} with the adjacency list model. We provide a discussion in \Cref{sec:tech-subroutines}.}. However, the correctness does not follow if we simply update the global decomposition with the new local one. In particular, if a vertex $v$ is sparse locally, it might still belong to an almost-clique in the global graph. Conversely, if $v$ belongs to a local almost-clique, it might be a sparse vertex in the global graph or belong to a larger almost-clique. Furthermore, the changes in the local decomposition could affect almost-cliques beyond the scope of $N[u]$, which further complicates the issue. An illustration of the overall strategy and the primary difficulty can be found in \Cref{fig:overall-strategy} and \Cref{fig:local-decomp-difficulty} (see \Cref{fig:AC-lose-vertex} for how the updates affect almost-cliques beyond $N[u]$). %\chen{we definitely need a picture here.}

Fortunately, we could efficiently \emph{test} whether the vertices in the local sparse-dense decomposition are aligned with their global properties. In particular, for any local almost-clique $K$, we show that a vertex $v\in K$ is ``valid'' if and only if $\deg(v)$ (in the global graph $G$) is comparable to $\card{K}$. In all other cases, i.e., $v$ is globally sparse or it belongs to a much larger almost-clique, there must be many edges going out of $K$. As such, we could update the decomposition only with the ``valid'' vertices, and simply treat vertices as in the previous stage of the decomposition. For the sparse vertices, we show that we could test whether a vertex is sparse in $O(\log^{2}{n})$ time. As such, we could test whether existing vertices in the almost-cliques (not limited to the new almost-cliques) become sparse, and ``pull out'' the vertex accordingly. % \chen{also add a picture}

% \chen{challenge: what if there are vertices with low degrees? Ideally, we only need to ``fix'' the clustering locally. As such, we could have run the decomposition algorithm only on the subgraph induced by the neighborhood of $v$.}

% \chen{However, there are several technical challenges. The main issues here is that the local ``sparse'' and ``dense'' properties might not carry over to global properties. Furthermore, local changes for edges in $G[N[u]]$ could affect clustering beyond the scopes of $U$, e.g., insertions that make a vertex sparse or deletions that make a vertex dense.}

There is however yet another concern: an almost-clique $K$ might lose many vertices due to the eventual removal of the vertices; furthermore, the removed vertices are quite sparse, which will make the remaining vertices inside $K$ sparse. This piece of information might not be captured by the algorithm updates of any \emph{single} vertex $v$ and $N[v]$. To handle the challenge, we track the number of sparse vertices that are \emph{removed} from the almost-clique. If there are $0.01$ faction of the vertices removed, we simply dismantle the entire almost-clique and make every vertex sparse.

The above procedure enjoys a fast amortized update time: for every dismantle operation of an almost-clique $K$, it takes $\card{K}$ time; however, we must have at least $\Omega(\card{K})$ edge updates to make this happen. Of course, the adversary could potentially make the vertices that are initially being ``pulled out'' of an almost-clique dense again. However, in such a case, a local sparse-dense decomposition will be triggered, and new almost-cliques will be formed. As such, we could guarantee both the update efficiency and the correctness of the decomposition.
\afterpage{
\begin{figure}
  \centering
  \begin{subfigure}[t]{.38\linewidth}
    \centering\includegraphics[width=\linewidth]{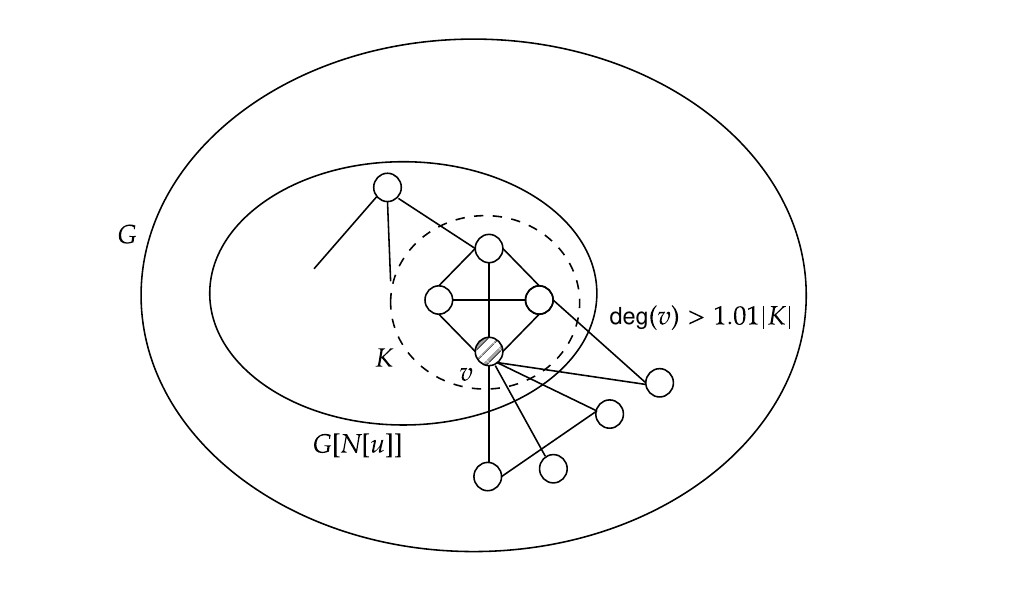}
    \caption{}
    \label{fig:local-AC-check}
  \end{subfigure}
  \begin{subfigure}[t]{.38\linewidth}
    \centering\includegraphics[width=\linewidth]{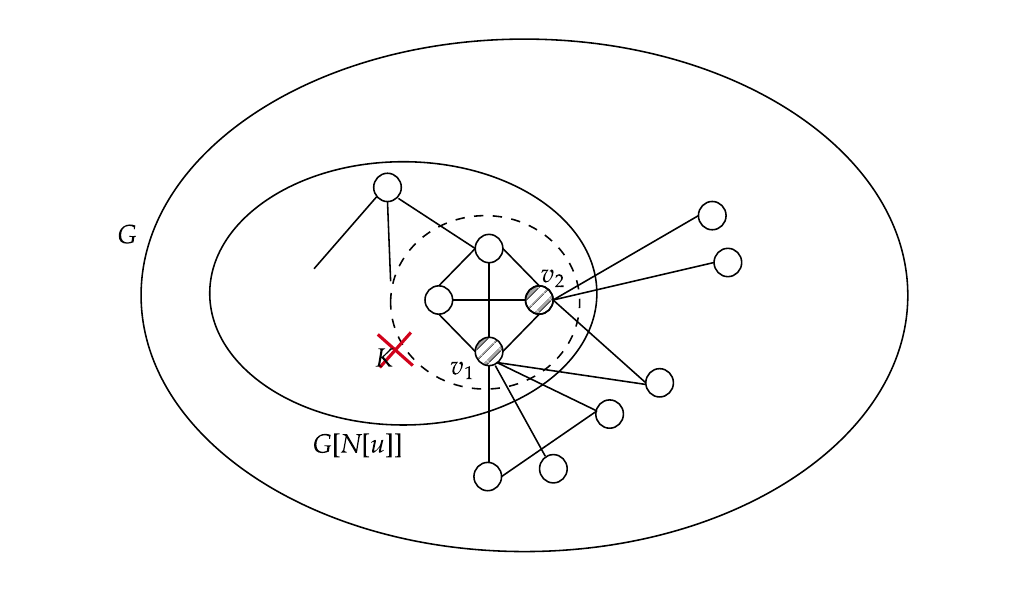}
    \caption{}
    \label{fig:local-AC-not-form}
  \end{subfigure}
  \centering
  \begin{subfigure}[t]{.38\linewidth}
    \centering\includegraphics[width=\linewidth]{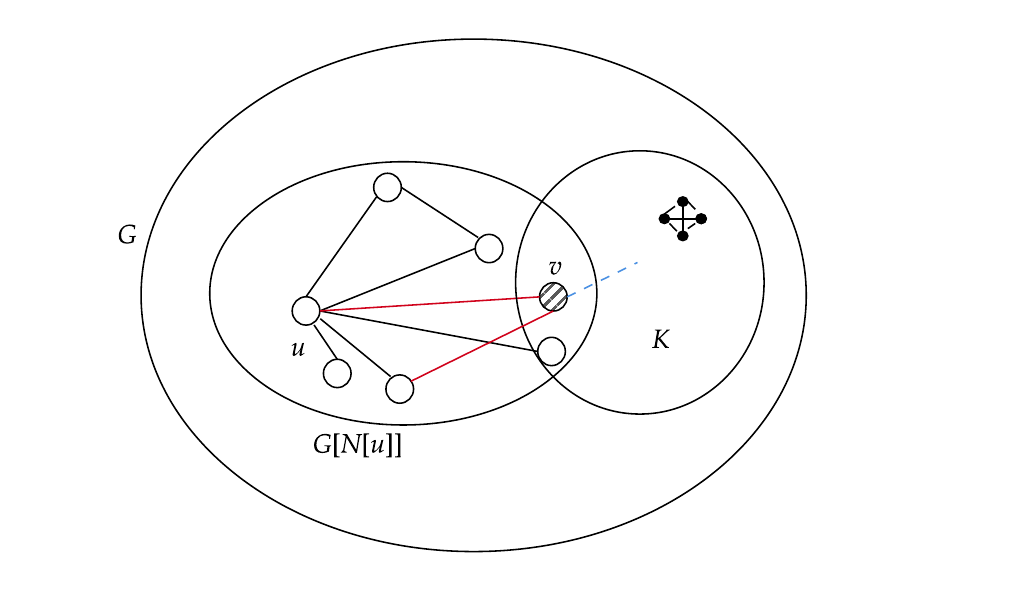}
    \caption{}
    \label{fig:AC-lose-vertex}
  \end{subfigure}
  \begin{subfigure}[t]{.38\linewidth}
    \centering\includegraphics[width=\linewidth]{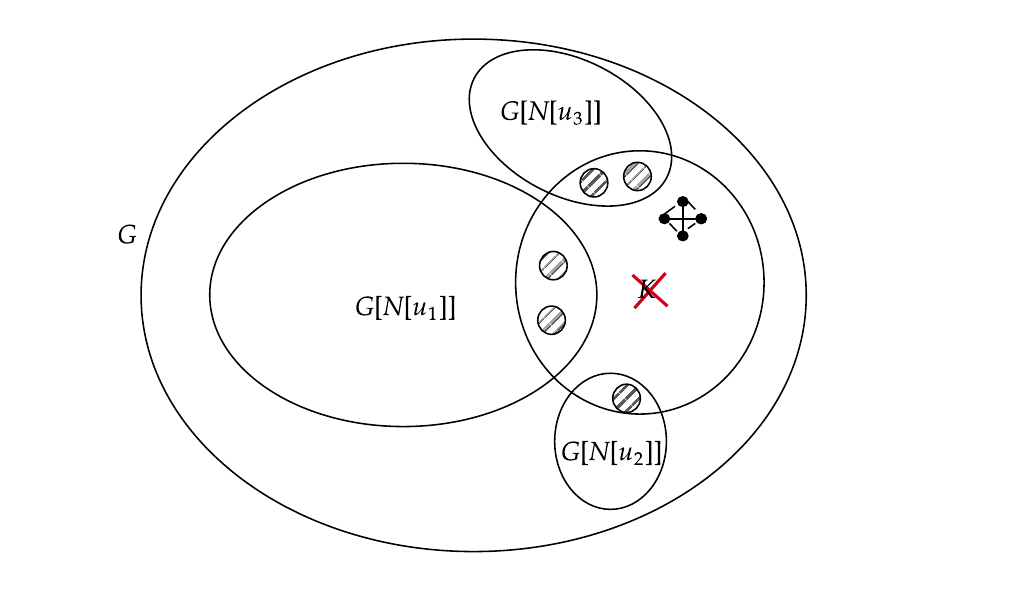}
    \caption{}
    \label{fig:AC-dismantle}
  \end{subfigure}
  \caption{An illustration of the handling of locally dense and sparse vertices. \Cref{fig:local-AC-check}: if the degree of vertex $v$ (in a local almost-clique $K$) is too high, it cannot be actually an almost-clique vertex; \Cref{fig:local-AC-not-form}: if $K$ contains too many vertices that are \emph{not} valid, we simply do not form the almost-clique; \Cref{fig:AC-lose-vertex}: since we test the sparsity for all vertices in $N[u]$, a vertex $v$ that belongs to another almost-clique $K$ could be recognized as sparse; \Cref{fig:AC-dismantle}: if too many vertices are removed from $K$ (and if $K$ is not reformed into another almost-clique), we make every vertex in $K$ sparse.}
\end{figure}
}

% \chen{Maybe add a paragraph saying the analysis is fairly involved which is our technical contribution}
% \chen{Fortunately, we could test whether a obtained sparse/dense cluster is indeed sparse or dense efficiently. Concretely, for sparse vertices, we could test with the sparsity with a sample of $\log{n}/\eps$ vertices. On the other hand, for newly-formed almost-cliques, we could test whether they are globally sparse simply by checking degree vs. local AC size. Finally, for other affected almost-cliques, we could keep track of how many vertices are ``lost'' during the process, and dismantle the almost-clique if it loses more than $\eps$ fraction of the vertices.}

\FloatBarrier

\subsection{Additional Works on Correlation Clustering}

In the sequential setting, the best approximation factor that can be achieved for correlation clustering is $1.43 + \varepsilon$ \cite{CaoCL0NV24} which improves on $(1.73+\varepsilon)$-approximation of \cite{Cohen-AddadLLNFOCS2023} and $(1.994 + \varepsilon)$-approximation of \cite{Cohen-AddadLNFOCS2022}. These approaches are based on solving a linear programming relaxation and rounding the solution. This is not very amenable to the dynamic setting nor to other settings like parallel, streaming, and sublinear algorithms. One exception is \cite{CaoHSSODA2024} which gives a $(2.4+\varepsilon)$-approximation algorithm that has an efficient parallel implementation.

More combinatorial algorithms include the pivot algorithm \cite{ailon2008aggregating} which gives a $3$-approximation in expectation, and algorithms based on sparse dense decompositions, which have been recently shown to give a $\approx 1.847$-approximation \cite{Cohen-AddadLPTYZSTOC2024}. As we saw earlier, these two approaches have been quite successfully adapted to the dynamic setting \cite{BehnezhadCMT23,Cohen-AddadLMP24}. Additionally they can be implemented in the parallel, streaming, and sublinear settings as well \cite{Cohen-AddadLMNP21, Assadi022, BehnezhadCMTFOCS22, ChakrabartyM23, CambusKLPU24, BehnezhadCMTSODA23}.

The streaming setting is of particular interest since the input is accessed as a stream of edge insertions and deletions, similar to the dynamic setting. One difference, which makes the setting easier, is that in streaming the solution needs to be computed only at the end (in the dynamic setting, one has to maintain a correct solution after each update). Another difference, which makes the setting harder, is that we want the algorithm to use near-linear memory, so we cannot have arbitrary access to the entire graph (in the dynamic setting, the entire input is in memory). The streaming algorithms for correlation clustering work in a setting where only the \emph{order} of the edge insertions and deletions is determined by an adaptive adversary. To the best of our knowledge, there is no work on adversarially robust correlation clustering in the streaming setting, where the adversary can also change the edge insertions and deletions based on the current solution computed by the algorithm.

\paragraph{Concurrent and independent work on dynamic sparse-dense decomposition.}
 Recently, we have been aware that \cite{BehnezhadRWSODA25} independently and concurrently developed a dynamic sparse-dense decomposition that was used to solve the ($\Delta+1$)-coloring problem against adaptive adversaries.
 \cite{BehnezhadRWSODA25} also contains fast graph coloring techniques that are fairly involved and far beyond simply following the decomposition.
 On the other hand, we note that their sparse-dense decomposition algorithm works with the global maximum degree $\Delta$ and cannot be directly applied to correlation clustering as ours. Furthermore, our experiments on correlation clustering are entirely disjoint from their results. 
 Both our work and \cite{BehnezhadRWSODA25} demonstrate the wide range of applications for dynamic sparse-dense decomposition, which can be a technique of independent interest.

\paragraph{Open problems.} Our work settles the constant approximation fully dynamic algorithms for correlation clustering with the \emph{edge update} model. 
We believe the following problems can serve as interesting open directions to further explore.
\begin{itemize}
    \item \textbf{Fully vertex dynamic $O(1)$-approximation correlation clustering.} \cite{Cohen-AddadLMP24} gave an $O(1)$-approximation for vertex dynamic correlation clustering with $\polylog(n)$ update time; however, the deletion operations in their model only allow the adversary to uniformly at random sample a vertex for deletion. In their paper, they constructed an example showing that adversarial deletion will lead to $\Omega(n)$ amortized update time for their algorithm. Therefore, it is very interesting to ask whether we can obtain $O(1)$-approximation for correlation clustering with $\polylog(n)$ (or even $o(n)$) update time that supports \emph{adversarial deletions}.
    \item \textbf{Deterministic algorithms for efficient correlation clustering.} Similar to this work, the majority of work for linear and sublinear time algorithms for correlation clustering relies on randomization (see, e.g.,~\cite{Assadi022,AssadiSW23,Cohen-AddadLMP24,CaoCL0LNTVY025}). Therefore, it is an interesting question to ask whether we can obtain efficient $O(1)$-approximation \emph{deterministic} algorithms dynamic (or even offline) correlation clustering\footnote{In a previous version of this paper, \Cref{prop:sdd-alg} stated that the sparse-dense decomposition in \cite{Assadi022} can be solved in $O(m)$ time, which was incorrect. This mistake does not affect any conclusion of this paper (we never used this in our algorithms).}.
\end{itemize}

\section{Preliminaries}
\label{sec:prelim}
We introduce the basic notation and the notion of sparse-dense decomposition in this section.

\paragraph{Notation.} As standard, we denote a unweighted graph $G=(V,E)$ with $V$ as the set of the vertices and $E$ as the set of the edges. For any subset of vertices, $A \subseteq V$, we use $\bar{A}=V \setminus A$ to denote the complementary set of vertices in $G$. Let $\calA = \{A_{1},A_{2},\cdots\}$ be a family of collections of vertices. We say that $\calA$ is a \emph{partition} of $G$ if $i).$ $\cup_{i} A_i = V$, and $ii).$ for any pair $V_i, V_j \in \calA$, $V_i\cap V_j =\emptyset$.

For any vertex $v \in V$, we use $N(v)$ to denote the set of \emph{neighbors} of $v$, and $N[v]$ to denote the augmented neighborhood, i.e., $N[v]:=N(v)\cup \{v\}$. For a fixed set of vertices $U$, we use $N_U(v)$ to denote the set of neighbors of $v$ that are in $U$, i.e., $N_{U}(v):=N(v)\cap U$.

\paragraph{The sparse-dense decomposition.} 
We now introduce the \emph{sparse-dense decomposition} that gives a \emph{partition} of the graph. The vertices are divided into \emph{almost-cliques}, which roughly means the vertices that can be ``bundled together'', and \emph{sparse vertices}, which roughly means the vertices that cannot join any almost-cliques. The technique has a rich history in theoretical computer science, and we use the version of \cite{Assadi022} for the purpose of correlation clustering.

\begin{definition}[\cite{Assadi022}]
\label{def:sdd}
Given a graph $G=(V,E)$, an $\eps$-sparse-dense decomposition $\{\Vsparse, K_1, \ldots, K_k\}$, denoted as $\SDD_{G,\eps}$, is a partition of $G$ consisting of:
\begin{itemize}
\item \textbf{Sparse vertices} $\Vsparse$: There exists an absolute constant $\eta_0$ such that for any vertex $v\in \Vsparse$, either $N(v)=\emptyset$, or there are at least $\eta_0 \cdot \eps \cdot \deg{v}$ neighbors $u$ such that $\card{N(v) \sym N(u)} \geq \eta_0 \cdot \eps \cdot \max\set{\deg{(u)},\deg{(v)}}$. 
\item \textbf{Dense vertices partitioned into \underline{almost-cliques} $K_1,\ldots,K_k$}: For every $i \in [k]$, each $K_i$ has the following properties. Let 
 $\Delta(K_{i})$ be the maximum degree of the vertices in $K_i$ (the degree is counted in $G$), then: 
\begin{enumerate}[leftmargin=20pt, label=$\roman*)$.]
\item\label{dec:non-neighbors}  Every vertex $v\in K_{i}$ has at most $\eps\cdot \Delta(K_{i})$ \emph{non-neighbors} inside ${K_{i}}$;
\item\label{dec:neighbors} Every vertex $v\in K_{i}$ has at most $\eps\cdot \Delta(K_{i})$ \emph{neighbors} outside ${K_{i}}$;
\item\label{dec:size} Size of each $K_i$ satisfies $(1-\eps)\cdot \Delta(K_{i}) \leq |K_{i}| \leq (1+\eps)\cdot \Delta(K_{i})$.
\end{enumerate}
\end{itemize}
\end{definition}
When we refer to a sparse-dense decomposition $\SDD_{G,\eps}$, we mean a partition of $G$ with the labels on the vertices, i.e., for each vertex $u\in \SDD_{G,\eps}$, there is a corresponding label indicating whether it is a sparse vertex, and if not, which almost-clique it belongs to.

\paragraph{Algorithms for sparse-dense decomposition.} The main message of \cite{Assadi022} is a \emph{randomized} algorithm that computes a $\SDD_{G,\eps}$ in $O(n\polylog{n})$ time. Furthermore, in their proof of the existence of the sparse-dense decomposition, it is also implied that there exists a \emph{deterministic} algorithm that computes an $\eps$-sparse-dense decomposition in polynomial time. The formal statement is given as follows.
\begin{proposition}[\cite{Assadi022}]
\label{prop:sdd-alg}
For any input graph $G=(V,E)$ and every $\eps<\frac{1}{64}$, there always exists a sparse-dense decomposition of $G$. Moreover, there exist
\begin{enumerate}[label=\roman*).]
\item a deterministic algorithm that, given an unweighted graph $G=(V,E)$, computes an $\eps$-sparse-dense decomposition in polynomial time.
\item a randomized algorithm that, given an unweighted graph $G=(V,E)$, with high probability computes an $\eps$-sparse-dense decomposition in $O(n \log^2{n})$ time. The running time is deterministic, and the randomness is only over the success probability.
\end{enumerate}
\end{proposition}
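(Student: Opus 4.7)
The plan is to prove existence via an explicit construction based on a local ``friendship'' relation, and then observe that the same construction admits both a deterministic $\O{m}$ implementation and a randomized $\O{n\log^{2}n}$ implementation via neighborhood sampling. For an edge $(u,v)$, I would declare $u$ a \emph{friend} of $v$ if $|N(u)\sym N(v)|\leq \eta_{0}\eps\max\{\deg(u),\deg(v)\}$ for a sufficiently small absolute constant $\eta_{0}$. A vertex is then declared \emph{sparse} exactly when at least an $\eta_{0}\eps$-fraction of its neighbors fail the friendship test, matching the condition in \Cref{def:sdd}; all remaining (\emph{dense}) vertices have $(1-O(\eps))\deg(v)$ friends. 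A short calculation using the triangle inequality for symmetric differences shows that friendship is near-transitive on dense vertices: if $u$ and $w$ are both friends of a dense $v$, then $|N(u)\sym N(w)|$ is also $O(\eps)\max\{\deg(u),\deg(w)\}$. I would then greedily partition the dense vertices by repeatedly picking an unassigned dense leader $v$, declaring the set of friends of $v$ (together with $v$) a cluster $K$, and removing them; near-transitivity, together with the fact that every cluster member has neighborhood within $O(\eps)\deg(v)$ symmetric difference of $v$'s, yields properties \ref{dec:non-neighbors}--\ref{dec:size} of \Cref{def:sdd} essentially for free.

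For the deterministic $\O{m}$ algorithm, I would first compute all degrees in $O(m)$ time and then, for each vertex $v$, test friendship by iterating over $N(v)$ and, for each neighbor $u$, estimating $|N(u)\cap N(v)|$ via the standard marking trick (mark $N(v)$; scan $N(u)$; unmark). The naive cost $\sum_{v}\deg(v)^{2}$ may exceed $O(m)$, so two early-termination rules are essential: stop scanning $v$'s neighbors once either (a) $\Omega(\eps\deg(v))$ non-friends have been witnessed, in which case $v$ is certified sparse, or (b) a single friend $u$ has been located, in which case $v$ is attached to the almost-clique led by $u$ via a union-find representative. Case (a) costs $O(\deg(v))$ per vertex. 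For case (b), once a friend $u$ is found, the friendship definition forces $\deg(u)=(1\pm O(\eps))\deg(v)$, so the marking-trick scan of $N(u)$ also costs $O(\deg(v))$; summing over vertices yields $O(m)$.

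For the randomized $\O{n\log^{2}n}$ algorithm, I would sample $s_{1}=\O{\log n/\eps^{2}}$ neighbors from each vertex $v$, and for every sampled $u$ estimate $|N(u)\sym N(v)|/\max\{\deg(u),\deg(v)\}$ by drawing $s_{2}=\O{\log n/\eps^{2}}$ further random samples from $N(u)\cup N(v)$ and testing membership in the opposite neighborhood (the adjacency-list model plus standard edge-dictionary preprocessing supports this in $O(1)$ time). Chernoff--Hoeffding bounds plus a union bound over the $O(n\cdot s_{1})$ estimation tasks ensure all estimates are $(1\pm\eps/2)$-accurate with high probability. Each vertex is classified accordingly and, if dense, attached to a leader selected from its sampled friends, producing the decomposition in $\O{n\cdot s_{1}\cdot s_{2}}=\O{n\log^{2}n}$ time.

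The main obstacle is the deterministic $O(m)$ bound: without randomization, the naive ``compute the symmetric difference for every edge'' approach runs in $\Theta(\sum_{v}\deg(v)^{2})$ time, which can be $\Omega(nm)$ in dense graphs. The key insight for bypassing this is that exact friend counts are unnecessary---for dense vertices a single friend suffices to assign a leader, and for sparse vertices it suffices to witness $\Omega(\eps\deg(v))$ non-friends---so the early-termination scheme, combined with the fact that any pair of friends must have comparable degrees, amortizes the scan cost to $O(\deg(v))$ per vertex and $O(m)$ overall. The randomized side has no such obstacle once near-transitivity is established, because sampling already yields uniform per-vertex costs.
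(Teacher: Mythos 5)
This proposition is stated in the paper as a black-box citation to \cite{Assadi022}; the paper carries no proof of it, though its later discussion around \Cref{prop:sdd-local-alg} does reveal the shape of the randomized algorithm it has in mind. Your friendship-based construction (define $\eps$-friends via the symmetric-difference threshold, observe near-transitivity, greedily or component-wise cluster the dense vertices) is indeed the standard route to existence, and your randomized sketch is recognizable; but two of your runtime claims do not hold up as written.

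The serious gap is in the deterministic $O(m)$ bound. You claim ``Case (a) costs $O(\deg(v))$ per vertex,'' but each marking-trick friendship test of a neighbor $u$ that has degree comparable to $\deg(v)$ costs $\Theta(\deg(u))=\Theta(\deg(v))$, and to certify $v$ sparse under your early-termination rule you may have to run up to $\Theta(\eps\deg(v))$ such tests (degree comparison alone does not dispose of non-friends whose degree happens to be comparable to $\deg(v)$). That gives $\Theta(\eps\deg(v)^2)$ per vertex and $\Theta(\eps\sum_v\deg(v)^2)$ overall, which for a $d$-regular graph is $\Theta(\eps m d)$, not $O(m)$. The early-termination insight is the right instinct but does not by itself amortize to linear time; you would need an additional mechanism (e.g.\ global precomputation of common-neighbor counts, or a different certificate of sparsity) that you have not supplied. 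Separately, your randomized sketch uses only per-vertex samples of $O(\log n/\eps^2)$ neighbors; as the paper's own sketch of \Cref{prop:sdd-local-alg} indicates, that single sample type is used for \emph{detecting sparsity and estimating symmetric differences}, and a second sampling step (include each vertex $v$ with probability $\min\{C\log n/\deg(v),1\}$ and, if included, read its \emph{entire} neighborhood) is what actually lets one assemble the almost-cliques consistently. Your plan to ``attach each dense vertex to a leader selected from its sampled friends'' does not specify how different members of the same almost-clique agree on a leader, which is precisely the role of that second sample.
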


% \chen{not sure whether we need to write down the algorithm of \cite{Assadi022}.}

We could actually `relax' the notion of sparse-dense decomposition by allowing the ``sparse'' and ``dense'' vertices to have different parameters. In particular, we define the following notion of sparse and dense vertices in the same spirit of \cite{AssadiSW23}.
\begin{definition}[cf. \cite{Assadi022,AssadiSW23}]
\label{def:sparse-dense-vertices}
Given a graph $G=(V,E)$ and a sparse-dense decomposition $\SDD_{G, \eps}$ with the corresponding constant $\eta_0$, we define the $\eps'$-sparse vertices and $\eps$-dense vertices as follows.
\begin{itemize}
\item We say that a vertex $v$ is \emph{$\eps'$-sparse} if either $N(v)=\emptyset$, or there exists at least $\eta_0 \cdot \eps' \cdot \deg{v}$ neighbors $u$ such that $\card{N(v) \sym N(u)} \geq \eta_0 \cdot \eps' \cdot \max\set{\deg{(u)},\deg{(v)}}$. 
\item We say that a vertex $v$ is \emph{$\eps$-dense} if there exists an almost-clique $K \subseteq \SDD_{G, \eps}$ such that $v \in K$.  
\end{itemize}
\end{definition}
In other words, the sparse and dense vertices in \Cref{def:sparse-dense-vertices} are allowed to operate with different values of $\eps$. The relaxation gives us more power to deal with dense and sparse vertices, respectively. Furthermore, as long as both $\eps$ and $\eps'$ are constants, the sparse-dense decomposition still gives an $O(1)$ approximation. The formal statement is given as follows.
\begin{proposition}[cf. \cite{Assadi022,AssadiSW23}]
\label{prop:sdd-approx}
Let $G=(V, E^+\cup E^-)$ be a correlation clustering instance, and let $\SDD'$ be a sparse-dense decomposition on $G^+=(V, E^+)$ such that every vertex $v\in \Vsparse$ is $\eps'$-sparse and every almost-clique $K_i$ is $\eps$-dense. Then, for any constant $\eps$ and $\eps'$, if we
\begin{itemize}
\item cluster each almost-clique $K_i$ in the same cluster;
\item cluster each sparse vertex as a singleton cluster
\end{itemize}
to form a correlation clustering, we can get an $O(1)$-multiplicative approximation to the optimal correlation clustering cost.
\end{proposition}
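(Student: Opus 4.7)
The plan is to partition the disagreements produced by our clustering into three buckets and charge each bucket to disagreements of an optimal clustering $\calC^*$: $(A)$ cut positive edges incident to sparse-vertex singletons; $(B)$ non-edges inside each almost-clique $K_i$; and $(C)$ positive edges leaving each almost-clique. Since $\eps$ and $\eps'$ are constants, all hidden constants will depend only on $\eps, \eps', \eta_0$, which suffices for an $O(1)$-approximation.

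For bucket $(A)$, I would fix a sparse vertex $v$ with $\deg(v)>0$. By the $\eps'$-sparsity condition, $v$ has at least $\eta_0 \eps' \deg(v)$ ``bad'' neighbors $u$ with $|N(v)\sym N(u)| \geq \eta_0 \eps' \max\{\deg(u),\deg(v)\}$. For each such $u$: either $\calC^*$ separates $v$ and $u$, in which case the positive edge $(v,u)$ is itself a $\calC^*$-disagreement incident to $v$; or $\calC^*$ places them in a common cluster $C^*$, and then every $w \in N(v)\sym N(u)$ creates a $\calC^*$-disagreement inside the bad triangle $\{v,u,w\}$ (a non-edge inside $C^*$ if $w\in C^*$, otherwise a cut positive edge). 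A triangle-charging argument in the style of \cite{Assadi022} then yields $\Omega(\eps' \deg(v))$ distinct $\calC^*$-disagreements chargeable to $v$, with each $\calC^*$-disagreement being charged by only $O(1)$ sparse vertices, so $\sum_v \deg(v) = O_{\eps'}(1)\cdot\cost(\calC^*)$.

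For buckets $(B)$ and $(C)$, I would fix an almost-clique $K$ with $\Delta(K)=\Delta$ and $k=|K|=\Theta(\Delta)$, so the total contribution of $K$ to our cost is at most $\eps k\Delta/2 + \eps k\Delta = O(\eps k^2)$. The case analysis will split on how $\calC^*$ treats $K$. If some cluster $M$ of $\calC^*$ contains a $(1-\alpha)$-fraction of $K$ for a small constant $\alpha>0$, then $(i)$ most internal non-edges of $K$ we pay for also lie inside $M$ and so are $\calC^*$-disagreements, while the remaining $O(\alpha k^2)$ non-edges are absorbed because each of the $\alpha k$ vertices of $K\setminus M$ cuts $\Omega(k)$ internal positive edges of $K$ that $\calC^*$ pays for; and $(ii)$ for each outgoing positive edge $(u,v)$ with $u\in K\cap M$, either $v\notin M$ (charged directly to the cut edge) or $v\in M\setminus K$, in which case the structural maximality of $K$ forces $v$ to have $\Omega(\eps k)$ non-neighbors inside $K\cap M$, producing enough $\calC^*$-disagreements to cover the outgoing edges incident to such $v$. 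In the complementary case where every cluster of $\calC^*$ intersects $K$ in at most $(1-\alpha)k$ vertices, the induced partition of $K$ cuts $\Omega(k^2)$ internal positive edges of $K$, which alone dominates our $O(\eps k^2)$ cost.

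The main obstacle I expect is the bookkeeping to prevent double-charging $\calC^*$-disagreements, both across multiple sparse vertices that share a bad triangle and across outgoing positive edges whose two endpoints lie in distinct almost-cliques. For the sparse side, I would charge each $\calC^*$-disagreement to a single canonical sparse endpoint, exploiting that only $O(1)$ vertices can plausibly claim it. For the almost-clique side, I would use a tie-breaking convention so that an edge leaving $K$ into some other almost-clique $K'$ triggers the case analysis for at most one of $K,K'$. Once this accounting is in place, summing the three bucket bounds yields $\cost(\text{our clustering}) \leq O_{\eps,\eps'}(1)\cdot\cost(\calC^*)$, which is the desired $O(1)$-approximation.
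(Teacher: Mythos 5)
The paper gives no proof of this proposition --- it is imported from \cite{Assadi022,AssadiSW23} (the analysis originates in \cite{Cohen-AddadLMNP21}) --- so there is no internal proof to compare against. Your three-bucket charging plan has the right top-level shape and matches the cited proofs in spirit, but the sparse-vertex bucket contains a genuine gap. You correctly observe that for a sparse $v$ and a bad neighbor $u$ that $\calC^*$ places with $v$ in a common cluster $C^*$, every $w\in N(v)\sym N(u)$ forces a $\calC^*$-disagreement on $\{v,u,w\}$, yielding $\Omega(\eps'\deg(v))$ disagreements. You then assert that each such disagreement is claimed by only $O(1)$ sparse vertices and propose to fix the counting with a canonical-endpoint convention. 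That bound is false in general: a non-edge $(u,w)$ inside $C^*$ can be reached via the bad triangle $\{v,u,w\}$ for \emph{every} sparse $v\in C^*\cap N(u)\cap N(w)$ that has $u$ as a bad neighbor, and nothing in \Cref{def:sparse-dense-vertices} or in the structure of $\calC^*$ caps this by a constant --- in a dense cluster it can be $\Theta(\card{C^*})$ vertices. The cited arguments do not charge directly against integral $\calC^*$-disagreements; they charge fractionally against the bad-triangle LP lower bound (equivalently, they pack fractional bad triangles), normalising each vertex's contribution by the $\Theta(\eps'^2\deg(v)^2)$ bad triangles it participates in so that the total load on any single pair stays bounded. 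That normalisation is exactly the step your plan elides, and without it the conclusion $\sum_{v}\deg(v)=O_{\eps'}(1)\cdot\cost(\calC^*)$ does not follow.

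A secondary problem is in case (ii) of your almost-clique analysis: you invoke ``structural maximality of $K$'' to force any $v\in M\setminus K$ receiving an outgoing positive edge to have $\Omega(\eps k)$ non-neighbors inside $K\cap M$. \Cref{def:sdd} imposes no maximality. A vertex $v\in \Vsparse$, or one lying in a much larger almost-clique $K'$ with $\Delta(K')\gg k$, can have nearly all of $K$ as neighbors without violating any decomposition property, so the claimed batch of non-neighbors need not exist. The standard fix, used in the cited proofs, is to account each positive edge between distinct buckets exactly once --- at the bucket of one fixed endpoint --- and then bound the outgoing-edge cost of each almost-clique against the cut edges $\calC^*$ itself must pay, rather than deriving the bound from a nonexistent maximality property. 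With both of these repairs the argument goes through, but as written the proposal does not establish the proposition.
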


On a very high level, our dynamic algorithms maintain a sparse-dense decomposition that ensures the sparse vertices are at least ``somehow sparse'' and the almost-cliques are ``sufficiently dense''. The main challenge is to ensure these properties in an adversarial-robust manner.

% \section{Handling Adaptive Adversarial Edge Updates in \texorpdfstring{$\polylog{n}$}{polylog n} Time}
\section{Our Main Algorithm}
\label{sec:dynamic-alg-edge-update}
We present our main algorithm and the analysis in this section. We recall our main theorem statement as follows.

\ourmaintheorem*
% \begin{theorem}
% \label{thm:main-alg}
%     There is an adversarially robust algorithm that given a fully dynamic labeled complete graph $G=(V, E^+ \cup E^-)$ such that the edge labels are adaptively flipped, maintains an implicit representation of an $O(1)$-approximation to the optimal correlation clustering disagreement cost in $\polylog(n)$ amortized update time.
% \end{theorem}
The key aspects to prove are the approximation guarantees and the amortized update time, i.e,
\begin{enumerate}
\item $O(1)$-approximation at any time.
\item $\polylog{n}$ update time per insertion and deletion. 
\end{enumerate}

The presentation of our algorithm is as follows. We will first show some technical subroutines that are very helpful for the main algorithm in \Cref{sec:tech-subroutines}. Then, we present our main algorithm in \Cref{subsec:main-alg} before giving the analysis in \Cref{subsec:analysis}. For ease of presentation, we assume $\eta_0 = 1$ for the sparse-dense decomposition in \Cref{def:sdd} -- we can always rescale the parameter. We further use $\SDD_{G}$ to denote the sparse-dense decomposition we maintain for the global graph, and $\SDD_{U}$ for the local decomposition.

\subsection{Technical Subroutines}
\label{sec:tech-subroutines}
Before presenting our main algorithm, we present some intermediate algorithms as technical subroutines. Conceptually, these algorithms use known ideas and follow from the previous algorithms for sparse-dense decomposition. Nevertheless, we remark that they are important for the purpose of our main algorithm, and their analysis is fairly technical and involved.

\paragraph{An algorithm that merges vertices to almost-cliques.}
We start by introducing a subroutine that merges ``candidate'' vertices to almost-cliques.

\begin{Algorithm}
\label{alg:AC-dense-test}
\textbf{An algorithm that adds vertices to an almost-clique.}\\
\textbf{Input:} A graph $G=(V,E)$; an almost-clique $K$ from some sparse-dense decomposition $\SDD_{G,\eps}$.
\smallskip
\begin{enumerate}
\item Sample a set $D(v)$ of $\min\{100\cdot\frac{\log{n}}{\eps}, \deg(v)\}$ vertices from $K$ uniformly at random.
\item Let $N(D)$ be the set of \emph{neighbors} of $D$, i.e., $N(D)=\cup_{u\in D(v)}N(u)$. If $\card{N(D)}\geq 200\log{n}\cdot \card{K}$, terminate and return ``fail''.
\item Sample a set of vertices $T$ of $\min\{100\cdot\frac{\log{n}}{\eps}, \deg(v)\}$ vertices from $K$ uniformly at random.
\item For vertices $u \in N(D)$ in parallel, perform the following procedure:
\begin{enumerate}
\item If $u$ has at least $(1-2\eps)\cdot \card{T}$ neighbors in $T$ \emph{and} $(1-2\eps)\cdot \card{K}\leq \deg(u)\leq (1+2\eps)\cdot \card{K}$, add $u$ to $K$.
\item Otherwise, keep $u$ as its current assignment (sparse vertex or in another almost-clique). % \vihan{what if $u$ is in some other AC?}
\end{enumerate}
\item Return the updated decomposition of sparse vertices and almost-cliques.
\end{enumerate}
\end{Algorithm}

Roughly speaking, \Cref{alg:AC-dense-test} takes an almost-clique $K$, and adds vertices that were potentially ``missed'' from the almost-clique. The subroutine is important to keep dense vertices in almost-cliques: if $u$ induces an algorithm update, there might be a few vertices that belong to an induced almost-clique $K$ but \emph{not} a neighbor of $u$. The key observation here is that there could be only a \emph{small} number of such vertices, and they could be added to $K$ by \Cref{alg:AC-dense-test}. An illustration of the scenario and the guarantee of \Cref{alg:AC-dense-test} is shown as \Cref{fig:add-v-to-ac}.
\begin{figure}
    \centering\includegraphics[width=0.6\linewidth]{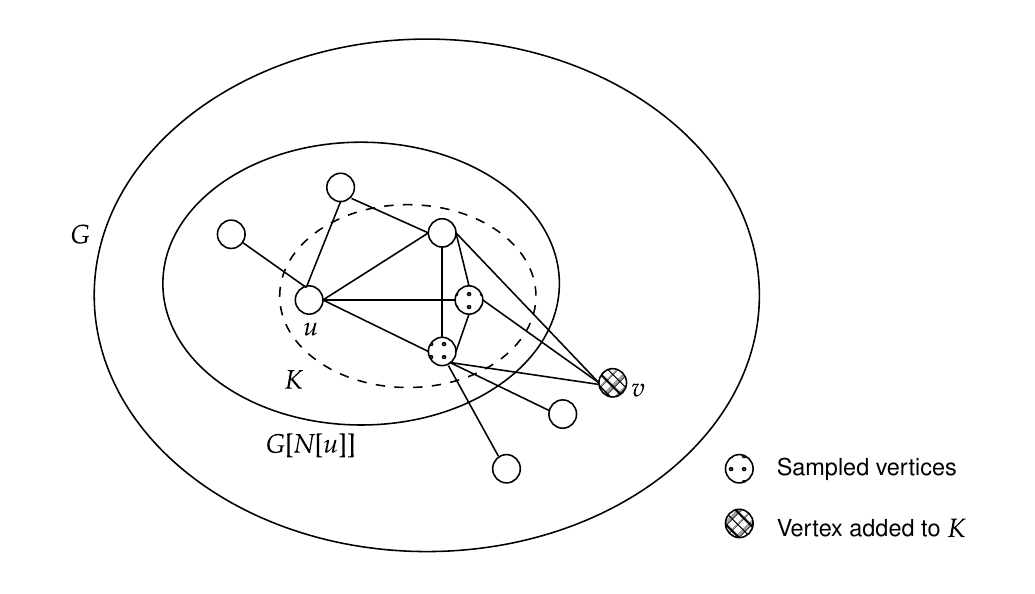}
  \caption{\label{fig:add-v-to-ac}An illustration of the role of \Cref{alg:AC-dense-test} and \Cref{lem:dense-merge-test}. The shaded vertex $v$ should be added to $K$; however, local updates cannot capture this due to the fact that $v$ is not a neighbor of $u$. \Cref{alg:AC-dense-test} samples the dotted vertices, and recognize their common neighbor, which are the vertices should be added to $K$.}
\end{figure}

The following technical lemma establishes the properties for \Cref{alg:AC-dense-test}. 
\begin{lemma}
\label{lem:dense-merge-test}
Let $K$ be an almost-clique that is at most $2\eps$-dense, and let $w$ be any vertex such that $(1-2\eps)\cdot \card{K}\leq \deg(w)\leq (1+2\eps)\cdot \card{K}$. The algorithm does \emph{not} return ``fail'', and with high probability, the following statements are true.
\begin{itemize}
\item If $\card{N(w)\cap K} \geq (1-\eps)\cdot \card{K}$, the vertex will be added to $K$.
\item If $\card{K \setminus N(w)} \geq 4\eps\cdot \card{K}$, the vertex will \emph{not} be added to $K$.
\end{itemize}
\end{lemma}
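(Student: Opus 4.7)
The proof splits naturally into three parts: (a) the algorithm does not output \texttt{fail}, (b) a qualifying $w$ is added, and (c) a disqualifying $w$ is not added. The plan is to use the structural bounds coming from $K$ being $2\eps$-dense together with standard Chernoff concentration for the two sampled sets $D$ and $T$. Throughout, let $s = \min\{100\log n/\eps,\,\deg(v)\}$ denote the common sample size, and note that by \cref{def:sdd} applied with parameter $2\eps$, every $x \in K$ has $\deg(x)\le |K|+2\eps\,\Delta(K)\le (1+O(\eps))\,|K|$.

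For (a) the non-fail bound, I would simply write $|N(D)|\le \sum_{d\in D}\deg(d) \le |D|\cdot(1+O(\eps))|K|$; plugging $|D|\le 100\log n/\eps$ gives $|N(D)|\le(1+O(\eps))\cdot \tfrac{100\log n}{\eps}\cdot|K| < 200\log n\cdot |K|$, which is a deterministic bound and does not require any probabilistic argument. This avoids needing to reason about $N(D)$ conditioned on anything, which is the place where a looser analysis could run into trouble.

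For (b), the two things that must hold for $w$ to be added are $w\in N(D)$ and $w$ passing the $T$-test (the degree condition on $w$ is handed to us by hypothesis). Since $|K\setminus N(w)|\le\eps |K|$, the probability that a single sample from $K$ misses $N(w)$ is at most $\eps$, so $\Pr[w\notin N(D)]\le \eps^{|D|}\le n^{-\omega(1)}$ by the choice $|D|=\Theta(\log n/\eps)$. Conditioned on $w\in N(D)$, the number $X$ of neighbors of $w$ in $T$ has mean $\mu\ge(1-\eps)|T|$ (sampling is from $K$ independently of $D$), and a standard multiplicative Chernoff bound with deviation $\delta=\Theta(\eps)$ gives
\[
\Pr[X<(1-2\eps)|T|]\;\le\;\exp\!\left(-\Omega(\eps^2\,|T|)\right)\;\le\;n^{-\Omega(1)},
\]
using $|T|=\Theta(\log n/\eps)$. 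A union bound over the two failure events then shows $w$ is added w.h.p.

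For (c), either $w\notin N(D)$, in which case $w$ is never considered and we are done; or $w\in N(D)$, in which case the hypothesis $|K\setminus N(w)|\ge 4\eps |K|$ makes the number of neighbors of $w$ in $T$ stochastically dominated by a binomial with mean $\le(1-4\eps)|T|$, so a Chernoff upper-tail bound with deviation $\Theta(\eps)$ yields $\Pr[X\ge(1-2\eps)|T|]\le \exp(-\Omega(\eps^2|T|))\le n^{-\Omega(1)}$, meaning the $T$-test rejects $w$ w.h.p. I expect the main, but mild, obstacle to be the bookkeeping around the independence of $D$ and $T$ and the fact that the test in step~4 is simultaneously applied to every $u\in N(D)$: to upgrade "w.h.p.\ for this particular $w$" to "w.h.p.\ for all candidates" one needs a union bound over $|N(D)|\le 200\log n\cdot|K|\le \mathrm{poly}(n)$ vertices, which can be absorbed by boosting the constant $100$ in $s$. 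Because the overall algorithm invokes \cref{alg:AC-dense-test} only polynomially many times, a second outer union bound preserves the high-probability guarantee across the entire run.
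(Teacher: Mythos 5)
Your proposal follows the same general strategy as the paper (Chernoff concentration on the two samples $D$ and $T$, then union bounds), but two of the concrete steps are incorrect and, as written, would not close the argument.

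First, the no-fail bound fails numerically. You write $|N(D)| \le |D|\cdot(1+O(\eps))|K| \le (1+O(\eps))\,\tfrac{100\log n}{\eps}\,|K| < 200\log n\cdot|K|$, but for any $\eps\le 1/2$ one has $\tfrac{100}{\eps}\ge 200$, and in the paper's regime $\eps\le 1/500$ your bound is roughly a factor $250$ larger than the threshold. The inequality you need requires using the $2\eps$-density of $K$: each $d\in D$ has at most $2\eps\,\Delta(K)\approx 2\eps|K|$ neighbors \emph{outside} $K$, so $|N(D)\setminus K|\le |D|\cdot 2\eps|K| \le \tfrac{100\log n}{\eps}\cdot 2\eps|K| = 200\log n\cdot|K|$, while $|N(D)\cap K|\le|K|$. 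The whole point is that only an $O(\eps)$-fraction of each $d$'s neighborhood contributes new vertices; replacing $\deg(d)$ by $(1+O(\eps))|K|$ throws away exactly that $\eps$ and breaks the computation.

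Second, your concentration bound $\exp(-\Omega(\eps^2|T|))$ with $|T|=\Theta(\log n/\eps)$ yields only $\exp(-\Omega(\eps\log n))=n^{-\Omega(\eps)}$, i.e., the constant in the polynomial is $\Theta(\eps)$. For $\eps=1/500$ that is $n^{-1/500}$-ish, which is nowhere near small enough to survive the union bound over $\mathrm{poly}(n)$ candidate vertices and invocations that you go on to invoke. Your proposed remedy of boosting the constant $100$ does not fix this: multiplying $|T|$ by a constant $C$ changes the exponent only to $\Theta(C\eps)$, still $\Theta(\eps)$; you would need $C=\Theta(1/\eps)$, which changes the algorithm. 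The paper sidesteps this by counting \emph{non-neighbors} of $w$ in $T$, whose expectation is $\Theta(\log n)$ (with no $1/\eps$), so a constant-factor multiplicative Chernoff deviation gives $\exp(-\Theta(\log n))=n^{-\Theta(1)}$ with an absolute constant that tolerates the union bounds. (Equivalently, a Bernstein-type bound that exploits the variance $\Theta(\eps|T|)=\Theta(\log n)$ of the count of neighbors in $T$ would do, but the vanilla multiplicative Chernoff you used does not.) Apart from these two issues, parts (b) and (c) of your proposal have the same structure as the paper's argument.
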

\begin{proof}
We first show that the algorithm does \emph{not} return ``fail''. Note that since $K$ is at most $2\eps$-dense, there are at most $2\eps\card{K}$ edges going out of $K$. As such, since we sample at most $\frac{100\log{n}}{\eps}$ vertices to form $D$, the total number of vertices in $N(D)$ is at most $2\eps\cdot \card{K}\cdot \frac{100\log{n}}{\eps} = 200\log{n}\cdot \card{K}$. This means the algorithm will \emph{not} return ``fail''.

We prove the statements in the two bullet points in order.
\begin{itemize}
\item For the first bullet, we first show that with high probability, the vertex $w$ will be added to $N(D)$ of \Cref{alg:AC-dense-test}. Since $\card{N(w)\cap K} \geq (1-\eps)\cdot \card{K}$, for a random vertex $v\in K$, we have that
\begin{align*}
\Pr\paren{w \not\in N(v)}\leq \eps.
\end{align*}
As such, the probability for $w$ not to be in \emph{any} $N(v)$ for $100\log{n}/\eps$ independent vertices is at most
\begin{align*}
\Pr\paren{w \not\in N(v) \text{ for all $v\in D(v)$}}\leq \eps^{100\log{n}/\eps} \leq \frac{1}{n^{20}}.
\end{align*}
Let us condition on the above high-probability event. Now that $w$ is sampled, we need to check that $w$ would be added to $K$. Indeed, during the sampling process of $T$, let $X_v$ be the indicator random variable for $v\not\in N(w)$, and let $X=\sum_{v\in T}X_v$ be the random variable for the \emph{total} number of non-neighbors of $w$ in $T$. We have that
\begin{align*}
\expect{X}&= \expect{\sum_{v\in T} X_v}\\
&= \sum_{v\in T}\expect{X_v} \tag{linearity of expectation}\\
&= \sum_{v \in T} \Pr\paren{v \not\in N(w)} \tag{by the definition of indicator random variable}\\
&\leq  100\cdot \frac{\log{n}}{\eps}\cdot \eps = 100 \cdot \log{n}.
\end{align*}
Note that for $\card{N(w)\cap T}\leq (1-2\eps)\card{T}$, there has to be at least $2\eps\cdot 100 \cdot \frac{\log{n}}{\eps} =200\log{n}$ vertices $v\in V$ such that $v\not\in N(w)$. As such, we have that
\begin{align*}
\Pr\paren{\text{$w$ not added to $K$}}&\leq \Pr\paren{X\geq 200\log{n}} = \Pr\paren{X\geq 2\cdot\expect{X}}\leq \frac{1}{n^{20}},
\end{align*}
where the last inequality is by an application of the Chernoff bound. Therefore, we could apply a union bound to the above events and conclude the proof.
\item For the second bullet, let us again define $X_v$ as the indicator random variable such that $v$ is a non-neighbor of $w$, and $X=\sum_{v\in T} X_v$ as the total number of non-neighbors in $T$. Since we have $\card{K\setminus N(w)}\geq 4\eps\cdot \card{K}$, we have that
\begin{align*}
\expect{X}&= \expect{\sum_{v\in T} X_v}\\
&= \sum_{v\in T}\expect{X_v} \tag{linearity of expectation}\\
&= \sum_{v \in T} \Pr\paren{v \not\in N(w)} \tag{by the definition of indicator random variable}\\
&\geq  100\cdot \frac{\log{n}}{\eps}\cdot 4\eps = 400 \cdot \log{n}.
\end{align*}
On the other hand, a necessary condition for $w$ to be added to $K$ is that $\card{N(w)\cap T}\geq (1-2\eps)\card{T}$, which means the non-neighbors of $w$ can be at most $2\eps\cdot 100\frac{\log{n}}{\eps} =200\log{n}$. As such, we can apply Chernoff bound again and show that the probability for $w$ to be added to $K$ is at most
\begin{align*}
\Pr\paren{\text{$w$ added to $K$}}&\leq \Pr\paren{X\leq 200\log{n}} = \Pr\paren{X\leq \frac{1}{2}\cdot \expect{X}} \leq \frac{1}{n^{20}},
\end{align*}
which is as desired by the second bullet of \Cref{lem:dense-merge-test}.
\end{itemize}
\end{proof}

\paragraph{An algorithm that creates sparse vertices.}
We now introduce a subroutine that tests a vertex (potentially in an almost-clique) is actually sparse.

\begin{Algorithm}
\label{alg:vertex-dense-test}
\textbf{An algorithm that splits $v$ from an almost-clique.}\\
\textbf{Input:} A graph $G=(V,E)$; a sparse-dense decomposition $\SDD_{G,\eps}$; a vertex $v\in K$ for some almost-clique $K$ in $\SDD_{G,\eps}$.
\smallskip
\begin{enumerate}
\item Sample a set $S(v)$ of $\min\{3000\cdot\frac{\log{n}}{\eps}, \deg(v)\}$ vertices from $N(v)$ uniformly at random.
\item Sample a set $D(v)$ of $\min\{3000\cdot\frac{\log{n}}{\eps}, \deg(v)\}$ vertices from $N(v)$ uniformly at random.
\item For each vertex $u \in D(v)$
\begin{enumerate}
\item Sample $\min\{3000\cdot\frac{\log{n}}{\eps}, \deg(u)\}$ vertices from $N(u)$ uniformly at random.
\item For each vertex $w\in S(u)\cup S(v)$, add $w$ to a set $\Delta(u,v)$ if $w\in N(u)\Delta N(v)$.
\item If $\card{\Delta(u,v)}\geq 41.5 \eps \cdot \card{S(u)\cup S(v)}$, then let $u$ be a \emph{sparse neighbor} of $v$.
\end{enumerate}
\item If $u$ has at least $40\eps \cdot \card{D(v)}$ vertices that are sparse neighbors, then return $v$ as a \emph{sparse vertex}.
\end{enumerate}
\end{Algorithm}

\Cref{alg:vertex-dense-test} serves as a ``counterpart'' of \Cref{alg:AC-dense-test}: it takes a graph partition (represented by a sparse-dense decomposition) and a vertex $v$ in on the of almost-cliques and decides whether to `peel off' the vertex from the almost clique. 
The following lemma characterizes the behavior of \Cref{alg:vertex-dense-test}. 
\begin{lemma}
\label{lem:sparse-split-test}
With high probability, \Cref{alg:vertex-dense-test} satisfies the following properties:
\begin{itemize}
\item If vertex $v$ is at least $42\eps$-sparse, then $v$ will be added to the set of sparse vertices.
\item If vertex $v$ is \emph{not} $38\eps$-sparse, then $v$ will \emph{not} be added to the set of sparse vertices.
\end{itemize}
\end{lemma}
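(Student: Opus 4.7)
The plan is to prove both bullets by a two-stage concentration argument---one for the outer sampling of $D(v)$ and one for the inner sampling of $S(u)$ and $S(v)$ carried out for each $u \in D(v)$---and then combine them via a union bound over the $O(m)$ random events involved, with $m := 3000 \log n/\eps$.

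For the inner stage, I would fix any $u \in D(v)$ and decompose $\Delta(u,v)$ as the disjoint union
\[
\Delta(u,v) \;=\; \bigl(S(v) \cap (N(v) \setminus N(u))\bigr) \;\sqcup\; \bigl(S(u) \cap (N(u) \setminus N(v))\bigr),
\]
using that $S(v) \subseteq N(v)$ and $S(u) \subseteq N(u)$. Each summand is a sum of Bernoulli indicators with expectation $m \cdot |N(v) \setminus N(u)|/\deg(v)$ and $m \cdot |N(u) \setminus N(v)|/\deg(u)$ respectively, and $|S(u) \cup S(v)| = |S(u)| + |S(v)| - |S(u) \cap S(v)|$ is controlled via a Bernoulli bound on the overlap $|S(u) \cap S(v)|$. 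Additive Hoeffding together with $m = 3000 \log n/\eps$ yields that each of these sums is within additive $\eps m$ of its expectation with probability at least $1 - n^{-20}$. Propagating these deviations through the ratio $|\Delta(u,v)|/|S(u) \cup S(v)|$ shows that the threshold $41.5\eps$ separates the regime $|N(u) \sym N(v)| \geq 42\eps \cdot \max\{\deg(u), \deg(v)\}$ (where $u$ is marked) from $|N(u) \sym N(v)| < 38\eps \cdot \max\{\deg(u), \deg(v)\}$ (where it is not). The corner cases $\deg(u) \leq m$ or $\deg(v) \leq m$ are handled exactly, since the corresponding sample then equals the entire neighborhood and the relevant estimator is deterministic.

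For the outer stage, I would condition on inner correctness for every $u \in D(v)$ (union bound over the $m$ inner tests). The indicator that $u$ is marked then coincides with the true witness-neighbor indicator for every $u$ outside the ``gray zone'' $|N(u) \sym N(v)|/\max\{\deg(u), \deg(v)\} \in [38\eps, 42\eps]$. In the $42\eps$-sparse direction, at least $42\eps \cdot \deg(v)$ neighbors of $v$ are $42\eps$-witnesses, so the expected number of marks among $D(v)$ is at least $42\eps m$; a second application of Hoeffding gives at least $41\eps m > 40\eps m$ marks with high probability, and $v$ is declared sparse. Symmetrically, if $v$ is not $38\eps$-sparse, fewer than $38\eps \cdot \deg(v)$ neighbors can be marked, the expected count is strictly less than $38\eps m$, and Hoeffding yields fewer than $39\eps m < 40\eps m$ marks, so $v$ is not declared sparse. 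A final union bound over the $O(m)$ Hoeffding events of the two stages closes the proof.

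The main obstacle lies in the inner stage: the expected ratio $|\Delta(u,v)|/|S(u) \cup S(v)|$ is not literally $|N(u) \sym N(v)|/\max\{\deg(u), \deg(v)\}$, because the numerator mixes samples taken at two possibly very different rates $m/\deg(u)$ and $m/\deg(v)$, while the denominator is a correlated union whose size depends on $|N(u) \cap N(v)|$. Verifying that the expected ratio is monotone in $|N(u) \sym N(v)|/\max\{\deg(u), \deg(v)\}$, and that the specific constants---the inner threshold $41.5\eps$, the outer threshold $40\eps$, and the $[38\eps, 42\eps]$ gap---jointly absorb both the $O(\eps)$ Hoeffding additive error and this expectation offset, is where most of the bookkeeping sits. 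Once these constants are verified, the two Hoeffding layers and the union bound assemble the two bullets of the lemma.
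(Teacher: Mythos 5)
Your overall architecture matches the paper's: a two-stage concentration argument (inner test for each $u \in D(v)$, outer count over $D(v)$) followed by a union bound. Using Hoeffding in place of the paper's multiplicative Chernoff is a cosmetic difference. However, the step you defer to ``bookkeeping'' is actually the one place where the argument is not routine, and you have not supplied it.

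The issue is the second bullet (the ``not $38\eps$-sparse'' direction). The expected inner count is not controlled by a plain monotonicity statement in $\card{N(u)\sym N(v)}/\max\{\deg(u),\deg(v)\}$; rather, writing $\alpha = \card{N(v)\setminus N(u)}$, $\beta = \card{N(u)\setminus N(v)}$, and assuming $\deg(u) \geq \deg(v)$, the per-sample expectation is $\alpha/\deg(v) + \beta/\deg(u)$, which can be driven arbitrarily far above $(\alpha+\beta)/\deg(u)$ if $\deg(v) \ll \deg(u)$. So without an additional structural fact, the inner threshold $41.5\eps$ could be crossed even when $\card{N(u)\sym N(v)} < 38\eps\cdot\deg(u)$, and your ``gray zone'' argument would admit false marks. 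The paper closes this gap with a short contradiction: if $\deg(u) = (1+\delta)\deg(v)$ with $\delta \geq 40\eps$, then $\card{N(u)\sym N(v)} \geq \deg(u) - \deg(v) = \tfrac{\delta}{1+\delta}\deg(u) \geq \tfrac{40\eps}{1+40\eps}\deg(u) \geq 38\eps\deg(u)$ for $\eps < 1/2000$, contradicting non-sparsity; hence $\deg(u) < (1+40\eps)\deg(v)$, which then lets one bound $\alpha/\deg(v)+\beta/\deg(u) \leq (\alpha+\beta)/\deg(v) < 38\eps(1+40\eps) \leq 41\eps$. This degree-ratio bound is the load-bearing observation, not a constant-chasing detail, and it is missing from your proposal. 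Once you insert it, the rest of your plan (inner Hoeffding, outer Hoeffding, union bound) assembles the lemma as the paper does.

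One further caution: you note, correctly, that the denominator $\card{S(u)\cup S(v)}$ is a correlated union whose size depends on $\card{N(u)\cap N(v)}$; if you intend to reason about the ratio $\card{\Delta(u,v)}/\card{S(u)\cup S(v)}$ directly you will need to control this overlap, whereas the paper effectively compares the raw count $\card{\Delta(u,v)}$ to a fixed threshold proportional to the sample size, sidestepping the dependence.
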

\begin{proof}
Similar to the proof of \Cref{lem:dense-merge-test}, the proof of this lemma is another application of Chernoff bound. For the convenience of notation, for the rest of the proof, we let $\alpha = \card{N(v)\setminus N(u)}$ and $\beta = \card{N(u)\setminus N(v)}$. We also assume w.log. that $\deg(u)\geq \deg(v)$. We also assume that $\deg(v)\geq 3000\cdot \frac{\log{n}}{\eps}$ and $\deg(u)\geq 3000\cdot \frac{\log{n}}{\eps}$, since otherwise we can deterministically check the definition.

Note that by these definitions, we have that 
\begin{align*}
& \Pr(x\in N(v)\setminus N(u))=\frac{\alpha}{\deg(v)}\, \, \text{ for a random $x$ in $N(v)$}; \\
& \Pr(y\in N(u)\setminus N(v))=\frac{\beta}{\deg(u)} \, \, \text{ for a random $y$ in $N(u)$}.
\end{align*}
We first show that the estimator of $\card{\Delta(u, v)}$ is a good estimator for $\card{N(u)\sym N(v)}$ as follows.
\begin{enumerate}[label=(\roman*).]
\item If $\card{N(u)\sym N(v)}\geq 42\eps \cdot \max\{\deg(u), \deg(v)\}$, then with high probability, $\card{\Delta(u, v)}\geq 40\eps\cdot \max\{\card{S(u)}, \card{S(s)}\}$.  
To see this, we let $X_i$ be a random variable which is $1$ when the $i^{th}$ sample in $S(v)$ is in $N(v)\setminus N(u)$.
Let $Y_i$ be a random variable which is $1$ when the $i^{th}$ sample in $S(u)$ is in $N(u)\setminus N(v)$.
Let $Z_i = X_i+Y_i$.
The expected value of $Z_i$ is the following:
\begin{align*}
\expect{Z_i} &= \expect{X_i} + \expect{Y_i} \\
& = \Pr(X_i=1) + \Pr(Y_i=1)\\
& = \frac{\alpha}{\deg(v)} + \frac{\beta}{\deg(u)}\\
& \geq \frac{\alpha+\beta}{\deg(u)}, \tag{we assume w.log. that $\deg(u)\geq \deg(v)$}
\end{align*}
 Since we have $\alpha+\beta \geq 42\eps \max\{\deg(u),\deg(v)\}$, we have that $\expect{Z_i} \geq 42\eps$. Therefore, if we define $Z=\sum_i Z_i$, we have that 
\begin{align*}
\expect{Z} & = \sum_{i=1}^{(3000 \log n)/\eps} \expect{Z_i} \\
& \geq \frac{3000\log{n}}{\eps}\cdot 42\eps = 126000\log{n}.
\end{align*}
Since $Z$ is a summation of independent indicator random variables, we have that 
\begin{align*}
\Pr\paren{Z\leq 41.5\cdot 3000 \log n}& \leq \Pr\paren{Z\leq (1-1/84)\cdot \expect{Z}}\\
& \leq \exp\paren{-\frac{(1/84)^2\cdot 126000\log{n}}{3}} \leq \frac{1}{n^{5}},
\end{align*}
and the second-last inequality is due to the multiplicative Chernoff bound.
\item If $\card{N(u)\sym N(v)}< 38\eps \cdot \max\{\deg(u), \deg(v)\}$, then with high probability, $\card{\Delta(u, v)}< 40\eps\cdot \max\{\card{S(u)}, \card{S(s)}\}$. We use the same random variables $X_i,Y_i,Z_i$ as the previous case and have that
\begin{align*}
\expect{Z_i} & = \frac{\alpha}{\deg(v)} + \frac{\beta}{\deg(u)}\\
& \leq \frac{\alpha+\beta}{\deg(v)}. \tag{we assume w.log. that $\deg(u)\geq \deg(v)$}
\end{align*}

Note that in this case, we should have $\deg(u)< (1+40\eps)\cdot \deg(v)$, otherwise we get a contradiction to $\card{N(u)\sym N(v)}< 38\eps \cdot \max\{\deg(u), \deg(v)\} = 38\eps \cdot \deg(u)$.

Assume towards a contradiction $\deg(u)= (1+\delta)\cdot \deg(v)$ where $\delta\geq 40\eps$. This implies that
\begin{align*}
\card{N(u)\sym N(v)}\geq \deg(u)- \deg(v) = \delta \deg(v) = \frac{\delta}{1+\delta} \deg(u) \geq \frac{40\eps}{1+40 \eps} \deg(u),
\end{align*}
since $\frac{\delta}{1+\delta}$ is an increasing function when $\delta\geq 0$.
Using $\eps<1/2000$ we get $\card{N(u)\sym N(v)} \geq 38\eps \cdot \deg(u)$ giving us a contradiction.
Therefore, we could further upper-bound the probability as
\begin{align*}
\expect{Z_i} \leq \frac{\alpha+\beta}{\deg(v)} < \frac{38 \eps \cdot \deg(u)}{\deg(v)} \leq 38\eps \cdot (1+40\eps) \leq 41\eps,
\end{align*}
where the last inequality holds for $\eps<1/2000$.
Similar to the previous case, we define $Z=\sum_{i} Z_i$ and have that 
\begin{align*}
\expect{Z} & = \sum_{i=1}^{(3000 \log n)/\eps} \expect{Z_i} \\
& \leq \frac{3000\log{n}}{\eps}\cdot 41\eps = 123000\log{n}.
\end{align*}
Again, since $Z$ is a summation of independent indicator random variables, we have that 
\begin{align*}
\Pr\paren{Z\geq 41.5\cdot 3000 \log n}& \leq \Pr\paren{Z\leq (1+1/82)\cdot \expect{Z}}\\
& \leq \exp\paren{-\frac{(1/82)^2\cdot 123000\log{n}}{3}} \leq \frac{1}{n^{5}},
\end{align*}
and the second-last inequality is due to the multiplicative Chernoff bound.
\end{enumerate}

We now prove the two bullet points in the statement of \Cref{lem:sparse-split-test} in order.
\begin{itemize}
\item For the first bullet, we note that if the vertex $v$ is at least $42\eps$-sparse then at least $42\eps$ fraction of $v$'s neighbors $v_i$ satisfy $\card{N(v_i)\sym N(v)}\geq 42\eps \cdot \max\{\deg(v_i), \deg(v)\}$ which implies that $v_i$ is a ``sparse neighbor'' of $v$ (we already showed this above).
Consider a vertex $u \in D(v)$. The probability that $u$ is a sparse neighbor of $v$ is at least $42\eps$. Thus out of $\card{D(v)}$ samples we expect at least $42 \eps \card{D(v)}$ samples to be sparse neighbors. Let $X$ be the random variable denoting the number of sparse neighbors. 
% The probability for us to sample a vertex $u\in N(v)$ such that $\card{N(u)\sym N(v)}\geq 42\eps \cdot \max\{\deg(u), \deg(v)\}$ is at least $42\eps$. Let this set of vertices be $Z(v)$, and let $X_i$ be the indicator random variable for a sampled vertices from $N(v)$ that is also in $D(v)$. We have $\Pr(u \in Z(v))\geq 42 \eps$. Therefore, if we sample $300\cdot \frac{\log{n}}{\eps}$ neighbors, we can define $X=\sum_{i\in D(v)}X_i$ as the total number of vertices in $Z(v)$, and we have that
% \begin{align*}
% \expect{X} \geq 300\cdot \frac{\log{n}}{\eps} \cdot 42\eps = 12600\log{n}.
% \end{align*}
Since $X$ is a summation of independent random variables, we can apply the Chernoff bound to obtain that
\begin{align*}
\Pr\paren{X\leq 40\eps \cdot \card{D(v)}} &\leq \exp\paren{-\frac{(1/21)^2 \cdot 42 \cdot 3000 \cdot \log{n}}{3}}\leq \frac{1}{n^{10}}.
\end{align*}
% Furthermore, by our calculations in $(i)$ of the earlier part of the proof, we know that with high probability, \Cref{alg:vertex-dense-test} will identify the vertex as sparse. 
As such, we reach the desired conclusion of the first bullet.

\item For the second bullet, we note that if the vertex $v$ is not $38\eps$-sparse then at most $38\eps$ fraction of $v$'s neighbors $v_i$ satisfy $\card{N(v_i)\sym N(v)}\geq 38\eps \cdot \max\{\deg(v_i), \deg(v)\}$ which implies that $v_i$ is a ``sparse neighbor'' of $v$ (we already showed this above).
Consider a vertex $u \in D(v)$. The probability that $u$ is a sparse neighbor of $v$ is at most $38\eps$. Thus out of $\card{D(v)}$ samples we expect at most $38 \eps \card{D(v)}$ samples to be sparse neighbors. Let $X$ be the random variable denoting the number of sparse neighbors. 
% \item For the second bullet, we let $Z(v)$ be the set such that $\card{N(u)\sym N(v)}\geq 38\eps \cdot \max\{\deg(u), \deg(v)\}$ for $u\in Z(v)$. Again, let us denote $X_i$ as the indicator random variable for a sampled vertices from $N(v)$ that is also in $D(v)$. We have $\Pr(u \in Z(v))\leq 38 \eps$. Therefore, if we sample $300\cdot \frac{\log{n}}{\eps}$ neighbors and gain define $X=\sum_{i\in D(v)}X_i$ as the total number of vertices in $Z(v)$, we will have
% \begin{align*}
% \expect{X} \leq 300\cdot \frac{\log{n}}{\eps} \cdot 38\eps = 11400\log{n}.
% \end{align*}
% If we have $X<11400\log{n}$, then we already satisfy the condition and $v$ will \emph{not} be identified as a sparse vertex. As such, we assume w.log. that $X= 11400\log{n}$.
Since $X$ is a summation of independent random variables, we can apply the Chernoff bound to obtain that
\begin{align*}
\Pr\paren{X\geq 40\eps \cdot \card{D(v)}} &\leq \exp\paren{-\frac{(1/20)^2 \cdot 38 \cdot 3000 \cdot \log{n}}{3}}\leq \frac{1}{n^{10}}.
\end{align*}
% By the calculation of $(ii)$ in the earlier part of the proof, if $u\not\in D(v)$, then with high probability, it will \emph{not} have $\card{\Delta(u, v)}\geq 41.5 \eps \cdot \max\{\card{S(u)}, \card{S(v)}\}$. Therefore, we will \emph{not} have $40\eps\cdot \card{D(v)}$ vertices that satisfy the condition, which implies that $v$ will \emph{not} be identified as a sparse vertex by \Cref{alg:vertex-dense-test}, as desired.
Thus, we reach the desired conclusion of the second bullet.
\end{itemize}
\myqed{\Cref{lem:sparse-split-test}}
\end{proof}

\paragraph{A sparse-dense decomposition on induced subgraphs.} We now discuss performing sparse-dense decomposition on \emph{induced subgraphs} of a given set of vertices $U$. Essentially, the algorithm follows from \Cref{prop:sdd-alg} (\cite{Assadi022}) with a slight modification. The main proposition we use is as follows.
\begin{proposition}
\label{prop:sdd-local-alg}
Let $G=(V,E)$ be an input graph, and let $u\in V$ be a fixed vertex. Furthermore, let $U=N(u)\cup \{u\}$ be the augmented neighborhood of $u$. Then, for every $\eps<\frac{1}{64}$, there exists a randomized algorithm that computes an $\eps$-sparse-dense decomposition on the induced subgraph $G[U]$ in $O(\frac{\card{U} \cdot \log^2{n}}{\eps^2})$ time such that
\begin{enumerate}
\item\label{line:sdd-local-sdd-prop} Every almost-clique satisfies the properties as in \Cref{def:sdd} of $G[U]$.
\item\label{line:sdd-local-u-prop} If vertex $u$ is \emph{not} at least $\eta_{0}\cdot \eps$-sparse (in $G$), then $u$ belongs to an almost-clique of the output.
\end{enumerate}
\end{proposition}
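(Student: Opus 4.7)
The plan is to reduce to \Cref{prop:sdd-alg} applied to the induced subgraph $G[U]$, together with a short sparsity-transfer argument to establish the special property for $u$. The key observation is that any algorithm producing a correct $\eps$-sparse-dense decomposition of $G[U]$ will automatically place $u$ into an almost-clique whenever $u$ is not $\eta_0 \eps$-sparse in $G$, so the task reduces to implementing Assadi's algorithm efficiently on the induced subgraph and then carrying the sparsity bound from $G$ down to $G[U]$.

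First, I would run the randomized algorithm of \Cref{prop:sdd-alg} on $G[U]$, using the global adjacency list of $G$ to simulate access to $G[U]$. Assadi's algorithm only needs, per vertex $v \in U$, the induced degree $\deg_{G[U]}(v) = |N(v) \cap U|$ and $O(\log n / \eps)$ uniform random samples from $N(v) \cap U$. These primitives can be implemented by rejection sampling against a hash set encoding $U$: draw a random $G$-neighbor of $v$ (which costs $O(1)$ by the adjacency list model) and keep it if it lies in $U$. Vertices whose induced degree is too small to ever lie in an almost-clique of $G[U]$ are declared sparse up front, which removes any pathological case where rejection sampling would be expensive; for the remaining vertices the per-sample rejection overhead is $O(1)$. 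A standard accounting then gives total running time $O(|U|\log^2 n / \eps^2)$, and Property (i) of the proposition follows immediately from the correctness of \Cref{prop:sdd-alg} applied to the graph $G[U]$.

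For Property (ii) I would use the crucial containment $N(u) \subseteq U$, which gives $\deg_{G[U]}(u) = \deg_G(u)$ and $N_{G[U]}(u) = N(u)$. Suppose $u$ is not $\eta_0 \eps$-sparse in $G$: then for all but at most $\eta_0 \eps \deg(u)$ neighbors $v$ we have $|N(u) \sym N(v)| < \eta_0 \eps \max\{\deg(u), \deg(v)\}$. For each such $v$, I would verify that
\[
|N_{G[U]}(u) \sym N_{G[U]}(v)| \;\leq\; |N(u) \sym N(v)|,
\]
using $N(u)\subseteq U$ to conclude $N(u)\setminus(N(v)\cap U) = N(u)\setminus N(v)$ and $(N(v)\cap U)\setminus N(u) \subseteq N(v)\setminus N(u)$. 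A small symmetric difference also forces $|\deg(u)-\deg(v)| \leq \eta_0 \eps \max\{\deg(u),\deg(v)\}$, so $\deg_{G[U]}(v) \leq \deg(v)$ lies within a constant factor of $\deg_{G[U]}(u)=\deg(u)$; this lets me convert the global sparsity bound into a local one at parameter $O(\eps)$. Consequently $u$ is not sparse in $G[U]$ either, and by the definition of a sparse-dense decomposition $u$ must belong to some almost-clique $K_i$ of the output. (A slight rescaling of $\eps$ absorbs the constant loss.)

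The main obstacle I expect is the efficient implementation of the induced-graph sampling in the required $O(|U|\log^2 n / \eps^2)$ budget: without some care it is not clear how to avoid enumerating $G[U]$ explicitly, which would cost $\Theta(|U|^2)$ in the worst case. The trick of rejection sampling against a hash set for $U$, combined with declaring very-low-induced-degree vertices sparse up front, is what keeps the work near-linear in $|U|$. By contrast, the sparsity-transfer argument for $u$ is essentially a one-line consequence of $N(u)\subseteq U$ and requires no modification of the underlying decomposition algorithm.
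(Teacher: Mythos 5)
Your high-level plan — simulate the randomized SDD algorithm of Proposition~\ref{prop:sdd-alg} on $G[U]$ via the adjacency list of $G$, then transfer the sparsity bound for $u$ using $N(u)\subseteq U$ — is the same as the paper's, and your argument for Property~(2) is sound. But the efficiency argument has a genuine gap. You declare a vertex $v\in U$ sparse up front when its \emph{induced degree} is ``too small to ever lie in an almost-clique of $G[U]$,'' and claim that after this filter the rejection overhead is $O(1)$ per sample. That is false: the overhead for $v$ is $\deg_G(v)/\deg_{G[U]}(v)$, which is not controlled by $\deg_{G[U]}(v)$ alone. A vertex $v\in N(u)$ can have $\deg_{G[U]}(v)=|U|-1$ (so it survives your filter and indeed belongs in an almost-clique of $G[U]$) while $\deg_G(v)=\Theta(n)$, giving overhead $\Theta(n/|U|)$ per accepted sample; with $\Theta(|U|)$ such vertices the total cost is $\Theta(n\log^2 n/\eps^2)$, not $O(|U|\log^2 n/\eps^2)$. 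The quantity that actually controls the overhead is the \emph{fraction} of $v$'s $G$-neighbors that land in $U$, and this is what the paper filters on: it samples $O(\log n)$ $G$-neighbors of each $v\in U$, keeps $v$ in a candidate set $\widetilde U$ only if a $2/3$-fraction land in $U$, and declares $v\in U\setminus\widetilde U$ sparse. For $v\in\widetilde U$ the overhead is $O(1)$, and filtered vertices cannot belong to an almost-clique that contains $u$, which is all the proposition needs.

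A second, related omission: you are working with an incomplete description of Assadi's algorithm. Besides the $O(\log n/\eps^2)$ neighbor samples per vertex, the algorithm also selects each vertex $v$ with probability $\min\{C\log n/\deg_{G[U]}(v),\,1\}$ and, if selected, reads all of $v$'s induced neighbors; these second-type samples are what actually build the almost-cliques. Simulating them requires (an approximation of) $\deg_{G[U]}(v)$, which is not available in $O(1)$ time and whose exact computation would cost $\Theta(\deg_G(v))$. The paper handles this by substituting $\deg_G(v)$ into the sampling probability, which is valid precisely because for $v\in\widetilde U$ the two degrees agree up to a constant factor. Your proposal does not address these samples at all and tacitly assumes $\deg_{G[U]}(v)$ is known — a gap that, like the rejection-sampling issue, is resolved only once you switch from your induced-degree filter to the fraction-based filter $\widetilde U$.
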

Note that some sparse vertices obtained by \Cref{prop:sdd-local-alg} might not satisfy the property as prescribed by \Cref{def:sdd}. \Cref{prop:sdd-local-alg} only guarantees that if $u$ is dense, it will form an almost-clique -- we will eventually show that the guarantee is sufficient for our main algorithm.

% \chen{Make changes here}
Proving \Cref{prop:sdd-local-alg} from scratch would involve an unnecessary repetition of all the steps in \cite{Assadi022}. As such, we provide a discussion on what needs to be changed for the algorithm of \Cref{def:sdd}. Essentially, for a graph $G=(V,E)$ with $n$ vertices, the sparse-dense decomposition algorithm computes the output based on the following \emph{samples}.
\begin{itemize}
\item For each vertex $v\in V$, sample $O(\log{n}/\eps^2)$ neighbors.
\item For each vertex $v\in V$, sample the vertex with probability $\min\{C\cdot \frac{\log{n}}{\deg(v)}, 1\}$ for some constant $C$; if $v$ is sampled, take all the neighbors of $v$.
\end{itemize}
Here, the samples in the first bullet for each vertex are used for \emph{identifying sparse vertices} and \emph{testing the symmetric difference of the neighborhoods for any pair of vertices $(u,v)$}.
In contrast, the samples in the second bullet are used to \emph{form almost-cliques} only, i.e., if $v$ is at least $\eta_0 \cdot \eps$-sparse, we do \emph{not} need to sample $v$ for the second bullet.
We need to show how to simulate both of the sampled sets in the induced subgraph and discuss how to implement the algorithm based on the sampled sets.

We first identify a set of vertices $\Utilde$ that has at least $1/2$ fraction of their degree in the induced subgraph $G[U]$. 
To obtain such a set of vertices, we can sample $O(\log n)$ neighbors for each vertex $v\in U$, and add $v$ to $\Utilde$ if a $2/3$ fraction lies in $G[U]$.
For all vertices in $\Utilde$, we can simulate the first bullet by picking a constant factor of more samples and then arguing that a constant fraction of those lie in $G[U]$. 
For all vertices in $U\setminus \Utilde$, we simply let them be \emph{sparse vertices}. We observe that by checking the property of \Cref{def:sdd}, if $u$ is contained in an almost-clique $K$, then $v\in U\setminus \Utilde$ cannot belong to $K$.

We now discuss how to obtain the samples of the second bullet for vertices in $\Utilde$. We would face the challenge that we do \emph{not} know the degree for every vertex $v$ in the induced subgraph $G[U]$. Nevertheless, observe that if we replace $\deg_{G[U]}(v)$ in the sampling probability with $\widetilde{\deg}_{G[U]}(v)$ which is a constant approximation of $\deg_{G[U]}(v)$, the algorithm is asymptotically the same. Since we care only about vertices with a constant fraction of their degree in $G[U]$, we can use the degree $\deg(v)$ of vertex $v$ in $G$ in the sampling probability instead of $\deg_{G[U]}(v)$ and increase the constant $C$ appropriately so that the algorithm remains the same.
% \begin{itemize}
% \item If we sample with $\widetilde{\deg}(v)$ which is a constant approximation of $\deg(v)$, the algorithm is asymptotically the same.
% \item For the vertex $u$, we have that $\deg(u)$ is equal to the degree in the induced subgraph.
% \end{itemize}
As such, the algorithm for \Cref{prop:sdd-local-alg} could be described as follows.
\begin{tbox}
\begin{enumerate}
\item For each vertex $v\in U$, sample $50\log{n}$ neighbors uniformly at random into set $C(v)$.
\item For each $v\in U$, if $\card{C(v)\cap U}\geq \frac{2}{3}\cdot \card{C(v)}$, then add $v$ to $\Utilde$.
\item For every vertex $v\in U\setminus \Utilde$, let $v$ be a sparse vertex of $G[U]$.
\item For every vertex $v\in \Utilde$, use $\deg(v)$ to sample vertex $v$ with probability $\min\{C\cdot \frac{\log{n}}{\deg(v)}, 1\}$.
\item Follow all steps for vertices in $\Utilde$ as in the algorithm of \Cref{prop:sdd-alg}.
\end{enumerate}
\end{tbox}
By a Chernoff bound argument, we can show that if $\card{C(v)\cap U}\geq \frac{2}{3}\cdot \card{C(v)}$, then with high probability, we have that at least $\frac{1}{2}$ fraction of the neighbors of $v$ are in $U$. As such, we could get a constant approximation of the sampling probability by using $\deg(v)$ to sample. The guarantee of line~\ref{line:sdd-local-sdd-prop} simply follows from \Cref{prop:sdd-alg}, and the guarantee of line~\ref{line:sdd-local-u-prop} follows since $u$ always satisfy the property that $\card{C(u)\cap U}\geq \frac{2}{3}\cdot \deg(u)$, and all the sparse vertices in $N[u]$ cannot be in an almost-clique that contains $u$.
% \chen{Fix the issue Vihan mentioned about sampling from the induced subgraph}
% \vihan{what happens to the vertices that do not have a const fraction in G[U]? Also what if we remove many vertices from consideration? Now the degrees can become asymptotically smaller}
% \chen{The plan for change: say that if we run the SDD only on the vertices whose majority of the neighbors are in the induced subgraph, then all the locally dense vertices will be actually locally dense.}

% \chen{Outline of the algorithm:
% \begin{itemize}
% \item For each vertex in $u\in G[U]$, test whether the majority of vertices in $N(u)$ is in $G[U]$. Call this set of vertices $\tilde{U}$.
% \item Run the sparse-dense decomposition on the vertices on $\tilde{U}$.
% \end{itemize}
% }

\subsection{The main algorithm}
\label{subsec:main-alg}
We now introduce our main algorithm. We first give our algorithm that \emph{produces} almost-cliques: this algorithm serves as the main ``workhorse'' to merge vertices to almost-cliques in the updates.
\begin{Algorithm}
\label{alg:clique-generation}
\textbf{An algorithm that computes almost-cliques for a given vertex $u$.}\\
\textbf{Input:} A graph $G=(V,E)$; a sparse-dense decomposition $\SDD_{G,\eps}$; a vertex $u\in V$ in $G$.
\smallskip
\begin{enumerate}
\item Let $U$ be the collection of vertices in $N[u]$.
\item\label{line:SDD-alg-invoke} Run the sparse-dense decomposition algorithm on the \emph{induced subgraph} $G[U]$ to get $\SDD_{L}$ with the algorithm of \Cref{prop:sdd-local-alg} and parameter $\eps/2$. % \chen{make this bigger -- so that the updates are easier to be carried out, and the outside sparse vertices are less sensitive to updates} \chen{maybe not necessary -- try local changes first by decreasing sparse parameters.}
\item\label{line:AC-forming} For each vertex $v \in U$ such that $v \in K$ for some almost-clique $K$, if $\deg(v)\leq (1+2\eps)\cdot \card{K}$, then mark $v$ as a \emph{valid} member of $K$. 
\item\label{line:AC-checking} For each almost-clique $K \in \SDD_{L}$, if $K$ has $(1-\eps)$-fraction of vertices that are valid member of $K$, let the valid members be the almost-clique $K$.
\item\label{line:AC-add-vertex} For each almost clique $K\in \SDD_{L}$, run \Cref{alg:AC-dense-test}, and obtain the new almost-cliques in $\SDD_{G, \eps}$.
\item For all other vertices, use the sparse and dense decomposition in $\SDD_{G, \eps}$. Update the $\SDD_{G, \eps}$ with the new almost cliques and record the number of vertices.
\end{enumerate}
\end{Algorithm}

We are now ready to present the main procedures of our algorithm as \Cref{alg:dynamic-alg}. Note that we use a different numbering scheme for the steps so that we can easily distinguish the line numbers in \Cref{alg:dynamic-alg} vs. the intermediate algorithms.
% \vihan{can we remove the dots after enumerate?}
\begin{Algorithm}
\label{alg:dynamic-alg}
\textbf{An algorithm that maintains an $O(1)$-approximation for correlation clustering anytime.}\\
\textbf{Input:} A graph $G=(V,E)$ with one edge insertion update per time. 
\smallskip

\begin{enumerate}[label=(\Roman*)]
\item For each vertex $v \in V$, we maintain
\begin{enumerate}[label=(\roman*)]
\item a counter $c_v$ to record the number of edge insertions;
\item a sparse-dense decomposition $\SDD_{G,\eps}$ initialized with each vertex as a singleton sparse vertex;
\item a vertex degree of a past time $\degT(v)$ initialized with $\degT(v)\leftarrow 0$.
\end{enumerate}
\item For each almost-clique, we maintain the number of vertices in the component (see implementation details below).
\item Upon the insertion or deletion of an edge $(u,v)$:
\begin{enumerate}[label=(\roman*)]
\item\label{line:u-updates} If $c_u \geq (\eps/10) \cdot \degT(u)$, run the updates of \Cref{alg:clique-generation} with vertex $u$ and $\SDD_{G,\eps}$, and reset $c_w \gets 0$ and let $\degT(w) \gets \deg(w)$ for all $w$ in almost-cliques. 
\item\label{line:v-updates}  If $c_v\geq (\eps/10) \cdot \degT(v)$, run the updates of \Cref{alg:clique-generation} with vertex $v$ and $\SDD_{G,\eps}$, and reset $c_w \gets 0$ and let $\degT(w) \gets \deg(w)$ for all $w$ in almost-cliques.
\item Let $U=N[u]$ if the updates in line~\ref{line:u-updates} are executed and $U=\emptyset$ otherwise. Define $\widetilde{U}$ for $v$ in the same manner using line~\ref{line:v-updates}.
\item\label{line:sparse-vertex-removal} For each vertex in $v \in U \cup \widetilde{U}$, run \Cref{alg:vertex-dense-test} in parallel and determine whether the vertices should be put into $\Vsparse$. Update the graph partition $\SDD_{G, \eps}$, reset $c_v \gets 0$, and let $\degT(v) \gets \deg(v)$.
\item\label{line:almost-clique-dismantle} For all almost cliques $K$ that contain vertices in $U \cup \widetilde{U}$:
\begin{enumerate}[label=(\alph*)]
\item If more than $\eps \card{K}$ vertices are removed from $K$ (either by vertices becoming sparse or joining other almost-cliques), then put every remaining vertex to $\Vsparse$. 
\item For each vertex being put into $\Vsparse$, reset $c_v \gets 0$ and let $\degT(v) \gets \deg(v)$ for $v \in K$.
\item Update the graph partition $\SDD_{G, \eps}$. 
\end{enumerate}
\end{enumerate}
\item Maintain the clustering as the sparse-dense decomposition at any time, i.e., keep each almost-clique as a separate cluster and each sparse vertex as a singleton cluster.
\end{enumerate}
\end{Algorithm}
In other words, for each vertex $u\in V$ with a recorded degree $\degT(u)$, \Cref{alg:dynamic-alg} performs an update step for every $\eps\cdot \degT(u)$ updates (insertions or deletions) on $u$. For each update, the algorithm first runs a sparse-dense decomposition to recognize \emph{local almost-cliques} (note that the sparse vertices are not updated in lines~\ref{line:u-updates} and \ref{line:v-updates}). Subsequently, the algorithm checks whether the newly-formed almost-cliques should indeed be almost-cliques and adds new sparse vertices if possible. Finally, during this process, some global almost-cliques may lose some vertices; and if the number becomes significant, we simply dismantle the almost-clique and mark all remaining vertices as sparse vertices.

We analyze \Cref{alg:dynamic-alg} for the rest of this section.

% \chen{line~\ref{line:almost-clique-dismantle} may have the concern that the almost-clique lost $\eps/2\cdot \card{K}$ vertices each time, but eventually become sparse. Think about how to handle.}

\subsection{The Analysis}
\label{subsec:analysis}
We will first show that the amortized update time of \Cref{alg:dynamic-alg} is at most $O(\frac{\log^{2}{n}}{\eps^3})$ and since we choose $\eps = \Theta(1)$ we get the desired update time of $\polylog(n)$. Then, we will show that \Cref{alg:dynamic-alg} maintains a decomposition such that all sparse vertices are at least $\eps/4$-sparse and all almost-cliques are at most $120\eps$-dense. We note that we picked the large constants to simplify the calculations but the actual guarantees should be much better than the constants we use in the analysis.
% \chen{Go over the analysis again and make the proof sound.}
% \chen{Also, need to explicitly say that why the algorithm is adversarial-robust (since the randomness are not dependent on previous steps).}

\subsubsection*{Update Time Analysis} 
Our key lemma for the update time analysis is as follows.

\begin{lemma}
\label{lem:dynamic-alg-update-time}
The amortized update time for each edge insertion/deletion update in \Cref{alg:dynamic-alg} is $O(\frac{\log^{2}{n}}{\eps^3})$.
\end{lemma}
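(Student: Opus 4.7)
The plan is to separate the work per edge update into three categories: (a) the cost of invoking \Cref{alg:clique-generation} in lines~\ref{line:u-updates}--\ref{line:v-updates}; (b) the cost of the per-vertex dense tests in line~\ref{line:sparse-vertex-removal}; and (c) the cost of dismantling almost-cliques in line~\ref{line:almost-clique-dismantle}. Categories (a) and (b) will be charged directly to the edge updates that triggered them, while (c) will be folded into the work already amortized for (b).

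First I would bound the work of a single trigger on a vertex $u$. By \Cref{prop:sdd-local-alg}, the local sparse-dense decomposition on $G[N[u]]$ in step~\ref{line:SDD-alg-invoke} of \Cref{alg:clique-generation} costs $O(\deg(u) \log^2 n / \eps^2)$. Steps~\ref{line:AC-forming}--\ref{line:AC-checking} perform $O(1)$ work per vertex of $N[u]$. Across all local almost-cliques $K$ in $\SDD_L$, the total cost of the \Cref{alg:AC-dense-test} calls in step~\ref{line:AC-add-vertex} is $O(|K|\log^2 n/\eps)$ per clique, which sums to $O(\deg(u)\log^2 n/\eps)$ since the local almost-cliques partition a subset of $N[u]$. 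Finally, running \Cref{alg:vertex-dense-test} once per vertex in $U = N[u]$ in line~\ref{line:sparse-vertex-removal} adds another $O(\deg(u)\log^2 n/\eps^2)$. Hence a trigger on $u$ costs $O(\deg(u)\log^2 n/\eps^2)$ in total, and the same bound holds symmetrically for $v$.

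Next I would show that a trigger on $u$ can only fire after $\Omega(\eps \deg(u))$ edge insertions or deletions incident to $u$. By construction the trigger fires exactly when $c_u \geq (\eps/10)\cdot \degT(u)$, so at least $(\eps/10)\cdot\degT(u)$ such updates have occurred since $c_u$ was last reset. Throughout this interval $c_u$ stayed below this threshold, so $\deg(u)$ can differ from $\degT(u)$ by at most an $\eps/10$ factor, giving $\degT(u) = \Theta(\deg(u))$ at trigger time. Each edge update $(u,v)$ contributes to only the two counters $c_u$ and $c_v$, so charging each trigger against its $\Omega(\eps \deg(u))$ causing updates yields an amortized cost of $O(\log^2 n/\eps^3)$ per edge update from categories (a) and (b).

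The main obstacle is category (c): dismantling an almost-clique $K$ costs $\Omega(|K|)$, and the dismantle event is not tied to a single edge update. My plan is to charge the dismantling cost to the individual vertex-removal events that triggered it. By construction, $K$ is only dismantled once at least $\eps |K|$ of its vertices have been removed, and each such removal was identified by a distinct invocation of \Cref{alg:vertex-dense-test} inside some earlier trigger; each such invocation already contributes $\Omega(\log^2 n/\eps^2)$ work that has been amortized in category (b). Distributing the $O(|K|)$ dismantling cost across the $\eps |K|$ removal events adds only $O(1/\eps)$ extra per event, which is absorbed into the $O(\log^2 n/\eps^2)$ already incurred per removal without changing the asymptotic bound. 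Combining all three contributions gives amortized update time $O(\log^2 n/\eps^3)$, as required.
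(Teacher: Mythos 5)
Your proof is correct and takes essentially the same approach as the paper: bound the per-trigger cost at $O(\deg(u)\log^2 n/\eps^2)$, amortize over the $\Omega(\eps\,\degT(u))$ edge updates that must precede a trigger, and charge the dismantling of an almost-clique $K$ against the $\Omega(\eps|K|)$ vertex removals that must precede it. The only structural difference is the bookkeeping for dismantling: the paper computes it as a standalone $O(1/\eps^2)$ amortized term (\Cref{lem:update-time-ACs}) rather than folding it into the per-test cost as you do; also note that a vertex can leave $K$ by joining a newly-formed almost-clique via \Cref{alg:AC-dense-test} without ever being flagged by \Cref{alg:vertex-dense-test}, but since every such removal still occurs inside a trigger whose work has already been paid for, your absorption argument is unaffected.
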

The proof of \Cref{lem:dynamic-alg-update-time} is split into two parts: we bound the amortized running time for the updates of lines~\ref{line:u-updates}, \ref{line:v-updates}, and \ref{line:sparse-vertex-removal} and line~\ref{line:almost-clique-dismantle} respectively as follows. % We first present and prove the lemma for the former lines.

\begin{lemma}
\label{lem:update-time-vertex}
The amortized update time for each edge insertion/deletion update in lines~\ref{line:u-updates}, \ref{line:v-updates}, and \ref{line:sparse-vertex-removal} of \Cref{alg:dynamic-alg} is $O(\frac{\log^{2}{n}}{\eps^3})$.
\end{lemma}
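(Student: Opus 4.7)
The plan is to charge the work done by lines~\ref{line:u-updates}, \ref{line:v-updates}, and \ref{line:sparse-vertex-removal} to the edge updates that caused the relevant counter to cross its threshold. First I would fix a trigger of line~\ref{line:u-updates} on some vertex $u$. By definition it requires $c_u \geq (\eps/10)\cdot\degT(u)$, so at least $\Omega(\eps\cdot\degT(u))$ edge updates incident to $u$ have been processed since the last reset of $c_u$; moreover, at most $c_u = O(\eps\cdot\degT(u))$ of those updates changed $\deg(u)$, so $\card{N[u]} = \Theta(\degT(u))$, and $\degT(u)$ is a faithful proxy for the size of the neighborhood that \Cref{alg:clique-generation} actually touches.

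Next I would bound the per-trigger cost. The dominant term inside \Cref{alg:clique-generation} is the local sparse-dense decomposition of \Cref{prop:sdd-local-alg} on $G[N[u]]$, at cost $O(\card{N[u]}\log^{2}{n}/\eps^{2})$. The only other non-trivial ingredient is invoking \Cref{alg:AC-dense-test} on each local almost-clique $K$; here I would use that $\card{N(D)} = O(\log{n}\cdot\card{K})$ (otherwise the subroutine declares failure and returns) and that each element of $N(D)$ is tested against a sample of size $O(\log{n}/\eps)$, giving $O(\card{K}\log^{2}{n}/\eps)$ per clique and $O(\card{N[u]}\log^{2}{n}/\eps)$ summed over the partition of $N[u]$. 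For line~\ref{line:sparse-vertex-removal}, I would argue that each call to \Cref{alg:vertex-dense-test} costs $O(\log^{2}{n}/\eps^{2})$ via its two-level sampling and $O(1)$-time adjacency tests, and since $\card{U\cup\widetilde{U}} = O(\degT(u)+\degT(v))$, the line contributes $O((\degT(u)+\degT(v))\log^{2}{n}/\eps^{2})$ in total. Summing, a trigger attributable to $u$ (together with the corresponding part of line~\ref{line:sparse-vertex-removal}) costs $O(\degT(u)\log^{2}{n}/\eps^{2})$ work.

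The amortization is then immediate: the $\Omega(\eps\cdot\degT(u))$ fresh updates that paid for the trigger on $u$ each absorb $O(\log^{2}{n}/\eps^{3})$ amortized cost, and the same holds symmetrically for $v$. Since each edge update $(u,v)$ contributes to only the two counters $c_u$ and $c_v$ and can therefore be drained by at most one $u$-trigger and one $v$-trigger, it pays the $O(\log^{2}{n}/\eps^{3})$ bill at most twice. The main subtlety I expect is the counter resetting at the end of lines~\ref{line:u-updates} and \ref{line:v-updates}, which zeros $c_w$ for every $w$ in any almost-clique that is touched, and therefore may discard updates on $w$ before they ever trigger anything on $w$ itself. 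However, discarded updates simply pay for no additional work, so the amortization only becomes more favorable. The invariant to verify carefully is that every trigger on $u$ is underwritten by $\Omega(\eps\cdot\degT(u))$ fresh incident updates from its own immediately preceding phase, which follows directly from the trigger condition $c_u \geq (\eps/10)\degT(u)$ and the zeroing of $c_u$ at each phase change.
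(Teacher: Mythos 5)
Your proposal follows the same charging scheme as the paper's proof: bound the per-trigger cost of the local \Cref{prop:sdd-local-alg} call at $O(\card{N[u]}\log^2 n/\eps^2)$, bound \Cref{alg:AC-dense-test} at $O(\card{K}\log^2 n/\eps)$ per clique via the ``fail'' check on $\card{N(D)}$, bound each \Cref{alg:vertex-dense-test} call at $O(\log^2 n/\eps^2)$, and then amortize against the $\Omega(\eps\cdot\degT(u))$ fresh incident updates required to re-trigger, noting each edge update credits only the two endpoint counters. Your additional remark about counter resets zeroing $c_w$ for vertices in touched almost-cliques (and why this only loosens the amortization) is a correct clarification of a point the paper leaves implicit, but the argument is otherwise the same.
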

\begin{proof}
For each vertex $v\in V$, the change of clustering only happens after $\eps \cdot \degT(v)$ updates. Therefore, it suffices to show that the updates in lines~\ref{line:u-updates}, \ref{line:v-updates}, and \ref{line:sparse-vertex-removal} can be computed in time $O(\degT(v) \cdot \frac{\log^{2}{n}}{\eps^2})$ once updates are invoked. Concretely, each update affects exactly two vertices, and the amortized update time for each edge update will be
\begin{align*}
\frac{O(\degT(v) \cdot {\log^{2}{n}}/{\eps^2})}{\eps \cdot \degT(v)} = O\left(\frac{\log^{2}{n}}{\eps^3}\right),
\end{align*}
which is desired by the lemma statement.

Note that any clustering changes can only be triggered by line~\ref{line:u-updates} and line~\ref{line:v-updates} (lines~\ref{line:sparse-vertex-removal} will \emph{not} be invoked if $U \cup \widetilde{U} = \emptyset$). For the computations inside line~\ref{line:u-updates} and line~\ref{line:v-updates}, we claim that the running time is $O(\degT(v) \cdot \frac{\log^{2}{n}}{\eps^2})$. To see this, assume w.log. that we only deal with $U$. The induced subgraph $G[U]$ contains only $\deg(v)\leq (1+\frac{\eps}{100})\cdot \degT(v)$ vertices. Therefore, by \Cref{prop:sdd-local-alg}, running the local sparse-dense decomposition on $G[U]$ (line~\ref{line:SDD-alg-invoke} of \Cref{alg:clique-generation}) takes time at most 
\[O\left(\card{U}\cdot \frac{\log^{2}{n}}{\eps^2}\right) = O\left(\deg(v) \cdot \frac{\log^{2}{n}}{\eps^2}\right) = O(\degT(v) \cdot {\log^{2}{n}}/{\eps^2}).\] 
For the slightly more involved case of line~\ref{line:AC-add-vertex} that runs \Cref{alg:AC-dense-test}, we analyze the runtime as follows.
\begin{claim}
\claimlab{AC-dense-test-time}
The update time of line~\ref{line:AC-add-vertex} of \Cref{alg:clique-generation} induced by line~\ref{line:u-updates} (resp. line~\ref{line:v-updates}) is $O(\deg(u)\cdot \frac{\log{n}}{\eps^2})$ (resp. $O(\deg(v)\cdot \frac{\log{n}}{\eps^2})$).
\end{claim}
\begin{proof}
Note that we are essentially proving the runtime for \Cref{alg:AC-dense-test}. Assume w.log. that we only deal with $G[U]$. For each almost-clique $K \subseteq G[U]$ formed by the local decomposition algorithm, by the requirement of the algorithm, unless it returns ``fail'' and terminates the process, the number of vertices in $N(D)$ is at most $200\log{n}\cdot \card{K}$. Furthermore, for each vertex $u \in N(D)$, we need to check at most $100\cdot \frac{\log{n}}{\eps}$ neighbors. As such, the total running time for each almost-clique $K$ is at most $O(\frac{\log^2{n}}{\eps}\cdot \card{K})$. We can then sum up all vertices in $G[U]$, and get the overall running time of at most 
\begin{align*}
\sum_{K\subseteq G[U]} O\left(\frac{\log^2{n}}{\eps}\cdot \card{K}\right) =O\left(\frac{\log^2{n}}{\eps}\cdot \deg(v)\right),
\end{align*}
where the last step uses the disjointness of almost-cliques. This is as desired by the claim statement.
\myqed{\claimref{AC-dense-test-time}}
\end{proof}

By \claimref{AC-dense-test-time}, we conclude that the update time for lines \ref{line:u-updates} and \ref{line:v-updates} is at most $O(\deg(u)\cdot \frac{\log{n}}{\eps^2})$.
Finally, for the running time of line~\ref{line:sparse-vertex-removal}, it is easy to observe that each run of \Cref{alg:vertex-dense-test} only takes $O(\frac{\log^2{n}}{\eps^2})$ time since we sample at most $O(\frac{\log{n}}{\eps})$ vertices, and check $O(\frac{\log{n}}{\eps})$ neighbors for each of the vertices. Furthermore, note that we only check vertices in $U\cup \widetilde{U}$; and if $u$ invokes the procedure, we would need to check at most $\deg(u)\leq (1+\eps)\cdot \degT(u)$ vertices (resp. the same for $v$).

\end{proof}

% The final missing piece is to bound the running time for line~\ref{line:almost-clique-dismantle}, for which we need more care. In what follows, we show a structural lemma that bounds the number of vertices that could be affected by line~\ref{line:almost-clique-dismantle} in \Cref{alg:dynamic-alg}.  % from $\SDD_{L}$, i.e., the almost cliques in 
% \chen{This }
% \begin{claim}
% \claimlab{num-vertex-dismantled}
% The number of vertices that can be affected by line~\ref{line:almost-clique-dismantle} of \Cref{alg:dynamic-alg} is at most $\frac{\deg(v)}{\eps}$.
% \end{claim}
% \begin{proof}
% Let $K_i$ be an almost clique that lost $\eps\cdot \card{K_i}$ vertices, and let this set of vertices be $C_i$. By definition, we have that $\card{K_i} \leq \frac{\card{C_i}}{\eps}$. Furthermore, note that the almost-cliques are disjoint, and the number of affected vertices is at most $\sum_{i} \card{K_{i}}$. Therefore, if we sum up \emph{all} vertices that could be affected by line~\ref{line:almost-clique-dismantle}, we have
% \begin{align*}
% \sum_{i} \card{K_{i}} \leq \sum_{i}\frac{\card{C_i}}{\eps}\leq \frac{\deg(v)}{\eps},
% \end{align*}
% as desired.
% \myqed{\claimref{num-vertex-dismantled}}
% \end{proof}

Next, we show that the amortized update time of line~\ref{line:almost-clique-dismantle} is similarly bounded, although the argument takes another approach.
\begin{lemma}
\label{lem:update-time-ACs}
The amortized update time for each edge insertion/deletion update in line~\ref{line:almost-clique-dismantle} of \Cref{alg:dynamic-alg} is $O(\frac{1}{\eps^2})$.
\end{lemma}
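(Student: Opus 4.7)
My plan is to show that each dismantling of an almost-clique $K$, which costs $O(|K|)$ work, can be charged to at least $\Omega(\eps^2\cdot |K|)$ distinct edge updates that precede it, giving the desired amortized cost of $O(1/\eps^2)$. The dismantling step in \Cref{alg:dynamic-alg} only fires once at least $\eps\cdot|K|$ vertices have been removed from $K$, so the task reduces to associating $\Omega(\eps)$ edge updates with each such removal.

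First, I would trace every removal of $w\in K$ back to the trigger event that caused it. Such a removal happens either because $w$ was found sparse in the test at \ref{line:sparse-vertex-removal} of \Cref{alg:dynamic-alg}, or because $w$ was added to a new almost-clique by \Cref{alg:AC-dense-test} inside \ref{line:AC-add-vertex} of \Cref{alg:clique-generation}. Both cases occur only inside an execution of \Cref{alg:clique-generation} invoked on some vertex $u$ (possibly $u=w$) at \ref{line:u-updates} or \ref{line:v-updates} of \Cref{alg:dynamic-alg}. The triggering condition $c_u\ge (\eps/10)\cdot \degT(u)$ guarantees that $\Omega(\eps\cdot\degT(u))$ edge updates on $u$ preceded the trigger; these are the edge updates that I will charge.

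The crux of the argument is to bound the total number of vertex movements produced by a single trigger on $u$ by $O(\degT(u))$. Sparse-vertex movements are easy: \ref{line:sparse-vertex-removal} only tests vertices of $N[u]$, so at most $O(\degT(u))$ vertices may become sparse. The subtle part is bounding the number of vertices that \Cref{alg:AC-dense-test} adds to a new local almost-clique $K_{\text{new}}$. By \Cref{lem:dense-merge-test}, with high probability every added vertex $w$ satisfies $|N(w)\cap K_{\text{new}}|\ge (1-O(\eps))|K_{\text{new}}|$, while every vertex in $K_{\text{new}}$ has degree at most $(1+2\eps)|K_{\text{new}}|$ by the validity check at \ref{line:AC-forming} of \Cref{alg:clique-generation}. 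Double counting the edges between such candidates and $K_{\text{new}}$ bounds the number of additions by $O(|K_{\text{new}}|)$. Since the new local almost-cliques are disjoint subsets of $G[N[u]]$, their sizes sum to at most $|N[u]|$, yielding $O(\degT(u))$ total additions across all of them.

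Putting the pieces together, each trigger on $u$ induces at most $O(\degT(u))$ vertex movements while consuming $\Omega(\eps\cdot\degT(u))$ fresh edge updates on $u$, so each movement can claim $\Omega(\eps)$ edge updates. Summing over the $\eps\cdot|K|$ removals preceding the dismantling of $K$ yields the required $\Omega(\eps^2\cdot|K|)$ charged updates, and the dismantling cost $O(|K|)$ amortizes to $O(1/\eps^2)$ per edge update. The charging is consistent because each trigger distributes only a constant multiple of the credits actually available on $u$ and no credit is assigned to more than one movement; reusing the same edge updates already charged in \Cref{lem:update-time-vertex} is harmless, since \Cref{lem:dynamic-alg-update-time} only needs the sum of the two amortized bounds. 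The main obstacle is the $O(|K_{\text{new}}|)$ bound on additions: the face-value bound $|N(D)|=O(\log n\cdot |K_{\text{new}}|)$ from \Cref{alg:AC-dense-test} would otherwise leak a $\log n$ factor, and avoiding it requires coupling the degree threshold in \Cref{alg:AC-dense-test} with the near-regularity enforced by the validity check.
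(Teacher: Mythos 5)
Your proof is correct and follows essentially the same charging argument as the paper: dismantling costs $O(|K|)$, requires at least $\eps|K|$ prior vertex removals, and each removal is traced to a trigger on some vertex $u$ that consumes $\Omega(\eps\cdot\degT(u))$ edge updates while producing only $O(\degT(u))$ vertex movements, so the removals account for $\Omega(\eps^2|K|)$ charged updates. You are in fact slightly more thorough than the paper's own proof, which attributes every removal to the sparse test at line~\ref{line:sparse-vertex-removal} and does not explicitly account for the other removal channel recorded in line~\ref{line:almost-clique-dismantle}, namely a vertex leaving $K$ by joining a new almost-clique via \Cref{alg:AC-dense-test}; your double-counting bound of $O(|K_{\text{new}}|)$ on such additions closes that case (and corresponds to the bound the paper establishes only implicitly, inside \claimref{add-ac-vertex-dense} in the proof of \Cref{lem:AC-no-false}).
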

\begin{proof}
For each almost-clique $K$, each time of execution of line~\ref{line:almost-clique-dismantle} of \Cref{alg:dynamic-alg} will take $O(\card{K})$ time. Therefore, it suffices to prove that each update happens after at least $\eps^2 \card{K}$ insertions/deletions. To see this, note that in our algorithm, the size of an almost-clique $K$ cannot increase by adding vertices to $K$ (the only way for the size of almost-cliques to increase is through the local SDDs in lines~\ref{line:u-updates} and \ref{line:v-updates}). As such, each run of line~\ref{line:almost-clique-dismantle} implies the removal of at least $\eps \card{K}$ \emph{vertices} $w\in K$ from $K$ by line~\ref{line:sparse-vertex-removal}.

Next, we note that for a vertex $w$ to be updated by line~\ref{line:sparse-vertex-removal}, $w$ has to be in $N(v)$ for some vertex $v$ such that $c_v \geq \frac{\eps}{10}\cdot \degT(v)$. Since each vertex could update at most $\deg(v)\leq (1+\frac{\eps}{10})\cdot \degT(v)$ vertices, we could show that
\begin{align*}
\text{number of edge updates} \geq \eps \cdot \text{number of vertices updated line~\ref{line:sparse-vertex-removal}}.
\end{align*}
Therefore, we could show that the number of edge updates is at least $\eps\cdot \eps\cdot \card{K}$ when line~\ref{line:almost-clique-dismantle} is invoked for almost-clique $K$. This leads to our desired lemma statement.
\end{proof}

\paragraph{Finalizing the proof of \Cref{lem:dynamic-alg-update-time}.} All the updates time are covered by lines~\ref{line:u-updates}, \ref{line:v-updates}, \ref{line:sparse-vertex-removal}, and \ref{line:almost-clique-dismantle}. By \Cref{lem:update-time-vertex}, the amortized update time for lines~\ref{line:u-updates}, \ref{line:v-updates}, and \ref{line:sparse-vertex-removal} is $O(\frac{\log^{2}{n}}{\eps^3})$. Furthermore, by \Cref{lem:update-time-ACs}, the amortized update time for each edge insertion/deletion is $O(\frac{1}{\eps^2})$. As such, the total amortized update time is at most $O(\frac{\log^{2}{n}}{\eps^3})$.

\subsubsection*{Approximation analysis} 
We now turn to the analysis of the approximation factor. Our goal is to show that the algorithm maintains an $O(\eps)$-sparse dense decomposition at any time, which, in turn, will lead to an $O(1)$-approximation of the optimal correlation clustering by the result of \cite{Assadi022}. The formal statement of the lemma is as follows.
\begin{lemma}
\label{lem:sparse-dense-decomp-maintain}
For any $\eps\leq 1/500$ and polynomial-bounded number of updates, with high probability, \Cref{alg:dynamic-alg} does \emph{not} output ``fail'', and it maintains a sparse-dense decomposition with the following properties at any point.
\begin{itemize}
\item Every sparse vertex is at least $\eps/8$-sparse.
\item Every almost-clique is at most $120\eps$-dense.
\end{itemize}
\end{lemma}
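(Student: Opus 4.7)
The plan is to prove both invariants by induction over the update sequence, leveraging the correctness of the randomized subroutines together with an amortization argument bounding how density and sparsity can drift between triggers. The base case is trivial: before any updates, all vertices are singletons, so the sparse condition holds vacuously ($N(v)=\emptyset$) and there are no almost-cliques.

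First I would set up a high-probability success event: every call to \Cref{prop:sdd-local-alg}, \Cref{alg:AC-dense-test}, and \Cref{alg:vertex-dense-test} returns a correct answer in the sense of \Cref{lem:dense-merge-test} and \Cref{lem:sparse-split-test}. Since each subroutine fails with probability at most $1/n^{5}$ and there are $\poly(n)$ invocations across the update sequence, a union bound gives overall failure probability $o(1)$. Crucially, each invocation draws fresh randomness at call time, independent of the adversary's past actions, so the union bound is valid even against an adaptive adversary. Under this event, \Cref{alg:AC-dense-test} never returns ``fail'', which discharges the first assertion of the lemma.

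For the sparse-vertex invariant I would case-split on how $v$ entered $\Vsparse$. In case one, if $v$ was reclassified by line~\ref{line:sparse-vertex-removal} via \Cref{alg:vertex-dense-test}, \Cref{lem:sparse-split-test} guarantees that $v$ was at least $38\eps$-sparse at that moment; the algorithm then resets $c_v\gets 0$ and $\degT(v)\gets \deg(v)$, so between that event and any later query, $v$'s degree and those of its sparse-witness neighbors drift by at most $\eps/10\cdot\degT(\cdot)$ before fresh triggers fire. A short calculation bounding the decay of $\eps'$-sparsity per unit of degree drift shows that the gap from $38\eps$ down to $\eps/8$ leaves ample margin. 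In case two, if $v$ entered $\Vsparse$ via dismantling in line~\ref{line:almost-clique-dismantle}, I would use the fact that $\eps\card{K}$ vertices had to leave $K$ first, and each such departure was either (i) caused by \Cref{alg:vertex-dense-test}, in which case \Cref{lem:sparse-split-test} exhibits many high-symmetric-difference witnesses, or (ii) a migration to another almost-clique, which by \Cref{lem:dense-merge-test} means the migrated vertex now has most of its neighborhood outside the former $K$. Aggregating these witnesses across the $\eps\card{K}$ departures and combining them with the per-vertex edge-flip budget produces an $\eps/8$-sparsity certificate for each remaining $v$.

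For the almost-clique invariant I would trace $K$ from its formation in \Cref{alg:clique-generation}. \Cref{prop:sdd-local-alg} yields an $(\eps/2)$-dense local almost-clique in $G[U]$; line~\ref{line:AC-forming} ensures valid members have degree within $(1\pm 2\eps)\card{K}$, bounding external edges by $O(\eps)\card{K}$; and \Cref{lem:dense-merge-test} bounds additions in line~\ref{line:AC-add-vertex} analogously. Between triggers, per-vertex edge drift is at most $\eps/10\cdot\degT(v)$, and at most $\eps\card{K}$ vertices may leave $K$ before dismantling fires. Summing the additive contributions of each source of slack -- the $(\eps/2)$ initial local-SDD parameter, the $2\eps$ validity margin, the $4\eps$ merge threshold, the $\eps/10$ per-window edge drift, and the $\eps$ size-loss budget -- yields a worst-case density of at most $120\eps$ at any query time, with the large constant absorbing cross-effects (e.g.\ the ratio $\card{K}/\Delta(K)$ varying as both numerator and denominator shift).

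The main obstacle I anticipate is the dismantling case for the sparse-vertex invariant: establishing that the remaining $K$-members are genuinely $\eps/8$-sparse in $G$, rather than only relabeled as such by the algorithm. A priori these vertices remain highly interconnected among themselves and could fail the sparsity definition on structural grounds. Resolving this requires converting the departure witnesses of the $\eps\card{K}$ lost vertices into per-vertex symmetric-difference certificates for each surviving vertex -- a step that forces $\eps/8$ to be chosen sufficiently smaller than the \Cref{lem:sparse-split-test} threshold of $38\eps$ to swallow the accumulated slack, and simultaneously forces the $120\eps$ bound in the dense invariant to be generous enough to absorb all dismantling-adjacent drift.
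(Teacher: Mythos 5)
Your overall plan — induction on algorithm-update times, a high-probability event over all subroutine calls using fresh-at-call-time randomness against the adaptive adversary, and a case split on how a vertex entered $\Vsparse$ — matches the skeleton of the paper's proof, which structures the argument via an inductive statement (\Cref{lem:sparse-dense-decomp-update}) with auxiliary lemmas for non-updated almost-cliques, non-updated sparse vertices, updated almost-cliques, updated sparse vertices, and the ``fail'' event. You also correctly identify the hard step (dismantling: converting departure witnesses of the $\eps|K|$ lost vertices into symmetric-difference certificates for the survivors), which is indeed what \Cref{claim:removed-vertex-sparse} in the paper's \Cref{lem:sparse-no-false} is for.

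The genuine gap is in your maintenance argument for the sparse-vertex invariant between triggers. You write that the decay of $\eps'$-sparsity per unit of degree drift gives ``ample margin'' between $38\eps$ and $\eps/8$, but $\eps'$-sparsity is not a function of $v$'s own degree drift alone: it depends on the neighborhoods of all of $v$'s neighbors. An adaptive adversary can hold $\deg(v)$ fixed (so $c_v$ never reaches its threshold) while restructuring edges among $v$'s one- and two-hop neighbors to collapse $\card{N(v)\sym N(u)}$ for many $u\in N(v)$, making $v$ arbitrarily dense without $v$ ever becoming an ``updated'' vertex. A pure drift/decay bound therefore cannot close the argument. The paper's actual mechanism is a contradiction by re-promotion: it argues (\Cref{lem:not-update-sparse-guarantee}) that before $v$ could drop below $\eps/8$-sparse it must pass through a state where it is at most roughly $2\eps/5$-sparse, and — because every edge change responsible for this densification touches $v$ or a one- or two-hop neighbor of $v$, and degrees only drift a small constant factor before a trigger fires — at that moment some neighbor or two-hop neighbor invokes \Cref{alg:clique-generation} whose local SDD (\Cref{prop:sdd-local-alg}) or the merge step (\Cref{alg:AC-dense-test} via \Cref{lem:dense-merge-test}) pulls $v$ into an almost-clique, contradicting that $v$ remained in $\Vsparse$. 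That re-promotion step, not a decay calculation, is what keeps sparse vertices above $\eps/8$-sparse, and it is absent from your outline.

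A secondary, more structural omission: the paper's induction carries a two-level invariant — freshly updated sparse vertices are $\eps/4$-sparse and freshly updated almost-cliques are $20\eps$-dense, while all vertices at all times satisfy the weaker $\eps/8$ and $120\eps$ bounds of \Cref{lem:sparse-dense-decomp-maintain}. You substitute the $38\eps$ detection threshold of \Cref{alg:vertex-dense-test} as the ``fresh'' bound, but that is only correct for vertices that entered $\Vsparse$ via line~\ref{line:sparse-vertex-removal}; vertices that entered via dismantling (line~\ref{line:almost-clique-dismantle}) only get the weaker $\eps/2$-sparse guarantee in the paper, and vertices that persist in $\Vsparse$ across many algorithm-update rounds need a maintained invariant at the $\eps/4$ level, not $38\eps$, for the induction to go through uniformly.
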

Note that combining \Cref{lem:sparse-dense-decomp-maintain} with \Cref{prop:sdd-approx} would straightforwardly lead to the desired $O(1)$-approximation as in \Cref{thm:main-alg}.
As such, our main task is to prove \Cref{lem:sparse-dense-decomp-maintain} in the rest of the analysis for approximation factors. For technical convenience, we assume w.log. that all vertices are of degree at least $100\cdot {\log{n}}/{\eps}$ in the analysis. All the results will go through when the degrees are less than $100\cdot {\log{n}}/{\eps}$ -- the estimations will be deterministic, and we could only get stronger statements.

\paragraph{The analysis of \Cref{lem:sparse-dense-decomp-maintain}.}
We now turn to the formal analysis of \Cref{lem:sparse-dense-decomp-maintain}.
Our strategy is to use an ``inductive'' argument to show that every time an update is invoked, we can always produce a sparse-dense decomposition such that every \emph{updated} sparse vertex is at least $\eps/2$-sparse and every \emph{updated} almost-clique is at most $2\eps$-dense. To this end, we define the \emph{updated} vertices as follows.
\begin{definition}
\label{def:updated-vertices}
At every time that updates happen in \Cref{alg:dynamic-alg} (i.e, the algorithm runs lines~\ref{line:u-updates}, \ref{line:v-updates}, \ref{line:sparse-vertex-removal}, and \ref{line:almost-clique-dismantle}), we say a vertex $v \in V$ is \emph{updated} if the counter $c_v$ is reset to $0$ and $\degT(v)\gets \deg(v)$ is executed. We say an almost-clique is an \emph{updated} almost-clique if there exists a vertex $v\in K$ such that $v$ is an updated vertex. We use $\Vup$ to denote all the updated vertices.
\end{definition}

The inductive statement we would prove for the rest of the analysis is as follows.
\begin{lemma}
\label{lem:sparse-dense-decomp-update}
For any $\eps\leq 1/500$ and polynomial-bounded number of updates, at every time that updates happen in \Cref{alg:dynamic-alg}, with high probability, the algorithm does \emph{not} return ``fail'', and there are
\begin{enumerate}[label=(\Roman*).]
\item All vertices $v\in V$ satisfies the properties as prescribed by \Cref{lem:sparse-dense-decomp-maintain}.
\item For all the \emph{updated vertices} as defined in \Cref{def:updated-vertices}, there are
\begin{itemize}
\item Every sparse vertex is at least $\eps/4$-sparse.
\item Every almost-clique is at most $20\eps$-dense.
\end{itemize}
\end{enumerate}
\end{lemma}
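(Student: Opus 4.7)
The plan is to prove \Cref{lem:sparse-dense-decomp-update} by induction on the number of update steps $t$, exploiting the asymmetry that a freshly updated vertex is certified by either \Cref{alg:vertex-dense-test} or \Cref{alg:clique-generation} with tight parameters, while every carried-over vertex has had at most $(\eps/10)\cdot\degT(v)$ incident edge changes since its last certification. The base case $t=0$ is trivial since every vertex is a singleton sparse vertex with empty neighborhood. In the inductive step, assuming both items hold just before step $t$, I would separately verify item (II) for the vertices touched at step $t$ and then item (I) for the remaining vertices of the decomposition after the step. A union bound over the polynomially many update steps, combined with the high-probability guarantees of \Cref{prop:sdd-local-alg}, \Cref{lem:dense-merge-test}, and \Cref{lem:sparse-split-test}, controls the total failure probability.

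\paragraph{Item (II): freshly updated vertices.} For every vertex newly placed in $\Vsparse$ by line~\ref{line:sparse-vertex-removal}, I would invoke \Cref{lem:sparse-split-test} with the internal constant of \Cref{alg:vertex-dense-test} rescaled so that the $42\eps/38\eps$ window becomes $(\eps/4)/(\eps/5)$; this certifies the vertex as at least $\eps/4$-sparse w.h.p. For every newly formed almost-clique $K$ output by \Cref{alg:clique-generation}, I would combine three guarantees: the local decomposition from line~\ref{line:SDD-alg-invoke} produces $K$ as $(\eps/2)$-dense \emph{inside} $G[U]$ by \Cref{prop:sdd-local-alg}; the validity filter in lines~\ref{line:AC-forming}--\ref{line:AC-checking} retains only vertices whose global degree is in $(1\pm2\eps)\card{K}$, so that outgoing edges are globally (not just locally) bounded; and \Cref{alg:AC-dense-test} in line~\ref{line:AC-add-vertex} promotes any $w$ with $\card{N(w)\cap K}\ge(1-\eps)\card{K}$ and rejects any $w$ with $\card{K\setminus N(w)}\ge 4\eps\card{K}$ by \Cref{lem:dense-merge-test}. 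Stitching these together shows each vertex of $K$ has at most $\sim 5\eps\cdot\Delta(K)$ internal non-neighbors and $\sim 5\eps\cdot\Delta(K)$ outgoing edges, establishing the $20\eps$-dense bound with slack, and the $(1\pm 2\eps)$ size condition follows directly from the validity filter.

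\paragraph{Item (I): carried-over vertices and cliques.} Fix a vertex $v$ last updated at step $t'<t$. Between $t'$ and $t$ at most $(\eps/10)\cdot\degT(v)$ edges incident to $v$ have been inserted or deleted, otherwise line~\ref{line:u-updates} or \ref{line:v-updates} would fire on $v$. A direct bookkeeping computation shows that the sparsity witnesses of $v$ shift by at most an $O(\eps)$-additive factor: the $\eps/4$-sparse threshold degrades to at least $\eps/8$-sparse because the number of good witnesses drops by at most the number of incident edge changes, and the symmetric-difference threshold $\card{N(v)\sym N(u)}$ changes by at most an $\eps/10$-additive amount relative to $\max\set{\deg(u),\deg(v)}$. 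For a carried-over almost-clique $K$, the dismantling rule in line~\ref{line:almost-clique-dismantle} guarantees at most $\eps\card{K}$ vertices have left $K$ since its certification, while the per-vertex edge drift across all members of $K$ contributes at most $2\cdot(\eps/10)\cdot\Delta(K)$ additional non-neighbors or outgoing edges. Combining the $20\eps$ certification with these drifts yields a global density bound of at most $\sim 25\eps \ll 120\eps$, closing the induction with ample slack in the constants.

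\paragraph{Main obstacle.} The crux is the drift analysis in the previous paragraph, because updates on \emph{other} vertices can delete neighbors of $v$ or remove members of $v$'s home clique without $v$ itself triggering a local update; the dismantling criterion in line~\ref{line:almost-clique-dismantle} is the safety net that caps cumulative vertex loss from a clique, and the validity filter in \Cref{alg:clique-generation} is the safety net that prevents a sparse vertex from being silently treated as dense. The technical work will be to exhibit a clean inequality certifying that the accumulated drift between two consecutive updates of $v$ stays within a constant-factor slack of the item (II) guarantees, and to verify that the chosen constants $\eps/4$, $\eps/8$, $20\eps$, and $120\eps$ (together with the $\eps\le 1/500$ assumption) leave enough headroom to absorb the slack while still satisfying the $\eta_0=1$ normalization assumed throughout.
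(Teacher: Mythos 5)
Your inductive skeleton matches the paper, and your treatment of item (II) tracks \Cref{lem:AC-no-false} and \Cref{lem:sparse-no-false} closely by stitching together \Cref{prop:sdd-local-alg}, \Cref{lem:dense-merge-test}, and \Cref{lem:sparse-split-test}. The genuine gap is in your handling of item (I), the carried-over vertices and cliques.

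For a carried-over sparse vertex $v$, your drift analysis claims the $\eps/4$-sparse certificate degrades to $\eps/8$-sparse because at most $(\eps/10)\cdot\degT(v)$ edges incident to $v$ changed since the last reset. But $\eps$-sparsity is a statement about $\card{N(v)\sym N(u)}$ for neighbors $u$, and those symmetric differences are largely controlled by edges incident to $u$, not $v$. An adversary can leave $v$'s edges untouched (so $c_v=0$) while inserting edges at the witness vertices $u$ to shrink $\card{N(v)\sym N(u)}$; every one of $v$'s sparsity witnesses can disappear without $v$ ever triggering line~\ref{line:u-updates} or line~\ref{line:v-updates}. This is precisely why the paper does not do a per-vertex drift bound in \Cref{lem:not-update-sparse-guarantee}: it argues by contradiction that if $v$ ever dropped below $2\eps/5$-sparse, then some vertex at distance at most two from $v$ must have incurred an $\eps/100$-factor degree change and triggered an update, at which point \Cref{prop:sdd-local-alg} or the first bullet of \Cref{lem:dense-merge-test} would have forced $v$ into an almost-clique. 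That two-hop escalation is the missing idea in your proposal.

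The same issue infects your carried-over almost-clique bound. You claim the drift contributes only $2\cdot(\eps/10)\cdot\Delta(K)$ extra non-neighbors or outgoing edges, giving $\sim 25\eps$-density. But the paper's \Cref{lem:not-update-AC-guarantee} needs case analysis on whether $K$ contains many vertices that were initially at least $\tfrac{2}{5}\eps$-sparse, and in that case it shows vertex degrees drift within $(1\pm 85\eps)k$ before being caught — an adversary can spread changes across many members of $K$ (and across vertices outside $K$) so that no single counter fires; only line~\ref{line:sparse-vertex-removal} and line~\ref{line:almost-clique-dismantle} serve as the backstops, and the resulting guarantee is $120\eps$-dense, not $25\eps$. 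Your proposal names the right safety nets but does not give the contradiction structure that makes them bind; as written, the item (I) step does not close.
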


Observe that \Cref{lem:sparse-dense-decomp-maintain} simply follows from \Cref{lem:sparse-dense-decomp-update}. As such, our remaining task is to prove \Cref{lem:sparse-dense-decomp-update}. Our inductive argument on \emph{updated vertices} is as follows. 

\paragraph{The base case.} In the base case, we show that when the first edge insertion arrives, the properties prescribed by \Cref{lem:sparse-dense-decomp-update} hold. This is straightforward to see: let $(u,v)$ be the first edge inserted; the algorithm will run line~\ref{line:u-updates} and make $(u,v)$ an almost-clique. All other vertices in $V\setminus \{u,v\}$ are sparse, and the almost-clique is $0$-dense. 

\paragraph{Inductive steps.} Let us assume that the statement holds after $t-1$ algorithm updates (not to be confused with edge updates). Let $G_{t-1}$ be the graph right after the $(t-1)$-th algorithm update, and let $G_{t}$ be the graph of the $t$-th algorithm update. Furthermore, let $\SDD_{G_{t-1}, \eps}$ be the global sparse-dense decomposition (as vertex partition) before the $t$-th update happens. We show that \Cref{alg:dynamic-alg} that takes $\SDD_{G_{t-1}, \eps}$ and $G_t$ will produce a new $\SDD_{G_{t}, \eps}$ that satisfies the requirement of \Cref{lem:sparse-dense-decomp-update}. To this end, we first show that for all the vertices that are \emph{not} updated by the algorithm, the properties as prescribed by \Cref{lem:sparse-dense-decomp-update} (also \Cref{lem:sparse-dense-decomp-maintain}) will be satisfied. We prove the properties for almost-cliques first.

% \chen{add a picture for \Cref{lem:not-update-AC-guarantee} and \Cref{lem:not-update-sparse-guarantee}}
\begin{figure}
  \centering
  \begin{subfigure}[t]{.4\linewidth}
    \centering\includegraphics[width=\linewidth]{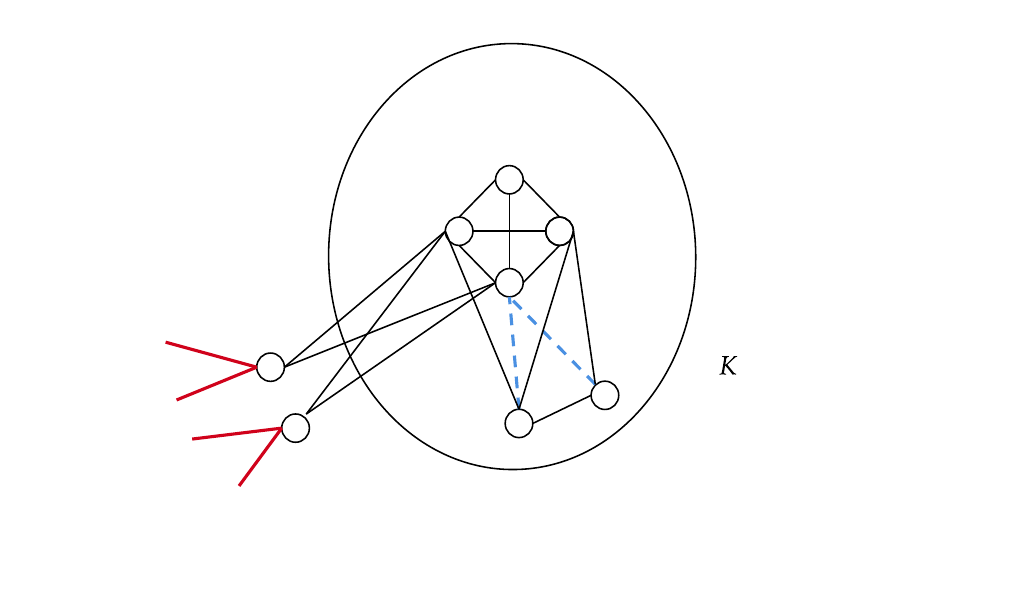}
    \caption{}
    \label{fig:without-update-AC}
  \end{subfigure}
  \begin{subfigure}[t]{.4\linewidth}
    \centering\includegraphics[width=\linewidth]{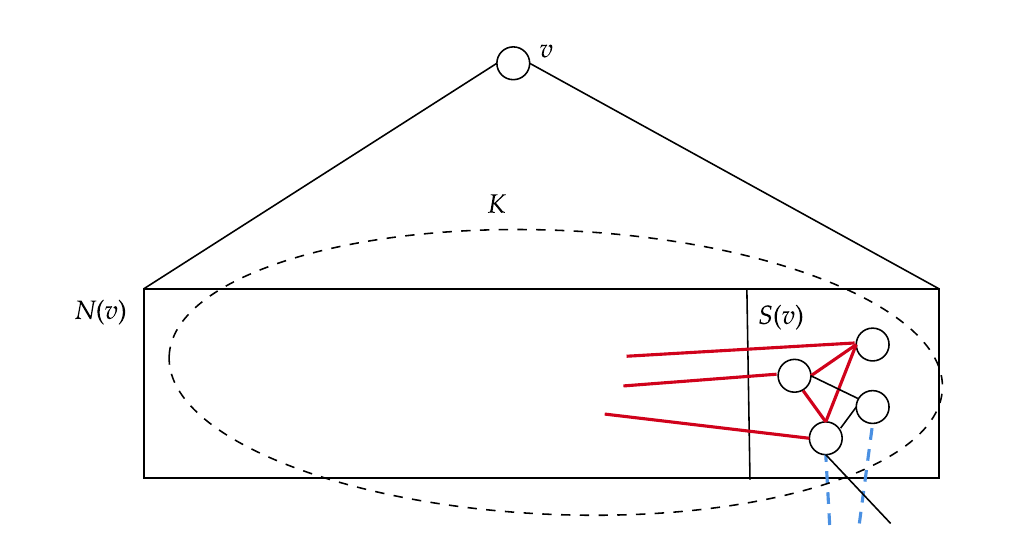}
    \caption{}
    \label{fig:without-update-sparse}
  \end{subfigure}
  \caption{An illustration of the analysis in \Cref{lem:not-update-AC-guarantee} and \Cref{lem:not-update-sparse-guarantee}. Red solid edges are the insertions and blue dotted edges are the deletions since the almost-clique and sparse vertices are formed. \Cref{fig:without-update-AC}: there could not be too many edge updates that make the almost-clique sparse since otherwise either the vertex itself becomes sparse, or the entire almost-clique is dismantled. \Cref{fig:without-update-sparse}: initially all vertices in $S(v)$ is sparse. If sufficiently many of them become not sparse, an update that leads to almost-clique $K$; and by \Cref{alg:AC-dense-test}, $v$ would have joined $K$.}
\end{figure}

\begin{lemma}
\label{lem:not-update-AC-guarantee}
With high probability, after the execution of the algorithm updates, the set of almost-clique vertices that are \emph{not updated} (i.e., vertices in $V\setminus \Vup$) satisfy the properties as specified by \Cref{lem:sparse-dense-decomp-maintain}.
% Furthermore, for any almost-clique $K$, let $K^{\text{sparse}}$ be the set of vertices that is removed from $K$ by line~\ref{line:sparse-vertex-removal} (\Cref{alg:vertex-dense-test}). Then, the set of vertices $K\setminus K^{\text{sparse}}$ is at least $20$
\end{lemma}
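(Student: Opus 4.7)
The plan is to apply the inductive hypothesis of \Cref{lem:sparse-dense-decomp-update} at an earlier time and account for the few changes that can have occurred since. I would fix any $v \in V \setminus \Vup$ that currently lies in an almost-clique $K$ of $\SDD_{G_t,\eps}$, and let $T^* \le t$ be the most recent step at which $v$ was itself updated and assigned to an almost-clique. Such a $T^*$ exists because $v$ starts as a singleton sparse vertex at time $0$, so joining $K$ must have coincided with an update. By the inductive hypothesis at $T^*$, the almost-clique $K^*$ containing $v$ at that moment is at most $20\eps$-dense. Because $v$ has not been updated since $T^*$, the dismantle rule of line~\ref{line:almost-clique-dismantle} cannot have fired on $K^*$ (otherwise $v$ would have been placed into $\Vsparse$ and re-updated), so $K$ is the descendant of $K^*$ obtained through a bounded sequence of vertex additions (via \Cref{alg:AC-dense-test}) and removals (via line~\ref{line:sparse-vertex-removal}).

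The next step is to quantify the three sources of change between $T^*$ and $t$. First, since $c_v < (\eps/10)\degT(v)$ throughout, at most $(\eps/10)\degT(v)$ edge flips incident to $v$ have occurred, so $\deg(v)$ stays within a $(1\pm\eps/10)$ factor of $\degT(v)=\deg(v)\big|_{T^*}$. Second, the cumulative number of vertices removed from $K^*$ is at most $\eps|K^*|$, else the dismantle rule would have already fired. Third, each vertex $w$ added to $K$ via \Cref{alg:AC-dense-test} satisfies $|K\setminus N(w)|\le 4\eps|K|$ and $\deg(w)\in[(1-2\eps)|K|,(1+2\eps)|K|]$ by \Cref{lem:dense-merge-test}; the latter keeps $\Delta(K)$ aligned with $\Delta(K^*)$ up to a $(1+O(\eps))$ factor across the whole addition sequence. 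With these ingredients in hand, I would verify the three $120\eps$-density conditions of \Cref{def:sdd} for $v$ in $K$ in turn: non-neighbors of $v$ inside $K$ split into original non-neighbors inside $K^*$ (at most $20\eps\Delta(K^*)$), new ones created by edge deletions at $v$ (at most $(\eps/10)\degT(v)$), and added vertices that happen to be non-neighbors of $v$; neighbors of $v$ outside $K$ split symmetrically, with removed $K^*$-members still neighboring $v$ added in; and the size bound $(1-120\eps)\Delta(K)\le|K|\le(1+120\eps)\Delta(K)$ follows directly from the $\eps|K^*|$ removal cap together with the degree alignment of each addition. A final union bound over polynomially many update events and choices of $v$ promotes the per-event high-probability guarantees of \Cref{lem:dense-merge-test} to an overall high-probability statement.

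The step I expect to be the main obstacle is the ``added non-neighbors'' summand: each newly admitted $w$ has only $\le 4\eps|K|$ non-neighbors inside $K$, but that is an average over $K$ rather than a bound tailored to the specific vertex $v$, so summing naively across many additions could in principle blow up. The right approach seems to be a double-counting / amortization argument that charges every addition non-neighboring to $v$ to either an edge flip at $v$ (whose total is bounded by $c_v<(\eps/10)\degT(v)$) or to a removal already counted by the $\eps|K^*|$ cap, so that the cumulative ``$v$-bad'' addition count stays $O(\eps)\Delta(K)$. Once this amortization is in place, each summand sits comfortably inside the $120\eps$ density budget, and the lemma follows.
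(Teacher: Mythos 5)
Your overall plan — anchor the argument at $T^*$, the last moment $v$ was placed into an almost-clique, and carry the induction hypothesis forward by accounting for the drift that has accumulated since — is exactly the shape of the paper's argument, and the three consequences you derive (degree drift at most a $(\eps/10)$-factor from $c_v$ staying under threshold, at most $\eps|K^*|$ removals before the dismantle rule forces an update, and a union bound over polynomially many events) are all legitimate ingredients here.

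However, there is a concrete misreading of the algorithm that sets up a spurious third source of change, and that is precisely the step you flag as the ``main obstacle.'' In \Cref{alg:dynamic-alg}, \Cref{alg:AC-dense-test} is only ever invoked from line~\ref{line:AC-add-vertex} of \Cref{alg:clique-generation}, and the almost-clique it grows is a \emph{locally computed} one in $\SDD_L$. Every vertex placed into such a newly formed almost-clique has its counter reset (``reset $c_w\gets 0$ and $\degT(w)\gets\deg(w)$ for all $w$ in almost-cliques''), so it is by definition in $\Vup$ that round; none of its members stays outside $\Vup$. Existing almost-cliques recorded in $\SDD_G$ never gain members — the paper even states this explicitly in the proof of \Cref{lem:update-time-ACs} (``the size of an almost-clique $K$ cannot increase by adding vertices to $K$''). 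So for your non-updated $v$ we always have $K_t \subseteq K_{T^*}$; there are no ``added non-neighbors'' to amortize, and the double-counting argument you sketch as the final step of your proposal is unneeded.

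What you skip, on the other hand, is the case analysis the paper actually performs. The paper splits on whether $v$ was initially ``fairly dense'' (at most $\frac{2}{5}\eps$-sparse) or ``somewhat sparse'' when $K^*$ was formed. In the first case the degree bound you state translates directly into bounds relative to $k=|K^*|$; in the second case the paper cannot infer this from the degree counter alone and instead argues by contradiction that if $\deg(v)$ had drifted outside of $\bigl[(1-85\eps)k,(1+85\eps)k\bigr]$, then $v$ (or enough members of $K^*$) would have been flagged $42\eps$-sparse and peeled off, breaking the premise that $v\notin\Vup$. Your ``degree alignment'' sentence asserts the conclusion of this step without supplying the proof-by-contradiction logic that converts the degree counter constraint into a constraint on $\deg(v)$ relative to $\Delta(K)$ and $|K|$. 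In short: drop the additions bookkeeping entirely, and then supply the two-case argument relating $\deg(v)$ to $k$; once those two adjustments are made, the structure of your proof matches the paper's.
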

\begin{proof}
For an almost-clique $K$ that has not been updated, the lines~\ref{line:u-updates}, \ref{line:v-updates}, and \ref{line:almost-clique-dismantle} should not have been executed on $K$. As such, let $k$ be the size of $k$ when the almost-clique is formed; by the conditions in line~\ref{line:almost-clique-dismantle}, $K$ has not lost more than $\eps$ fraction of the vertices. As such, the size of $K$ satisfies that $(1-\eps)k\leq \card{K}\leq k$. 
 We analyze the possible cases as follows.
 
\begin{enumerate}[label=\Alph*).]
\item If more than $20\eps k$ vertices $v\in K$ is initially at most $\frac{2}{5}\cdot \eps$-sparse when $K$ is formed. In this case, we note that for each $v\in V$, there is $(1-\eps/100)\cdot \degT(v)\leq \deg(v)\leq (1+\eps/100)\cdot \degT(v)$. Otherwise, by the induction hypothesis that $K$ is at most $20\eps$-dense, and since the vertex $v$ is at most $\eps/2$-sparse, the almost-clique will be updated during the execution of line~\ref{line:SDD-alg-invoke} of \Cref{alg:clique-generation}. We could therefore verify the properties of almost-clique vertices as follows.
\begin{itemize}
\item  The number of neighbors inside $K$. By the induction hypothesis, when the almost-clique is formed, for each vertex $v\in K$, there are at most $20\eps k$ non-neighbors for $v$ in $K$. At any point, for the vertices $v$ that are still in $K$, we have that $\deg(v)\geq (1-\eps/100)\degT(v)$ since otherwise an update must have happened. Therefore, the number of non-neighbors inside $K$ for vertex $v$ is at most
\begin{align*}
\card{K\setminus N(v)} &\leq 20\eps k + \eps/100\cdot \degT(v)\\
&\leq 20\eps k + \frac{\eps}{100(1-\eps/100)}\cdot \deg(v)\\
&\leq 20\eps \frac{\card{K}}{1-\eps}\cdot  + \frac{\eps}{100(1-\eps/100)}\cdot \deg(v)\\
&\leq  20\eps \frac{\deg(v)}{(1-\eps)\cdot (1-\eps/100)}\cdot  (1+20\eps) + \frac{\eps}{100(1-\eps/100)}\cdot \deg(v)\\
&\leq 40 \eps \deg(v) \tag{using $\eps\leq 1/500$},
\end{align*}
which is as desired by the lemma statement.
\item The number of neighbors outside $K$. By the induction hypothesis, when the almost-clique is formed, there are at most $20\eps k$ vertices going outside $K$. Again, since the almost-clique has not been updated, we have that $i).$ there are at most $\eps k$ vertices removed; and $ii).$ $\deg(v)\leq (1+\eps/100)\degT(v)$. As such, we have that
\begin{align*}
\card{V(v)\setminus K} &\leq 20\eps k + \eps/100\cdot \degT(v) + \eps k\\
&\leq 21\eps k + \frac{\eps}{100(1-\eps/100)}\cdot \deg(v)\\
&\leq 21\eps \frac{\card{K}}{1-\eps}\cdot  + \frac{\eps}{100(1-\eps/100)}\cdot \deg(v)\\
&\leq  21\eps \frac{\deg(v)}{(1-\eps)\cdot (1-\eps/100)}\cdot  (1+20\eps) + \frac{\eps}{100(1-\eps/100)}\cdot \deg(v)\\
&\leq 40 \eps \deg(v) \tag{using $\eps\leq 1/500$},
\end{align*}
which is as desired by the lemma statement.
\item The size of the almost-clique. By the induction hypothesis, by the time the almost-clique is formed, we have
\begin{align*}
(1-20\eps)\cdot \DeltaT(K) \leq k \leq (1+20\eps)\cdot \DeltaT(K),
\end{align*}
where we use $\DeltaT(K)$ to denote the maximum degree of vertex when $K$ is formed. 
Again, since the number of insertions has not triggered an update, there is 
\begin{align*}
(1-\eps/100)\cdot \DeltaT(K) \leq \Delta(K) \leq (1+\eps/100)\cdot \DeltaT(K).
\end{align*}
Therefore, we could prove the size bounds with
\begin{align*}
& \card{K} \geq (1-\eps)\cdot k \geq (1-\eps)\cdot (1-20\eps)\cdot \frac{\Delta(K)}{1+\eps/100} \geq (1-40\eps)\cdot \Delta(K)\\
& \card{K} \leq (1+\eps)\cdot k \leq (1+\eps)\cdot (1+20\eps)\cdot \frac{\Delta(K)}{1-\eps/100} \leq (1+40\eps)\cdot \Delta(K),
\end{align*}
where $\eps\leq 1/500$ was used in the above inequalities. This established the desired size bound for the almost clique.
\end{itemize}
In summary, for the case when $v\in K$ is initially at most $\frac{2}{5}\cdot \eps$-sparse when $K$ is formed, we have that the almost-clique is at most $40\eps$-dense.

\item  If at least $(1-20\eps)k$ vertices initially are at least $\frac{2}{5}\cdot \eps$-sparse in $K$ when $K$ is formed. In this case, if $v$ itself is at most $\frac{2}{5}\cdot \eps$-sparse, we could follow the analysis as in the above case. Otherwise, we claim that we have $(1-85\eps)\cdot k\leq \deg(v)\leq (1+85\eps)k$ for all $v\in K$. Note that by \Cref{lem:sparse-split-test}, since $v$ has \emph{not} been removed from $K$, it is at most $42\eps$-sparse. We prove the lower and upper bounds as follows.
\begin{itemize}
\item $\deg(v)\geq (1-85\eps) \cdot k$. Suppose for the purpose of contradiction that the property is not satisfied, i.e., $\deg(v)<(1-85\eps) \cdot k$. By the induction hypothesis, we have that at the time $K$ is formed, $v$ has at most $20\eps k$ neighbors outside $K$, and $20\eps k$ non-neighbors inside $K$. Furthermore, let $Z$ be the set of neighbors removed from a fixed $v\in K$, and let $S(v)$ be the initial set of sparse vertices of $v$. By the properties of almost-cliques, there must be $\card{N(v)\cap N(x)}\geq (1-22\eps)\cdot k$ for $x \in N(v)\setminus S(v)$ and $\card{N(v)\setminus S(v)} \geq (1-22\eps)\cdot k$. We now analyze some sub-cases as follows.
\begin{itemize}
\item If for less than $1/2$-fraction of the vertices in $z\in Z$, the adversary also remove edge $(x,z)$ for $x\in N(v)\setminus S(v)$. Then, for every all vertices $x \in N(v)\setminus S(v)$, we have that
\begin{align*}
\card{N(v)\sym N(x)} & \geq 32\eps\cdot k\geq 42\eps\cdot \deg(v),
\end{align*}
where the last inequality uses $\eps\leq 1/500$. Furthermore, we have that
\begin{align*}
\card{N(v)\setminus S(v)} & \geq (1-22\eps)\cdot k \geq 42\eps\cdot \deg(v),
\end{align*}
where the last inequality uses $\eps\leq 1/500$. This means the vertex $v$ is $42\eps$-sparse, which contradicts the fact that it is \emph{not} removed from $K$.
\item If for more than $1/2$-fraction of the vertices in $z\in Z$, the adversary also remove edge $(x,z)$ for $x\in N(v)\setminus S(v)$. Let this set of vertices be $Z'$, and not that the vertices in $Z'$ has become at least $42\eps$-sparse. Furthermore, we have that 
\begin{align*}
\card{Z'}\geq \frac{(85-20)\cdot \eps}{2} >2\eps,
\end{align*}
which contradicts the fact that $K$ has at most $2\eps$ of the vertices removed. 
\end{itemize}
Summarizing the above cases gives us the degree lower bound for $v$.

\item $\deg(v)\leq (1+85\eps) \cdot k$. Let $Z$ be the set of vertices in $N(v)\setminus K$. We first note that \emph{all} vertices $z\in Z$ should have that
\begin{itemize}
\item either $z$ is at least $\frac{2\eps}{5}$-sparse;
\item or at least $1/2$-fraction of the neighbors of $z$ are \emph{not} in $K$.
\end{itemize}
Again, this is using the fact that if both conditions are violated, then $z$ should have induced algorithm update on $K$ in line~\ref{line:SDD-alg-invoke} of \Cref{alg:clique-generation}. 
Furthermore, since we have that $v$ is initially at least $\frac{2}{5}\eps$ sparse, we have that $\card{Z}\geq \frac{\eps}{4}\cdot k$. 

Suppose for the purpose of contradiction that $\deg(v)> (1+85\eps) \cdot k$. By the induction hypothesis, there are at most $20\eps k$ non-neighbors for $v$ in $K$. Therefore, $v$ must have inserted at least $65\eps\cdot k$ vertices outside $K$. By \Cref{lem:dense-merge-test}, these vertices are again $\eps$-sparse. Together with the vertices in $Z$ that have to be sparse at all times, we could argue that vertex $v$ is at least $42\eps$-sparse, which contradicts the fact that it has not been removed from $K$.
\end{itemize}

Using the condition of $(1-85\eps)\cdot k\leq \deg(v)\leq (1+85\eps)k$, we now verify the properties of the almost-clique $K$ as follows.
\begin{itemize}
\item  The number of neighbors inside $K$. By the induction hypothesis, when the almost-clique is formed, for each vertex $v\in K$, there are at most $20\eps k$ non-neighbors for $v$ in $K$ and at most $20\eps$ neighbors for $v$ in $K$. 
Therefore, the number of non-neighbors inside $K$ for vertex $v$ is at most
\begin{align*}
\card{K\setminus N(v)} &\leq 20\eps k + 85 \eps k\\
&\leq 105\eps \cdot k\\
&\leq \frac{105\eps}{1-85\eps}\cdot \deg(v)\\
& \leq 120\eps \cdot \deg(v),    \tag{using $\eps\leq 1/500$}
\end{align*}
which is as desired by the lemma statement.

\item The number of neighbors outside $K$. By the induction hypothesis, when the almost-clique is formed, there are at most $20\eps k$ vertices going outside $K$. Again, since the almost-clique has not been updated, we have that there are at most $\eps k$ vertices removed. As such, we have that
\begin{align*}
\card{N(v)\setminus K} &\leq 20\eps k + 85 \eps k + \eps k\\
&\leq 106\eps \cdot k\\
&\leq \frac{106\eps}{1-85\eps}\cdot \deg(v)\\
& \leq 120\eps \cdot \deg(v),    \tag{using $\eps\leq 1/500$}
\end{align*}
which is as desired by the lemma statement.
\item The size of the almost-clique. By the induction hypothesis, by the time the almost-clique is formed, we have
\begin{align*}
(1-20\eps)\cdot \DeltaT(K) \leq k \leq (1+20\eps)\cdot \DeltaT(K),
\end{align*}
where we use $\DeltaT(K)$ to denote the maximum degree of vertex when $K$ is formed. 
Again, by the degree bound we have on the vertices $v\in K$, there is 
\begin{align*}
(1-85\eps)\cdot \DeltaT(K) \leq \Delta(K) \leq (1+85\eps)\cdot \DeltaT(K).
\end{align*}
Therefore, we could prove the size bounds with
\begin{align*}
& \card{K} \geq (1-\eps)\cdot k \geq (1-\eps)\cdot (1-20\eps)\cdot \frac{\Delta(K)}{1+85\eps} \geq (1-120\eps)\cdot \Delta(K)\\
& \card{K} \leq (1+\eps)\cdot k \leq (1+\eps)\cdot (1+20\eps)\cdot \frac{\Delta(K)}{1-85\eps} \leq (1+120\eps)\cdot \Delta(K),
\end{align*}
where $\eps\leq 1/500$ was used in the above inequalities. This established the desired size bound for the almost clique.
\end{itemize}
\end{enumerate}
\end{proof}

We note that although the analysis of \Cref{lem:not-update-AC-guarantee} is quite technical, it could be succinctly summarized as an illustration in \Cref{fig:without-update-AC}. We now turn to the case for sparse vertices.
\begin{lemma}
\label{lem:not-update-sparse-guarantee}
With high probability, after the execution of the algorithm updates, the set of sparse vertices (i.e., $\Vsparse$) that are \emph{not updated} (i.e., vertices in $V\setminus \Vup$) satisfy the properties as specified by \Cref{lem:sparse-dense-decomp-maintain}.
% Furthermore, for any almost-clique $K$, let $K^{\text{sparse}}$ be the set of vertices that is removed from $K$ by line~\ref{line:sparse-vertex-removal} (\Cref{alg:vertex-dense-test}). Then, the set of vertices $K\setminus K^{\text{sparse}}$ is at least $20$
\end{lemma}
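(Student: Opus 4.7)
The plan is to use the strengthened inductive hypothesis from \Cref{lem:sparse-dense-decomp-update}, namely that when $v$ was last updated as a sparse vertex (say at time $t'$), it was at least $(\eps/4)$-sparse, and to argue that the intervening edge modifications cannot erode $v$'s sparsity below $\eps/8$. First I would fix the set $S_0(v)$ of $(\eps/4)$-sparse-neighbors of $v$ at time $t'$, which has size at least $(\eps/4)\degT(v)$ by induction. Because $v\notin \Vup$, its counter satisfies $c_v<(\eps/10)\degT(v)$, so $\deg(v)$ differs from $\degT(v)$ by at most an $(\eps/10)$-fraction, and at most $(\eps/10)\degT(v)$ edges incident to $v$ have been inserted or deleted since $t'$. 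The goal is to show that at least $(\eps/8)\deg(v)$ of $v$'s current neighbors witness the $(\eps/8)$-sparsity condition of \Cref{def:sparse-dense-vertices}.

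I would then split each $u\in S_0(v)$ into two cases. If $u$ itself has not been updated since $t'$, then both $u$ and $v$ have sustained at most an $(\eps/10)$-fraction-of-degree edge changes, so $\card{N(v)\sym N(u)}$ drops by at most $(\eps/10)(\degT(v)+\degT(u))$ while $\max\{\deg(u),\deg(v)\}$ drifts only by a similarly small factor; a direct computation using $\eps\le 1/500$ then shows that the initial lower bound of $\eps/4$ on the ratio $\card{N(v)\sym N(u)}/\max\{\deg(u),\deg(v)\}$ degrades only to something still above $\eps/8$, so $u$ remains an $(\eps/8)$-sparse-neighbor of $v$.

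The main obstacle is the case when $u\in S_0(v)$ has itself been updated after $t'$ --- either remarked as sparse or absorbed into some almost-clique --- since no per-pair bound on $\card{N(v)\sym N(u)}$ is then directly available. This is handled by the mechanism illustrated in \Cref{fig:without-update-sparse}: if too many vertices from $S_0(v)$ were absorbed into a single almost-clique $K$ at some later update, then because $v$'s neighborhood has barely changed since $t'$ one obtains $\card{N(v)\cap K}\ge (1-\eps)\card{K}$ while $\deg(v)$ lies in the interval $(1\pm 2\eps)\card{K}$. Consequently, when \Cref{alg:AC-dense-test} was executed during the update that produced $K$, $v$ would with high probability have appeared in $N(D)$ and been added to $K$ by \Cref{lem:dense-merge-test}, contradicting the assumption $v\notin \Vup$. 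Hence only a small fraction of $S_0(v)$ can have been absorbed by any one almost-clique, and a charging argument using the size bounds in \Cref{def:sdd} caps the total loss from $S_0(v)$ due to updated neighbors by a small constant times $\eps\cdot \deg(v)$. Combining the two cases leaves at least $(\eps/8)\deg(v)$ neighbors of $v$ certifying the $(\eps/8)$-sparsity condition, and a union bound over the polynomial-bounded number of updates gives the stated high-probability guarantee.
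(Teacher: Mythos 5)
Your proposal follows the same high-level intuition as the paper's own proof (both rely on \Cref{lem:dense-merge-test} and the contrapositive that $v$ would be absorbed into an almost-clique), and your setup in fact mirrors the paper's — the proof also defines the witness set $S(v)$ at the time $v$ was last made sparse. However, the proof in the paper then departs from the $S_0(v)$-erosion bookkeeping you attempt and instead uses a proof by contradiction: if $v$ ever fell below $\eps/8$-sparse, there must be an update in $v$'s two-hop neighborhood at a moment when $v$ is at most $2\eps/5$-sparse, and at that moment the local decomposition (via \Cref{prop:sdd-local-alg} if the update is at $v$ or a direct neighbor, or via \Cref{alg:AC-dense-test}/\Cref{lem:dense-merge-test} if the update is at a two-hop neighbor) would have placed $v$ into an almost-clique, contradicting $v\notin\Vup$. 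This sidesteps having to account for which witnesses survive.

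There is a genuine gap in your Case 2. You claim that if too many $u\in S_0(v)$ are absorbed into a single almost-clique $K$ then $\card{N(v)\cap K}\geq(1-\eps)\card{K}$ while $\deg(v)\in(1\pm 2\eps)\card{K}$, so $v$ would have been added by \Cref{lem:dense-merge-test}. Neither inequality follows from the setup. The vertices of $S_0(v)$ are precisely the neighbors of $v$ whose neighborhoods \emph{differ substantially} from $N(v)$; if a group of them forms an almost-clique $K$, then $K$ is built around \emph{their} common neighborhood, which may be largely disjoint from $N(v)$ and of a very different size. A concrete failure mode: $N(v)$ has size $d$, a set of $\eps d$ witnesses of $v$ all have a common neighborhood $W$ of size $100d$ disjoint from $N(v)$, and they form $K\supseteq W$ with $\card{K}\approx 100d$; then $\card{N(v)\cap K}\approx\eps d\ll (1-\eps)\card{K}$ and $\deg(v)\ll\card{K}$, so \Cref{lem:dense-merge-test} does not apply and $v$ is rightly \emph{not} added to $K$ — and $u$ in fact remains a valid witness for $v$'s sparsity since its neighborhood still disagrees strongly with $N(v)$. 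More broadly, you are conflating "$u$ was re-classified at some update" with "$u$ stopped witnessing $v$'s sparsity"; these are logically independent. The quantity that matters is $\card{N(u)\sym N(v)}$, and its erosion is driven by edge flips incident to $u$ or $v$; the paper's key observation is that any update triggered by a counter at a neighbor $u$ (or a two-hop neighbor) also re-examines $v$ and resets $c_v$ — so if $v$ has genuinely ceased to be sparse, it is caught and promoted to an almost-clique, contradicting $v\notin\Vup$. Your proposal does not make use of this mechanism, and also leaves unaddressed the sub-case where an updated witness $u$ is re-marked sparse rather than joining a clique, so the final "charging argument" cannot close.

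A secondary concern: your Case 1 asserts that a per-pair drift of at most an $(\eps/10)$-fraction on each of $u$ and $v$ keeps the ratio $\card{N(u)\sym N(v)}/\max\{\deg u,\deg v\}$ above $\eps/8$. With the threshold $\eps/10$ as literally written in \Cref{alg:dynamic-alg}, the worst case is a drop from $\eps/4$ to about $\eps/4 - 2(\eps/10) = \eps/20 < \eps/8$, so the claimed direct computation does not go through; it only works with the stricter $\eps/100$ trigger used elsewhere in the paper's analysis. This is partly an inconsistency in the paper itself, but you should flag that your computation implicitly requires the tighter constant.
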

\begin{proof}
By the induction hypothesis, when a vertex $v\in \Vsparse$ was updated (i.e., newly-formed), the vertex must be at least $\eps/4$ sparse. Suppose for the purpose of contradiction that $v$ becomes less than $\eps/8$ sparse at some point. Let $S(v)$ be the set of vertices such that for $u\in S(v)$, $\card{N(v)\sym N(u)}\geq \frac{\eps}{4}\cdot \max\{\deg(u), \deg(v)\}$ when $v$ is first put into $\Vsparse$. Since we update $N(u)$ for each vertex $u\in V$ after the degree changes by an $\eps/100$ factor, there must exist a time when $v$ is sparse with a parameter of at most 
\[\frac{\eps}{8(1-\eps/100)}<2\eps/5,\]
and either $v$, a vertex $u\in N(v)$, or a two-hop neighbor of $v$ induces an update. We claim that at the time of the update, the vertex $v$ will join an almost clique $K$. To see this, note that for every vertex $u\in N(v)$ \emph{except} $2\eps/5\cdot \deg(v)$ vertices, there is 
\begin{align*}
\card{N(u)\sym N(v)}\leq \frac{2\eps}{5}\cdot \max\{\deg(u), \deg(v)\}.
\end{align*}
As such, during the execution of lines~\ref{line:u-updates} and \ref{line:v-updates}, none of these vertices will be classified to $\Vsparse$ by the guarantee in \Cref{prop:sdd-local-alg}. We now discuss the following cases
\begin{enumerate}[label=\alph*).]
\item If the update is induced by $v$ or $u\in N(v)$, then $v$ will join the almost-clique $K$.
\item If the update is induced by $w$ such that $w\in N(u)$, $u\in N(v)$, and $w\not\in N(v)$, then, initially after the execution of lines lines~\ref{line:u-updates} and \ref{line:v-updates}, $v$ would not join the almost-clique $K$. However, we claim that there is 
\begin{align*}
\card{N(v)\cap K} \geq (1-\eps/2) \cdot \deg(v) \geq \frac{1-2\eps/5}{1+\eps/2} \cdot \card{K} \geq (1-\eps)\cdot \card{K},
\end{align*}
where the first inequality uses the fact that only $\eps/2$ fraction of neighbors of $v$ has a large symmetric difference, and the second inequality uses the fact that the symmetric difference is at most $\frac{2\eps}{5}\cdot \max\{\deg(u), \deg(v)\}$. Therefore, during the execution of line~\ref{line:AC-add-vertex} in \Cref{alg:clique-generation}, and by the first bullet of \Cref{lem:dense-merge-test}, $v$ would have joined the almost-clique. 
\end{enumerate}
In both cases, $v$ gets updated and joins an almost-clique, which contradicts the assumption that $v$ is \emph{not} $\eps/8$-sparse (at most $\eps/8$-sparse). Therefore, $v$ must be at least $\eps/8$-sparse at all times.
\end{proof}

An illustration of the proof of \Cref{lem:not-update-sparse-guarantee} can be found in \Cref{fig:without-update-sparse}.
\Cref{lem:not-update-AC-guarantee} and \Cref{lem:not-update-sparse-guarantee} ensures that at the point we perform updates for the $t$-th algorithm update, the vertices that are not updated cannot become ``too bad'' since otherwise the updates would have occurred. % We now show that the 
% Next, we show that if we start with $\SDD_{G_{t-1}, \eps}$ that satisfies the properties in \Cref{lem:sparse-dense-decomp-maintain}, our set of $\Vsparse$ among the \emph{updated vertices} after the steps in lines~\ref{line:u-updates} and \ref{line:v-updates} is indeed sparse.
% \begin{lemma}
% \label{lem:dynamic-alg-approx}
% Before the execution of line~\ref{line:sparse-vertex-removal} of \Cref{alg:dynamic-alg}, all \emph{updated} vertices in $\Vsparse$ of the updated $\SDD_{G, \eps}$ is at least $\frac{\eps}{2}$-sparse.
% \end{lemma}
% \begin{proof}
% We first note that a vertex $v$ is recognized $\Vsparse$ after line~\ref{line:sparse-vertex-removal} 
% 	\chen{An updated vertex can be sparse only if it is recognized either by the local SDD with sparsity parameter $\eps/2$ or by the testing procedure with sparsity parameter $2\eps>>\eps/2$.}
% \end{proof}
We now show that during the execution of the update step, the algorithm does not return ``fail'' with high probability.
\begin{lemma}
\label{lem:alg-not-fail}
With high probability, after the execution of lines~\ref{line:u-updates}, \ref{line:v-updates}, \ref{line:sparse-vertex-removal}, and \ref{line:almost-clique-dismantle} of \Cref{alg:dynamic-alg}, the algorithm does \emph{not} return ``fail''.
\end{lemma}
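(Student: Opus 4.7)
The plan is to locate every branch in \Cref{alg:dynamic-alg} that emits ``fail'' and verify the hypothesis of \Cref{lem:dense-merge-test} at each invocation. Inspecting the pseudocode, \Cref{alg:dynamic-alg} only calls \Cref{alg:clique-generation} (from lines~\ref{line:u-updates} and \ref{line:v-updates}) and \Cref{alg:vertex-dense-test} (from line~\ref{line:sparse-vertex-removal}). The latter has no failure branch, and within \Cref{alg:clique-generation} the only failure branch is the one inside \Cref{alg:AC-dense-test}, invoked at line~\ref{line:AC-add-vertex}. Hence it suffices to show that each almost-clique $K$ handed to such a call is at most $2\eps$-dense in the global graph $G$; \Cref{lem:dense-merge-test} then guarantees no failure.

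Fix a call of \Cref{alg:clique-generation} with pivot vertex and neighborhood $U$, and let $K_L$ be a local almost-clique produced by the SDD algorithm of \Cref{prop:sdd-local-alg} on $G[U]$ with parameter $\eps/2$. By \Cref{prop:sdd-local-alg}, with high probability $K_L$ is $(\eps/2)$-dense in $G[U]$. Let $K$ denote the filtered clique from line~\ref{line:AC-checking} of \Cref{alg:clique-generation}, i.e.\ the valid members of $K_L$. By construction $\card{K} \geq (1-\eps)\card{K_L}$, and every $v \in K$ satisfies $\deg_G(v) \leq (1+2\eps)\card{K_L}$. A matching lower bound $\deg_G(v) \geq (1-O(\eps))\card{K_L}$ follows from the local almost-clique property: $v$ has at most $(\eps/2)\Delta_{G[U]}(K_L)$ non-neighbors inside $K_L \subseteq U$, and every $K_L$-neighbor of $v$ in $G[U]$ is automatically a neighbor in $G$.

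Given these two-sided degree bounds, the three density conditions for $K$ in $G$ reduce to routine arithmetic. (i) Non-neighbors of $v$ inside $K$ are a subset of non-neighbors inside $K_L$, hence bounded by $(\eps/2)\Delta_{G[U]}(K_L) = O(\eps)\card{K_L}$; (ii) neighbors of $v$ outside $K$ equal $\deg_G(v) - \card{N(v)\cap K}$, which is $O(\eps)\card{K_L}$ using the two-sided degree estimate together with $\card{K} \geq (1-\eps)\card{K_L}$; and (iii) the size sandwich $(1-2\eps)\Delta_G(K) \leq \card{K} \leq (1+2\eps)\Delta_G(K)$ follows because both $\Delta_G(K)$ and $\card{K}$ lie within a $(1\pm O(\eps))$ factor of $\card{K_L}$. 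All three bounds absorb into $2\eps$ with the slack afforded by $\eps \leq 1/500$, so $K$ is at most $2\eps$-dense in $G$, as required.

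The principal obstacle is purely bookkeeping in case (ii): the local almost-clique property is stated in $G[U]$ rather than in $G$, and a vertex of $K_L$ could \emph{a priori} have many additional neighbors in $V \setminus U$. The validity filter in line~\ref{line:AC-forming} of \Cref{alg:clique-generation} is precisely tailored to remove this slack by pinning $\deg_G(v)$ to $\card{K_L}$, so that the external-neighbor count in $G$ cannot exceed the external-neighbor count in $G[U]$ by more than $O(\eps)\card{K_L}$. Finally, a union bound over the polynomially many invocations of \Cref{alg:AC-dense-test} across the entire (polynomially bounded) update sequence, together with the high-probability success of \Cref{prop:sdd-local-alg} and of \Cref{lem:dense-merge-test} at each call, yields that no call returns ``fail'' with high probability, completing the proof.
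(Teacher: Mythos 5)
Your proof is correct and takes essentially the same route as the paper: locate the sole ``fail'' branch inside \Cref{alg:AC-dense-test}, argue that the post-filter clique $K$ passed to it is at most $2\eps$-dense in $G$ by combining the local almost-clique guarantee of \Cref{prop:sdd-local-alg} on $G[U]$ with the two-sided global degree bounds imposed by the validity check of line~\ref{line:AC-forming}, and then invoke \Cref{lem:dense-merge-test}. Your explicit emphasis that the validity filter is what pins $\deg_G(v)$ (and hence the external-neighbor count in $G$ rather than just $G[U]$) is precisely the bookkeeping the paper carries out, just stated a bit more transparently.
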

\begin{proof}
Let $K'$ be an almost-clique formed by the execution of line~\ref{line:AC-forming}. Any vertex $w\in K'$ could have at most $\eps \card{K'}$ vertices outside. Furthermore, let $K$ be the surviving almost-clique after the execution of line~\ref{line:AC-checking}. Note that we have $(1-\eps)\cdot \card{K'} \leq \card{K} \leq \card{K'}$. As such, conditioning on the high-probability success of the algorithm in \Cref{prop:sdd-local-alg}, we have that
\begin{itemize}
\item The number of edges for $w$ going out of $K$ is at most $\frac{\eps}{2}\cdot \card{K'}\leq \frac{\eps}{2\cdot (1-\eps)}\cdot \card{K}\leq \eps \cdot \card{K}$; 
\item The number of non-neighbors for $w$ in $K$ could only \emph{decreases};
\item For a lower bound for the size of $K$, we have that
\begin{align*}
\card{K} & \geq (1-\eps)\cdot \card{K'} \\
& \geq (1-\eps)\cdot (1-\eps/2)\cdot \Delta(K') \tag{by the guarantees of $K'$}\\
& \geq (1-\eps)\cdot (1-\eps/2)\cdot \Delta(K) \tag{$\Delta(K)\geq \Delta(K')$} \\
& \geq (1-\eps/2)\cdot \Delta(K') \tag{holds true for $\eps\leq 1/2$}.
\end{align*}
\item For a upper bound for the size of $K$, we first note that for any vertex $w\in K'$, there is $\deg(w)\geq (1-\eps/2)\cdot \card{K'}$. As such, fix any $w\in K$, we have that
\begin{align*}
\card{K} & \leq \card{K'} \\
& \leq \frac{\deg(w)}{1-\eps/2} \tag{by the guarantees on $\deg(w)$}\\
& \leq \frac{\Delta(K')}{1-\eps/2} \\
& \leq (1+2\eps)\cdot \Delta(K') \tag{holds true for $\eps\leq 1/2$}.
\end{align*}
\end{itemize}
Therefore, we have that $K$ is at most $2\eps$-dense. By \Cref{lem:dense-merge-test}, this means the algorithm does not return ``fail'' conditioning only on the high-probability success of the algorithm in \Cref{prop:sdd-local-alg}.
\end{proof}

We now turn to the analysis of the \emph{updated} vertices.
% The first concern of our algorithm is that the almost-cliques may contain vertices that do \emph{not} satisfy the property as prescribed by \Cref{lem:sparse-dense-decomp-update}. 
We first deal with the almost-cliques; here, a concern is that some updated almost-cliques might be locally dense but globally sparse. As such, we show that with the procedures in lines~\ref{line:sparse-vertex-removal} and \ref{line:almost-clique-dismantle}, the almost-cliques will be indeed ``dense''.

\begin{figure}
    \centering\includegraphics[width=0.6\linewidth]{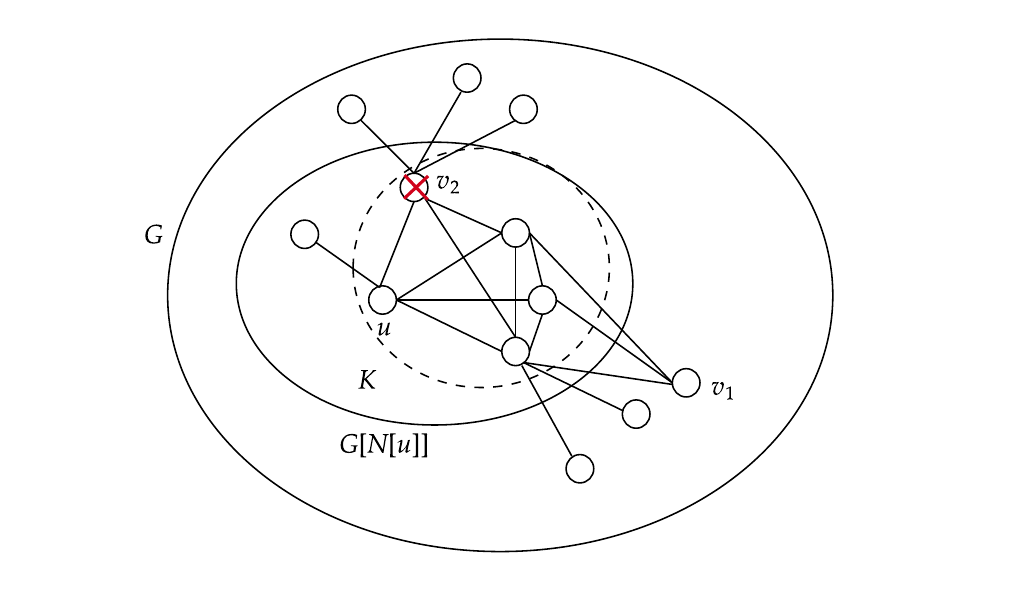}
  \caption{\label{fig:newly-formed-AC}An illustration of the role of \Cref{lem:AC-no-false}. Apart from the almost-clique that is formed locally, vertices like $v_1$ that are ``dense'' w.r.t. $K$ will be added by \Cref{alg:AC-dense-test} (see also~\Cref{fig:add-v-to-ac}). Furthermore, vertices like $v_2$ will be removed if it is sparse (see also~\Cref{fig:local-AC-check}).}
\end{figure}

\begin{lemma}
\label{lem:AC-no-false}
With high probability, after the execution of lines~\ref{line:u-updates}, \ref{line:v-updates}, \ref{line:sparse-vertex-removal}, and \ref{line:almost-clique-dismantle} of \Cref{alg:dynamic-alg}, all \emph{updated} almost-cliques (defined in \Cref{def:updated-vertices}) are at most $20\eps$-dense.
\end{lemma}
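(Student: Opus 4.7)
The plan is to show that any updated almost-clique $K$ satisfies the three requirements of \Cref{def:sdd} with parameter $20\eps$. By inspection of \Cref{alg:clique-generation}, such $K$ is built from a local almost-clique $K'$ produced by the $(\eps/2)$-SDD on $G[U]$: line~\ref{line:AC-forming} marks each vertex of $K'$ as \emph{valid} if its \emph{global} degree is at most $(1+2\eps)|K'|$; line~\ref{line:AC-checking} keeps $K$ only when at least $(1-\eps)|K'|$ members are valid; and line~\ref{line:AC-add-vertex} (which invokes \Cref{alg:AC-dense-test}) may merge additional vertices whose global degree lies in $[(1-2\eps)|K|,(1+2\eps)|K|]$. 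Line~\ref{line:sparse-vertex-removal} may subsequently remove some vertices that test sparse via \Cref{alg:vertex-dense-test}, and line~\ref{line:almost-clique-dismantle} dismantles $K$ entirely once this causes more than $\eps|K|$ removals; thus any surviving $K$ is within $\eps|K|$ of its freshly-built size.

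For the size condition, I would first invoke \Cref{prop:sdd-local-alg} to get $(1-\eps/2)\Delta_{G[U]}(K')\le |K'|\le (1+\eps/2)\Delta_{G[U]}(K')$, and then translate from $G[U]$-degrees to $G$-degrees: the validity filter of line~\ref{line:AC-forming} caps the global degree of every retained vertex at $(1+2\eps)|K'|$, and every vertex merged in via \Cref{alg:AC-dense-test} is constrained to $[(1-2\eps)|K|,(1+2\eps)|K|]$. Combining these with the lower bound that each retained member of $K$ still has at least $(1-\eps/2)|K'|-\eps|K'|$ inside-neighbors pins $\Delta_G(K)$ into the target interval relative to $|K|$. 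For properties~(i) and~(ii), I would sum three contributions: the $(\eps/2)\Delta_{G[U]}(K')$ bound on non-neighbors-inside and neighbors-outside inherited from the local SDD; the at most $\eps|K'|$ valid vertices lost in line~\ref{line:AC-checking} (which can convert an inside-neighbor into an outside-neighbor or produce a non-neighbor); and the at most $4\eps|K|$ non-neighbors-inside contributed per vertex merged in by the second bullet of \Cref{lem:dense-merge-test}. For the outgoing edges the global degree cap of $(1+2\eps)|K|$ is essential because it controls edges to vertices outside $U$ that the local SDD never sees: any retained $v\in K$ has $\deg_G(v)-|N(v)\cap K|$ outside-neighbors, and both terms are now under control. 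With $\eps\le 1/500$, each of the three tallies sums to at most $20\eps\,\Delta_G(K)$.

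The main obstacle will be bridging the gap between the \emph{local} SDD guarantees on $G[U]$ and the \emph{global} statement we need. A vertex $v$ that looks like a fine almost-clique member of $K'$ inside $G[U]$ might in reality be globally sparse, or belong to a much larger almost-clique in $G$, in which case $v$ has many neighbors outside $U$ that are invisible to the local procedure. The validity check is precisely the gadget that excludes such $v$: if $v$'s global degree is much larger than $|K'|$ it is filtered out by line~\ref{line:AC-forming}, and if too many vertices of $K'$ are filtered this way, line~\ref{line:AC-checking} refuses to form $K$ at all. Once $v$'s global degree is pinned close to $|K'|$, every out-of-$K$ edge is counted by $\deg_G(v)-|N(v)\cap K|$ and can be controlled by the local inside-non-neighbor bound. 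The secondary concern of vertices pulled out by the sparsification test of line~\ref{line:sparse-vertex-removal} is capped at $\eps|K|$ by the safeguard in line~\ref{line:almost-clique-dismantle} and is absorbed into the constants, so the whole argument closes into a single inequality of the form $(\tfrac{\eps}{2}+\eps+4\eps+O(\eps))\Delta_G(K)\le 20\eps\,\Delta_G(K)$.
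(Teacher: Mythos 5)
Your plan follows essentially the same route as the paper: form $K$ from a local $(\eps/2)$-SDD, use the global-degree validity filter of line~\ref{line:AC-forming} and the $(1-\eps)$-fraction check of line~\ref{line:AC-checking} to bridge from local to global properties, merge additional vertices via \Cref{alg:AC-dense-test}, and observe that line~\ref{line:sparse-vertex-removal}/\ref{line:almost-clique-dismantle} only shrink $K$ by at most an $\eps$ fraction. You also correctly identify the key accounting identity that for a retained $v\in K$, the outside-neighbor count is $\deg_G(v)-\card{N(v)\cap K}$, which the validity cap and the local inside-non-neighbor bound jointly control. These are exactly the ingredients of the paper's argument.

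There is, however, one genuine missing step. You bound the non-neighbors-inside of each \emph{merged} vertex by $4\eps\card{K}$ (contrapositive of the second bullet of \Cref{lem:dense-merge-test}), but you never bound the \emph{number} of vertices that \Cref{alg:AC-dense-test} merges into $K$. That count is needed for two things your proof tacitly assumes: (a) a pre-existing vertex $x\in K$ can acquire non-neighbors among the newly merged vertices, so the increase in $\card{K\setminus N(x)}$ for such an $x$ is capped only by the merge count, not by \Cref{lem:dense-merge-test}; and (b) the size $\card{K}$ can grow past $\card{K'}$ by the merge count, and your size bounds and the degree-cap argument $\deg_G(v)\le (1+2\eps)\card{K'}$ both become circular without pinning how far $\card{K}$ can drift from $\card{K'}$. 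The paper closes this gap by a counting argument: every vertex of $K$ after line~\ref{line:AC-checking} has at most $O(\eps)\cdot k$ edges leaving $K$, so the total outgoing edge budget is $O(\eps)\cdot k^2$, while each merged vertex must claim at least $(1-4\eps)k'$ of those edges to pass the test in \Cref{alg:AC-dense-test}; dividing gives at most $4\eps k$ merged vertices. Without some analogue of this count, your ``sum of three contributions'' does not compile into a single clean $O(\eps)\Delta_G(K)$ bound.
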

\begin{proof}
% \chen{Add a picture for this proof.}
The only possible almost-cliques that are \emph{updated} are the ones formed by the invocation of line~\ref{line:SDD-alg-invoke} of \Cref{alg:clique-generation} by lines~\ref{line:u-updates} and \ref{line:v-updates}. As such, we analyze any fixed vertex $w\in K$ for some almost-clique $K$, and show that the properties as prescribed by \Cref{def:sdd} hold. For the sake of simplicity, we assume w.log. that only line~\ref{line:u-updates} (updates incurred by $u$) in executed since the analysis for line~\ref{line:v-updates} being invoked is essentially the same. 

We let $k$ be the size of the almost-clique that is returned by line~\ref{line:AC-forming} and $k'$ be the size of the al,ost-clique after the execution of line~\ref{line:AC-checking}. These quantities are defined to make it easier to control the quantities in the argument.

Let $N_{N[u]}(w)=N(w)\cap N[U]$ be the set of neighbors of $w$ that are in $N[u]$. We first argue that if $w$ is added to the almost-clique by line~\ref{line:SDD-alg-invoke} or line~\ref{line:AC-add-vertex} of \Cref{alg:clique-generation}, then $w$ must be ``locally'' dense. More formally, we show that
\begin{claim}
\claimlab{add-ac-vertex-dense}
Let $w$ be a vertex added to an almost-clique $K$ by line~\ref{line:SDD-alg-invoke} or line~\ref{line:AC-add-vertex}. Then, with high probability, there must be
\begin{itemize}
\item $w$ has at most $8\eps \card{K}$ non-neighbors inside $K$; and
\item $(1-2\eps)\cdot \card{N_{N[u]}(w)} \leq \card{K} \leq (1+10\eps)\cdot \card{N_{N[u]}(w)}$.
\end{itemize}
\end{claim}
\begin{proof}
If $w$ is added by line~\ref{line:SDD-alg-invoke}, then by the guarantee of \Cref{prop:sdd-local-alg}, we have that $w$ has at most $2\eps k$ non-neighbors inside $K$. Furthermore, we note that since $w$ has at most $\eps k$ non-neighbors inside $K$, there is $\card{N_{N[u]}(w)}\geq (1-\eps)\cdot k$, which implies $k\leq (1+2\eps)\cdot \card{N_{N[u]}(w)}$ for any $\eps\leq \frac{1}{2}$. Therefore, we establish that $(1-\eps)\cdot \card{N_{N[u]}(w)}\leq k\leq (1+2\eps)\cdot \card{N_{N[u]}(w)}$. Furthermore, since we remove at most $\eps k$ vertices during the process, we have that $(1-\eps)\cdot k\leq k'\leq k$. 

We now proceed to the vertices added by line~\ref{line:AC-add-vertex}.
If $w$ is added by line~\ref{line:AC-add-vertex}, there are two concerns we need to handle: we need to verify $w$ itself indeed satisfies the conditions in \claimref{add-ac-vertex-dense}, and we need to prove that the newly-added vertices do \emph{not} lead to the break of conditions for any other vertex $x$ added by line~\ref{line:SDD-alg-invoke}. For a single vertex $w$, we can directly use the second bullet of \Cref{lem:dense-merge-test} to show that $w$ has at most $4\eps k'$ non-neighbors inside $K$. Furthermore, in line~\ref{line:AC-forming} of \Cref{alg:clique-generation}, every vertex $v\in \card{K}$ has at most $3\eps k$ vertices \emph{outside} $\card{K}$, and at most $\frac{3\eps}{1-\eps}\cdot k$ vertices \emph{outside} $\card{K}$ after the execution of line~\ref{line:AC-checking}. Therefore, the total number of edges going out of $K$ is at most $\frac{3\eps}{1-\eps}\cdot k^{2}$.

Note that by the second bullet of \Cref{lem:dense-merge-test}, every vertex being added to $K$ has to have at least $(1-4\eps)\cdot k'$ vertices in the almost-clique. As such, the number of vertices added by line~\ref{line:AC-add-vertex} is at most
\begin{align*}
\frac{3\eps}{1-\eps}\cdot k^{2}\cdot \frac{1}{(1-4\eps)\cdot k'}\leq 4\eps\cdot k,
\end{align*}
where the last inequality follows for every $\eps\leq \frac{1}{50}$. Therefore, inside the updated $\card{K}$, each vertex could have at most $3\eps \cdot k + 4\eps \cdot k=7\eps\cdot k$ non-neighbors in the almost-clique.

Finally, note that during the process of lines~\ref{line:AC-checking} and \ref{line:AC-add-vertex}, the size of the almost-clique could become at most $(1-\eps)$ time smaller and $(1+4\eps)$ times larger. As such, we have that
\begin{align*}
(1-\eps)\cdot k \leq \card{K} \leq (1+4\eps)\cdot k' \leq  (1+4\eps)\cdot  (1+2\eps)\cdot k.
\end{align*}
Therefore, we can summarize the above inequalities, and obtain that for each $w$, the number of non-neighbors of $w$ inside $K$ is at most 
\[7\eps \cdot k\leq \frac{7}{1-\eps}\cdot \eps \cdot\card{K}\leq 8\eps\cdot \card{K},\]
where the last inequality holds for every $\eps\leq \frac{1}{500}$.
Furthermore, for the size bound of $K$, we can similarly obtain that
\begin{align*}
& \card{K} \geq (1-\eps)\cdot (1-\eps)\cdot\card{N_{N[u]}(w)} \geq (1-2\eps)\cdot \card{N_{N[u]}(w)}\\
& \card{K} \leq (1+2\eps)^{2}\cdot (1+4\eps)\cdot \card{N_{N[u]}(w)} \leq (1+10\eps)\cdot \card{N_{N[u]}(w)},
\end{align*}
where the above inequalities hold for $\eps\leq 1/500$. This gives the desired statement of \claimref{add-ac-vertex-dense}.
\myqed{\claimref{add-ac-vertex-dense}}
\end{proof}

By \claimref{add-ac-vertex-dense}, the only concern at this point is that vertex $w\in K$ might have too many neighbors outside $K$ (and in $V\setminus N[u]$). We exactly handle this by using running line~\ref{line:sparse-vertex-removal} of \Cref{alg:dynamic-alg}. Concretely, suppose vertex $w\in K$ has more than $8\eps$ neighbors \emph{outside} $K$, we argue that $w$ must be at least $2\eps$ sparse. To see this, let us denote $K^{\text{formed}}$ and $K^{\text{added}}$ as the set of vertices added to $K$ by the end of line~\ref{line:AC-checking} of \Cref{alg:clique-generation} and by \Cref{alg:AC-dense-test}, respectively. We now show that none of the categories could have too many vertices outside $K$.
\begin{itemize}
\item If $w\in K^{\text{formed}}$, then by the condition of line~\ref{line:AC-forming} and \ref{line:AC-checking} of \Cref{alg:clique-generation}, there is $\deg(w)\leq (1+2\eps)\cdot k$. Furthermore, by \claimref{add-ac-vertex-dense}, there are at most $7\eps/(1-\eps)\cdot k$ non-neighbors inside $\card{K}$. As such, the number of non-neighbors of $w$ outside $K$ is at most
\begin{align*}
(1+2\eps)\cdot k - \paren{\card{K}-7\eps/(1-\eps)\cdot k} \leq (1+10\eps)\cdot k - (1-\eps)\cdot k \leq 12\eps \cdot \card{K},
\end{align*}
where the last inequality uses the relationship between $\card{K}$ and $k$.
\item Similarly, if $w\in K^{\text{added}}$, then note that by the requirement of \Cref{alg:AC-dense-test}, the degree of $w$ is at most $(1+2\eps)\cdot k'\leq (1+2\eps)^{2}(1+4\eps)\cdot k$. Therefore, the number of neighbors going \emph{out} of $K$ from $w$ can be at most 
\begin{align*}
(1+2\eps)^{2}(1+4\eps)\cdot k - \paren{\card{K}-7\eps/(1-\eps)\cdot k} \leq (1+18\eps)\cdot k - (1-\eps)\cdot k \leq 20\eps \cdot \card{K},
\end{align*}
where the inequalities also follow from $\eps\leq \frac{1}{500}$.
\end{itemize}

Finally, during the execution of line~\ref{line:sparse-vertex-removal}, no almost-clique will be \emph{updated} (i.e., only vertex removal would happen to the almost-clique). As such, we reach our conclusion that the almost-clique has to be at most $20\eps$-dense.
% \chen{Somehow we need to say that if the AC contains $>2\eps$-fraction of vertices that cannot be locally bundled, all of them will be $2\eps$-sparse. Therefore, they will be identified during the process of line~\ref{line:sparse-vertex-removal}, and the almost cliques are dismantled.}
\end{proof}

Note that our analysis has \emph{not} finished here: %, and there are still two aspects we need to take care of. These aspects are $i).$ 
we need to show that for all \emph{updated} vertices, the procedure in lines~\ref{line:sparse-vertex-removal} and \ref{line:almost-clique-dismantle} does \emph{not} introduce to $\Vsparse$ with vertices that are \emph{not} sparse.
% ; and $ii).$ we need to show that for all \emph{other} vertices, the update steps do \emph{not} break the requirements of the vertices as prescribed in \Cref{lem:sparse-dense-decomp-update}. 
We now handle the concern with the following lemma.
\begin{figure}
  \centering
  \begin{subfigure}[t]{.43\linewidth}
    \centering\includegraphics[width=\linewidth]{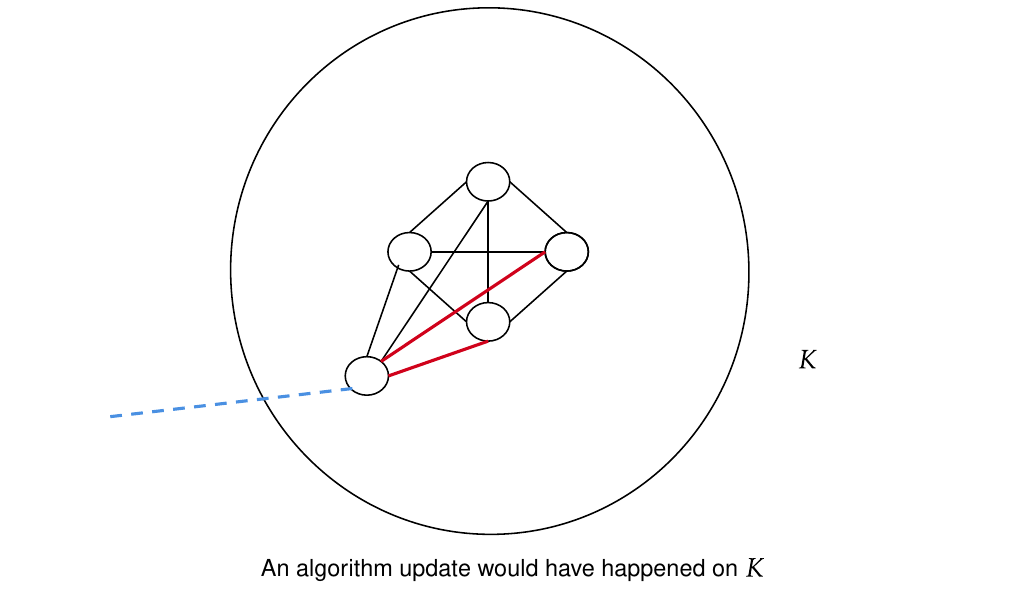}
    \caption{}
    \label{fig:sparse-vertex-update-case-a}
  \end{subfigure}
  \begin{subfigure}[t]{.43\linewidth}
    \centering\includegraphics[width=\linewidth]{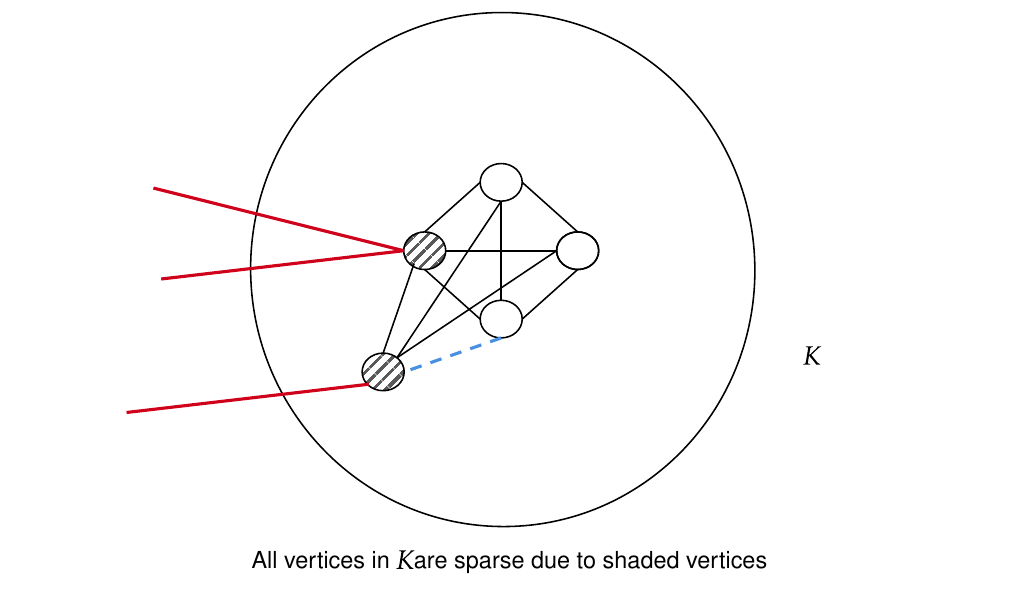}
    \caption{}
    \label{fig:sparse-vertex-update-case-b}
  \end{subfigure}
  \caption{An illustration of the case analysis in \Cref{lem:sparse-no-false}. \Cref{fig:sparse-vertex-update-case-a}: if a vertex $w\in K$ is initially somehow sparse, then $w$ could not be made very dense since otherwise, an algorithm update would happen. \Cref{fig:sparse-vertex-update-case-b}: if a vertex $w\in K$ is initially very dense, then the additional insertions that makes other vertices in $K$ sparse will also make $w$ sparse.}
\end{figure}
\begin{lemma}
\label{lem:sparse-no-false}
With high probability, after the execution of lines~\ref{line:u-updates}, \ref{line:v-updates}, \ref{line:sparse-vertex-removal}, and \ref{line:almost-clique-dismantle} of \Cref{alg:dynamic-alg}, all \emph{updated vertices} in $\Vsparse$ are at least $\frac{\eps}{4}$-sparse.
\end{lemma}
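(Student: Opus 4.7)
The plan is to case-split on which line of \Cref{alg:dynamic-alg} inserted the updated vertex $v$ into $\Vsparse$. Since \Cref{alg:clique-generation} does not itself add vertices to $\Vsparse$ (its step~6 sends ``other'' vertices back to their prior $\SDD_{G,\eps}$ label), the only two routes are: (i) $v$ was declared sparse by \Cref{alg:vertex-dense-test} in line~\ref{line:sparse-vertex-removal}, or (ii) $v$ belonged to an almost-clique $K$ that was dismantled in line~\ref{line:almost-clique-dismantle}. Case (i) is immediate: the second bullet of \Cref{lem:sparse-split-test} already guarantees that any vertex added to $\Vsparse$ by \Cref{alg:vertex-dense-test} is at least $38\eps$-sparse, which is much stronger than the required $\eps/4$ bound. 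All the work lies in case (ii).

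For case (ii), I would further split based on whether $v$ was itself at least $\tfrac{2\eps}{5}$-sparse at the moment $K$ was formed (\Cref{fig:sparse-vertex-update-case-a}) or denser than that (\Cref{fig:sparse-vertex-update-case-b}). In the first subcase, the crucial observation is that $v$'s counter $c_v$ must not have crossed $\tfrac{\eps}{10}\degT(v)$ between formation and now, because otherwise line~\ref{line:u-updates} or \ref{line:v-updates} would have triggered on $v$ and $v$ would have been re-evaluated via line~\ref{line:sparse-vertex-removal} at that earlier step; the same is true of each witnessing neighbor $u$ of $v$. Consequently the degrees and neighborhoods drift by at most an $O(\eps)$ multiplicative factor since formation, and a direct accounting propagates the $\tfrac{2\eps}{5}$-symmetric-difference witnesses from formation into $\eps/4$-sparsity witnesses at the current step, analogous to the bookkeeping already carried out in \Cref{lem:not-update-AC-guarantee}.

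In the second subcase, I would exploit the dismantling trigger itself. By line~\ref{line:almost-clique-dismantle}, at least $\eps\card{K}$ vertices $z \in K$ have left $K$ since $K$ was formed, either (a) because some earlier run of \Cref{alg:vertex-dense-test} classified $z$ as sparse (making $z$ at least $38\eps$-sparse by \Cref{lem:sparse-split-test}), or (b) because $z$ was absorbed into a different almost-clique $K'$ by a local SDD invocation. By the induction hypothesis $K$ was at most $20\eps$-dense at formation, so $v$ and $z$ shared all but $O(\eps)\card{K}$ common neighbors at that time. I would then argue that the edge updates responsible for each $z$'s departure from $K$ must register as contributions to $N(v)\sym N(z)$: in sub-route (a), the $\Omega(\eps)\deg(z)$ sparse-neighbor witnesses of $z$ are concentrated --- by a counting argument using $\card{K}$ versus the total degree --- on vertices that were once in $K$ together with $v$, hence they also witness asymmetry from $v$'s perspective; in sub-route (b), $K'$ must overlap with $V\setminus K$ significantly, forcing either many edges $(z,x)$ with $x\notin N(v)$ or many deletions of edges $(v,x)$ shared at formation. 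Summing contributions over the $\Omega(\eps\card{K})$ departed vertices, and using $\deg(v)=\Theta(\card{K})$ from the AC-density property, yields the required $\eps/4$ fraction of high-symmetric-difference neighbors. The main obstacle I expect is precisely the accounting in sub-route (b): tracking how absorption of $z$ into a different AC forces a large $N(v)\sym N(z)$ requires combining the density properties of \emph{both} $K$ and $K'$ with the constraint that $v$'s own counter has not triggered a local update, and this combinatorial bookkeeping seems the most delicate step.
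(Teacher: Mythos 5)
Your proposal follows essentially the same route as the paper: the same case split (declared sparse directly by \Cref{alg:vertex-dense-test} vs.\ swept up in a dismantle at line~\ref{line:almost-clique-dismantle}), the same use of \Cref{lem:sparse-split-test} for the first case, the same sub-split on whether the remaining vertex was itself $\tfrac{2}{5}\eps$-sparse at $K$'s formation, and the same key idea that the $\geq\eps\card{K}$ departed vertices serve as sparsity witnesses for the survivors. The one place you diverge is the accounting in the dense sub-case: the paper establishes a clean structural claim (its \claimref{removed-vertex-sparse}) that any departed $z$ has $\Omega(\eps k)$ neighbors outside $K$ or non-neighbors inside $K$, which directly forces $\card{N(w)\sym N(z)}$ to be large for every surviving $w$; your sub-route (a) instead proposes propagating $z$'s own sparse-neighbor witnesses via a concentration/triangle-inequality step, which likely can be made to work but is more delicate and not what the paper does, and your sub-route (b) matches the paper's handling of $z$ absorbed into another almost-clique.
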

\begin{proof}
% \chen{Add a picture for this proof.}
We again assume w.log. that only the updates on $u$ (line~\ref{line:u-updates}, \ref{line:sparse-vertex-removal}, and \ref{line:almost-clique-dismantle}) are executed. Note that a \emph{updated} vertex $w$ could be added to $\Vsparse$ by lines~\ref{line:sparse-vertex-removal} and \ref{line:almost-clique-dismantle}. If the vertex is added by line~\ref{line:sparse-vertex-removal}, then we can apply the (contra-positive of the) second bullet of \Cref{lem:sparse-split-test} to show that the vertex has to be at least $38\eps$-sparse, which implies that the vertex is at least $2\eps$-sparse.

On the other hand, if a vertex $w$ is added to $\Vsparse$ by line~\ref{line:almost-clique-dismantle}, we show that $w$ is still at least $\eps$-sparse. Note that in this case, there has to exist an almost-clique $K$ such that $w\in K$ before it becomes a sparse vertex. Let $K^{\text{sparse}}$ be the set of vertices that is removed from $K$ since it is formed, and let $k$ be the size of the almost-clique $K$ before the removal of any vertex. We first claim that in the graph $G_{t}$ (the graph of the current update step), the following properties are true.
\begin{claim}
\claimlab{removed-vertex-sparse}
By the time a vertex $v$ joins $K^{\text{sparse}}$, it must satisfy the following properties:
\begin{itemize}
\item either $v$ has at least $27\eps k$ neighbors \emph{outside} $K$;
\item or there $v$ has at least $27\eps k$ non-neighbors inside $K$.
\end{itemize}
\end{claim}
\begin{proof}
The proof of the above statement is as follows. Note that there are two ways for $v$ to be removed from $K$: either it becomes a member of another almost-clique, or it becomes a sparse vertex by line~\ref{line:sparse-vertex-removal} (\Cref{alg:vertex-dense-test}). In the former case, let $K'$ be the new almost-clique to which $v$ belongs, and all the edges to $K\setminus K^{\text{sparse}}$ would be considered edge  \emph{going out of} $K'$. As such, the number of edges for $v$ to be \emph{outside} $K$ should be at least
\begin{align*}
(1-\eps)k\cdot \frac{1-40\eps}{40\eps} \geq 26\eps k,
\end{align*}
where the last inequality holds for $\eps\leq 1/500$. This is as desired by the statement of \claimref{removed-vertex-sparse}.

On the other hand, if $v$ is removed from $K$ by line~\ref{line:sparse-vertex-removal} (\Cref{alg:vertex-dense-test}), then $v$ has to be $38\eps$-sparse the moment it is removed from $K$ (by \Cref{lem:sparse-split-test}). We claim that at this point, there must be
\begin{itemize}
\item either $v$ has at least $27\eps k$ neighbors \emph{outside} $K$;
\item or there $v$ has at least $27\eps k$ non-neighbors inside $K$.
\end{itemize}
To see this, suppose for the purpose of contradiction the statement is not true. Then, we have that 
\begin{itemize}
\item The degree of $v$ is comparable to $k$, i.e., $(1-27\eps)\cdot k \leq \deg(v) \leq (1+27\eps)\cdot k$; and
\item The intersection between $N(v)$ and $K$ is significant, i.e., $\card{N(v)\cap K} \geq (1-27\eps) \cdot k$.
\end{itemize}
Furthermore, since we have that $\card{K}\leq k \leq \frac{\card{K}}{1-\eps}$, we have that for all vertices $w\in K$, there is
\begin{align*}
\card{N(v)\sym N(w)}\leq 27\eps\cdot k \leq 38\cdot \eps\cdot \max\{\deg(v), \deg(w)\},.
\end{align*}
where the last inequality works for $\eps\leq 1/500$.
Furthermore, by the above calculation, we have that $\card{N(v)\setminus N(w)}\leq 26\eps k \leq 38\cdot \deg(v)$. As such, the vertex cannot be $38\cdot \eps$-sparse, which forms a contradiction. As such, one of the conditions in the claim statement has to be met.\myqed{\claimref{removed-vertex-sparse}}
% Finally, we note that none of the surviving vertices $w\in K$ have been updated. As such, there are no updates \emph{induced} by $v\in K^{\text{sparse}}$ between the time when $v$ becomes a sparse vertex and the current update (i.e., no execution of lines~\ref{line:u-updates} and \ref{line:v-updates} induced by $v$). This means the degree of $v$ has changed by at most an $\eps/100$ multiplicative factor, which means the number of eligible edges has to be at least $27\eps k \cdot (1-\eps/100)\geq 26\eps k$, which is as desired by the claim statement. 
\end{proof}

For vertices in $K\setminus K^{\text{sparse}}$ that are added to $\Vsparse$ by line~\ref{line:almost-clique-dismantle}, we analyze the cases as follows.
\begin{enumerate}[label=\roman*).]
\item If $w$ is initially at least $\frac{2}{5}\cdot \eps$ sparse. In this case, claim that there must exist a subset of vertices $S(w)\subseteq N(w)$, such that $\card{S(w)}\geq \frac{\eps}{4}$, and for every vertex $x\in S(w)$, there is
\begin{align*}
\card{N(w)\sym N(z)}\geq \frac{\eps}{4}\cdot \max\{\deg(w), \deg(z)\}.
\end{align*}
The claim simply follows from the guarantee of \Cref{prop:sdd-local-alg}. More concretely, suppose the condition is not true. Since $w$ starts with being at least $\frac{2}{5}\cdot \eps$ sparse, and we make updates every $\frac{\eps}{100}\cdot \degT(v)$ steps, we have that
\begin{align*}
\frac{\eps}{4(1-\eps/100)}\leq \frac{2}{5}\cdot \eps<\frac{\eps}{2}.
\end{align*}
Therefore, by the guarantees in \Cref{prop:sdd-local-alg}, the almost-clique must have been updated for $w$, which formed a contradiction with the necessary condition to enter line~\ref{line:almost-clique-dismantle}.
\item If $w$ is \emph{not} initially at least $\frac{2}{5}\cdot \eps$ sparse. In this case, note that we have
\begin{align*}
\frac{2\eps}{5(1-\eps/100)}<\eps/2
\end{align*}
for $\eps\leq 1/500$, which means $v$ could not have entered lines~\ref{line:u-updates} and \ref{line:v-updates} since otherwise it would also update the almost-clique. As such, there are no updates on $v\in K^{\text{sparse}}$ between the time when $v$ becomes a sparse vertex and the current update (i.e., no execution of lines~\ref{line:u-updates} and \ref{line:v-updates} induced by $v$). As such, the guarantees by \claimref{removed-vertex-sparse} remain valid.

By the induction hypothesis, when $K$ is formed, the vertices in $K$ are at most $20\eps$-dense. Furthermore, for every $w \in K\setminus K^{\text{sparse}}$, since $w$ has not been updated, we have that
\begin{align*}
& \deg(w) \geq (1-\eps/100)\cdot \frac{\card{K}}{1+20\eps} \geq (1-23\eps)\cdot k; \\
& \deg(w) \leq (1+\eps/100)\cdot \frac{\card{K}}{1-20\eps} \leq (1+23\eps)\cdot k.
\end{align*}
With the above conditions, we claim that for any vertex $v\in K^{\text{sparse}}$ and $w \in K\setminus K^{\text{sparse}}$, there is
\begin{align*}
\card{N(w)\sym N(v)}\geq \frac{\eps}{2}\cdot \max\{\deg(w), \deg(v)\}.
\end{align*}
To prove the above inequality, we perform a case analysis as follows. We assume w.log. that $\deg(v)\geq \deg(w)$ since the other case follows with the same analysis. We now analyze the following sub-cases.
\begin{enumerate}[label=\alph*).]
\item If $\deg(v)-\deg(w)\geq \eps/2\cdot \deg(v)$. In this case, we trivially have $\card{N(u)\sym N(v)}\geq \eps/2\cdot \max\{\deg(w), \deg(v)\}$.
\item If $0 \leq \deg(v)-\deg(w)\leq \eps/2\cdot \deg(v)$. In this case, by moving the terms around, we have that
\begin{align*}
\deg(w)\geq (1-\eps/2)\cdot \max\{\deg(w), \deg(v)\}.
\end{align*}
Therefore, we can apply \claimref{removed-vertex-sparse} and argue that 
\begin{align*}
\card{N(w)\sym N(v)} & \geq 26\eps k \\
& \geq \frac{26\eps}{1+23\eps}\cdot \deg(w) \\
& \geq \frac{26\eps}{1+23\eps}\cdot (1-\eps/2)\cdot \max\{\deg(w), \deg(v)\} \\
& \geq  \frac{\eps}{2}\cdot \max\{\deg(w), \deg(v)\},
\end{align*}
where the last inequality holds for $\eps\leq 1/500$.
\end{enumerate}
% \begin{align*}
% k \cdot ({1-22\eps}) \leq \degT(v)\leq \frac{k}{1-22\eps}.
% \end{align*}
% Furthermore, since the update on $v$ has \emph{not} been triggered, for any $w\in K\setminus K^{\text{sparse}}$, we have that 
% \begin{align*}
% & \deg(v)\geq (1-\eps/100)\cdot \degT(v)\geq  (1-\eps/100)\cdot (1-24\eps)\cdot k \geq (1-50\eps)\cdot \deg(w)\\
% & \deg(v)\leq (1+\eps/100)\cdot \degT(v)\leq  (1+\eps/100)\cdot k\leq \frac{1}{1-22\eps}\cdot \deg(w)\leq (1+50\eps)\cdot \deg(w),
% \end{align*}
% where the above inequalities are true for $\eps\leq 1/500$.
% For any vertex $w\in K\setminus K^{\text{sparse}}$ and $v\in K^{\text{sparse}}$, we apply \claimref{removed-vertex-sparse}, which shows that there has to be $\eps k$ symmetric difference between the neighborhood of $v$ and $w$. As such, we can obtain that
% \begin{align*}
% \card{N(u)\sym N(v)}\geq \eps k \geq \eps \cdot \frac{1}{1+50\eps}\cdot \max\{\deg(w), \deg(v)\} \geq \frac{\eps}{2}\cdot \max\{\deg(w), \deg(v)\}.
% \end{align*}
Furthermore, we note that the number of removed vertices in $K^{\text{sparse}}$ is at least
\begin{align*}
\card{K^{\text{sparse}}}\geq \eps k \geq \frac{\eps}{2}\cdot \max\{\deg(w), \deg(v)\},
\end{align*}
where the last inequality again uses $\eps\leq 1/500$.
Therefore, we conclude that $w$ is at least $\eps/2$-sparse.
\end{enumerate}
Summarizing the above cases gives us to desired lemma statement.
\myqed{\Cref{lem:sparse-no-false}}
\end{proof}
An illustration of the case analysis for \Cref{lem:sparse-no-false} can be found in \Cref{fig:sparse-vertex-update-case-a} and \Cref{fig:sparse-vertex-update-case-b}.
% \Cref{lem:sparse-no-false} shows that the update steps in lines~\ref{line:sparse-vertex-removal} and \ref{line:almost-clique-dismantle} do \emph{not} introduce new sparse vertices that are actually `not sparse'. Note that at this point, we have shown the guarantees of the vertices that are \emph{updated}. What remains to show is that the vertices that are \emph{not} updated still satisfy the properties as prescribed by \Cref{lem:sparse-dense-decomp-update} (also \Cref{lem:sparse-dense-decomp-maintain}). To this end, we prove the following lemma.

% \begin{lemma}
% \label{lem:not-update-AC-guarantee}
% After the execution of the update steps of \Cref{alg:dynamic-alg}, the set of vertices that are \emph{not updated} (i.e., vertices in $V\setminus \Vup$) satisfy the property as specified by \Cref{lem:sparse-dense-decomp-maintain}.
% \end{lemma}
% \begin{proof}
% \chen{Add the proof -- this is where we need \Cref{alg:AC-dense-test}} \chen{The fact that the added almost-clique vertices are previous sparse vertices should be used here.}
% \end{proof}

\paragraph{Wrapping up the induction step to prove \Cref{lem:sparse-dense-decomp-update}.} By \Cref{lem:AC-no-false} and \Cref{lem:sparse-no-false}, the newly-formed almost-cliques and sparse vertices satisfy the properties as prescribed by \Cref{lem:sparse-dense-decomp-update}. Furthermore, in \Cref{lem:not-update-AC-guarantee} and \Cref{lem:not-update-sparse-guarantee}, we show that the sparse vertices and almost-cliques that are \emph{not} updated remain in the range of the parameters. As such, we conclude that the properties are followed in the $t$-th algorithm update step, which concludes the inductive proof.

To see why our algorithm is adversarial-robust, note that our randomness in each step of the updates is \emph{independent} of the realization of past randomness. As such, our algorithm works no matter how the insertions and deletions are carried out.

\paragraph{Proof of \Cref{thm:main-alg}.} We simply run the algorithm of \Cref{lem:sparse-dense-decomp-maintain} with the adjacency list of $G^{+}=(V, E^{+})$ as the input graph. By \Cref{lem:sparse-dense-decomp-maintain}, we maintain a sparse-dense decomposition with $\eps$-sparse vertices and $\eps'$-dense almost-cliques, where both $\eps$ and $\eps'$ are constants. Therefore, we could maintain $O(1)$-approximation for the clustering at any time point.

\section{Experimental Evaluation}
\label{sec:experiments}
To demonstrate the practical validity of our algorithm, we evaluate our approach on synthetic as well as real world datasets. 

\subsection*{Setup and Datasets}

We evaluate and compare the correlation clustering cost for dynamic edge updates on synthetic and real world graph of our algorithm against pivot. Note that for edge insertions and deletions against an adaptive adversary, there exists no dynamic algorithm in the literature, hence our comparison is against running pivot after each update using fresh randomness (this by default is adversarially robust but needs $O(n)$ time after each update). For our dynamic SDD implementation, we initialize by marking every vertex as sparse. The algorithm requires setting the $\eps$ parameters, which decides when to trigger an update and also controls the sample size used by subroutines. Empirically, we search for $\eps$ in range $[0.3,0.6]$ in increments of $0.025$ that gives the lowest cost for a small number of updates and use that for our experiments. The $\eps$ parameter used for different settings of our experiments are listed in \Cref{table:exp_eps_values}. 
In practice, we could use the meta-information we have of the instances (e.g., how dense the graph is), and pick parameters accordingly.
% \prat{In practice, .....}

\paragraph{Compute resources:} All experiments were run on Apple Macbook Pro with M1 processor and 16GB RAM.

For synthetic datasets, we use the widely studied stochastic block model (SBM) for generating the graph and clusters (communities) within. For a graph on $n$ vertices, SBMs are paramerized by number of clusters $(k)$, the probability of an edge within a cluster $(p)$ and probability of an edge across clusters $(q)$. For our experiments, we generate SBM with $k = \{4, 10\}$ with $p=0.95$ and $q=0.05$ for $n = \{250, 2000\}$ resapectively. For real world datasets, we use two graphs from SNAP \cite{snapnets} email network (email-Eu-core with $n=1005, m=25571$) and collaboration network (ca-HepTh with $n=9877, m=25998$). We use the adjacency list data structure for storing and updating the graph.

\begin{table}[]
    \centering
    \begin{small}
\begin{tabular}{ccc}
     \toprule
     Graph & $\eps$ for random updates & $\eps$ for targeted updates \\
     \midrule
     SBM $(n=250, k=4)$ & 0.45 & 0.4  \\
     SBM $(n=250, k=10)$ & 0.45 & 0.4 \\
     SBM $(n=2000, k=4)$ & 0.45 & 0.45  \\
     SBM $(n=2000, k=10)$ & 0.425 & 0.45  \\
     email-Eu-core & 0.45 & 0.45  \\
     ca-HepTh & 0.45 & 0.45  \\
     \bottomrule
\end{tabular}
    \caption{Value of $\eps$ parameter used for }
    \label{table:exp_eps_values}
    \end{small}
\end{table}

\paragraph{Simulating the edge updates:}

For each dataset, we simulate two types of edge updates, random and targeted. The targeted edge updates are meant to simulate an adversarial scenario. In random edge updates, we start by fixing a permutation on the edges of the graph (edges that are present in the simulated graph or the actual real-world graph). Starting with an empty graph, we simulate updates by going over the edge permutation. For each new update, we first toss a coin with a fixed probability and insert the next edge in the permutation if heads and delete a random edge (henceforth referred to as deletion probability) from the current state of the graph with tails. In our experiments, we fix the probability of deletion to be $0.2$ for all experiments. For random updates, we simulate for $\max(1.5*m, 500k)$ updates, where $m$ is the number of edges in the original graph.

For targeted updates, we first simulate a random stream for the entire graph with $0$ deletion probability. The goal of targeted updates is to find the two biggest clusters and `destroy' their structure. Towards this end, we use the current pivot clustering to identify the two largest clusters in the graph. After identifying, the next batch of edge updates is selected as (i) deleting the edges within the two respective clusters, and (ii) inserting new edges across the vertices of the two clusters. We use the same deletion probability of $0.2$. Once the batch of updates is exhausted, i.e., we have finished deleting all the within-cluster edges and adding across-cluster edges, the process is repeated with the two new biggest clusters in the current graph. We simulate this for $\max(2.5*m, 500k)$ updates, where the first $m$ updates are the random stream to fill in the graph.

We compare the cost of dynamic SDD and pivot with singleton clustering across updates. In Singleton, each vertex forms its own cluster. The cost reported is the ratio of costs, i.e. cost of SDD of pivot clustering divided by the cost of singleton clustering. We report cost after every 100 updates for random updates and for targeted updates, every 100 updates after the first $(1/2)*m$ updates are done (this is because the trend for these updates will only be similar to random updates).

\subsection*{Summary of experiments}

For random updates on SBM, the experiments start with an empty graph where edges are added or deleted, leading to a sparse graph in the initial stages with small clusters. Pivot greedily creates clusters; as a result, it has worse cost, since keeping vertices as singleton clusters incurs less cost. Since there are no edges at the start, as per the SDD definition, all vertices are sparse and treated as singleton clusters. As more edges are inserted, with our setting of $p=0.95$, the clusters are dense and pivot starts to perform better, but even so, our algorithm consistently maintains a clustering with lower cost, as can be seen in \Cref{fig:sbm_250_4_rand,fig:sbm_250_10_rand,fig:sbm_2000_4_rand,fig:sbm_2000_10_rand}.

For targeted updates on SBM, the updates are simulated in such a way that destroy the clusters and instead make the subgraph restricted to those clusters complete bipartite. 
Here we observe that for small number of clusters $(k=4)$ our algorithm outputs clustering with better cost \Cref{fig:sbm_250_4_adv,fig:sbm_2000_4_adv}, and for more clusters $(k=10)$ the clustering given by our algorithm has more stable cost, whereas for pivot the cost fluctuates a lot, even being worse than singleton clustering \Cref{fig:sbm_250_10_adv,fig:sbm_2000_10_adv}.

Experiments on real-world datasets have a similar trend. Since real-world graphs are often sparse, pivot ends up incurring higher cost for both random and targeted updates, and the cost tends to fluctuate. On the other hand, the cost of our algorithm tends to be very stable and smaller than pivot (\Cref{fig:email_rand,fig:email_adv,fig:hepth_rand,fig:hepth_adv}).
For random updates, it appears that singleton clustering gives relatively low costs (i.e., it is hard for both algorithms to significantly outperform Singleton).
We believe this is because the real-world graphs we picked are very sparse, and testing the algorithms on dense large-scale real-world graphs can be an interesting direction for future explorations.

\FloatBarrier

\begin{figure}[!htb]
  \centering
  \begin{subfigure}[t]{.24\linewidth}
    \centering\includegraphics[width=\linewidth]{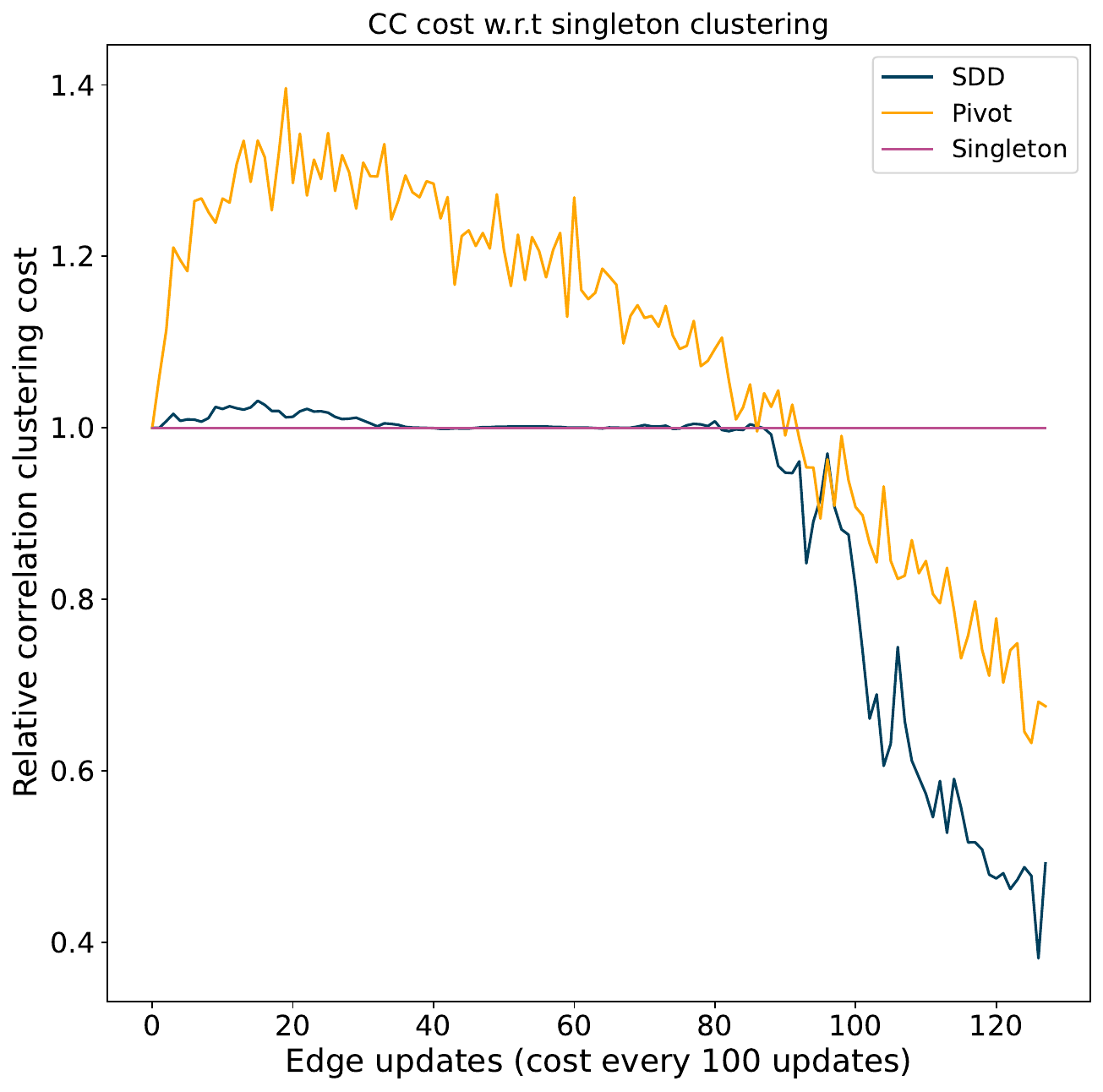}
    \caption{Random updates}
    \label{fig:sbm_250_4_rand}
  \end{subfigure}
  \begin{subfigure}[t]{.24\linewidth}
    \centering\includegraphics[width=\linewidth]{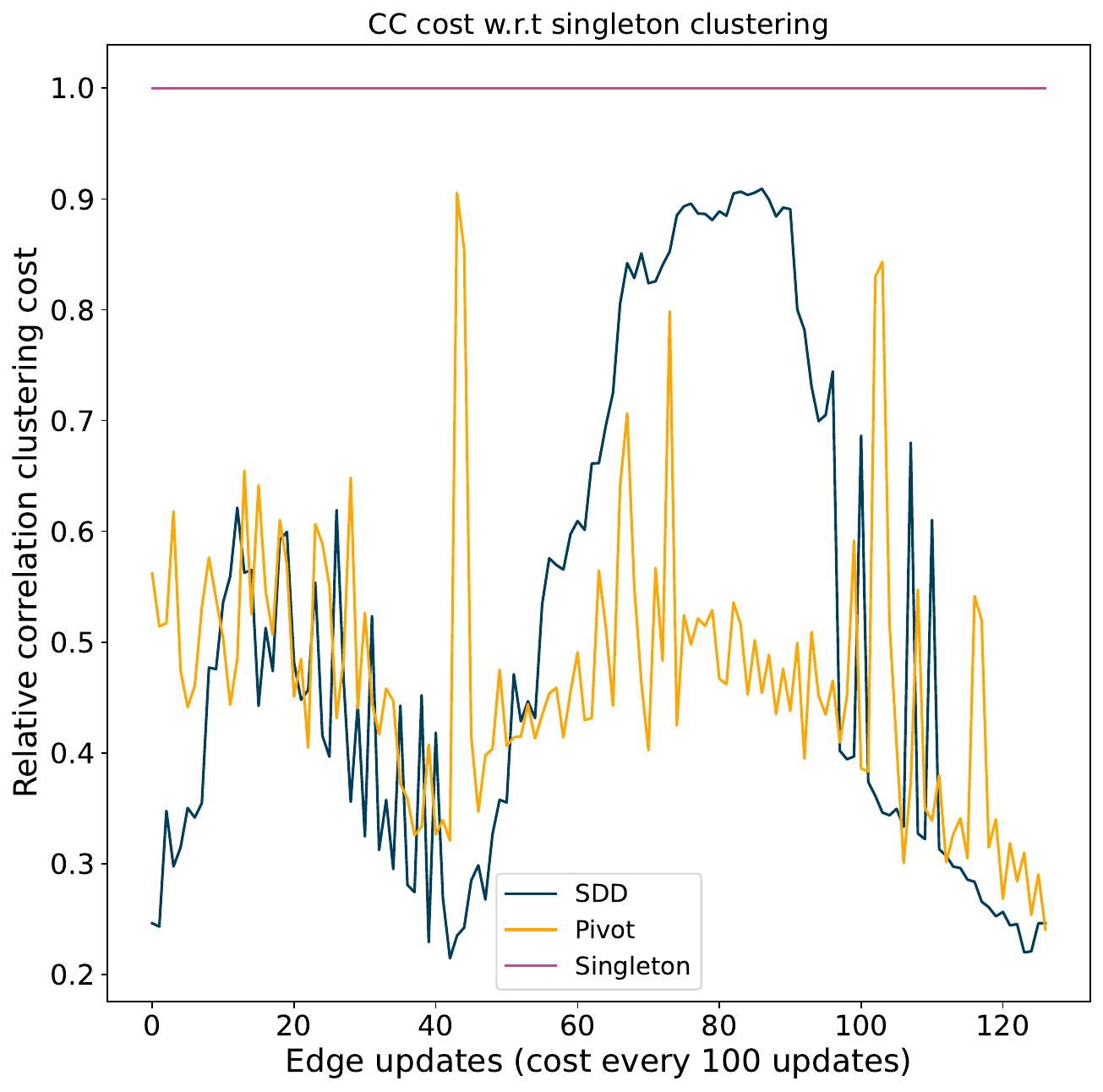}
    \caption{Targeted updates}
    \label{fig:sbm_250_4_adv}
  \end{subfigure}
  \begin{subfigure}[t]{.24\linewidth}
    \centering\includegraphics[width=\linewidth]{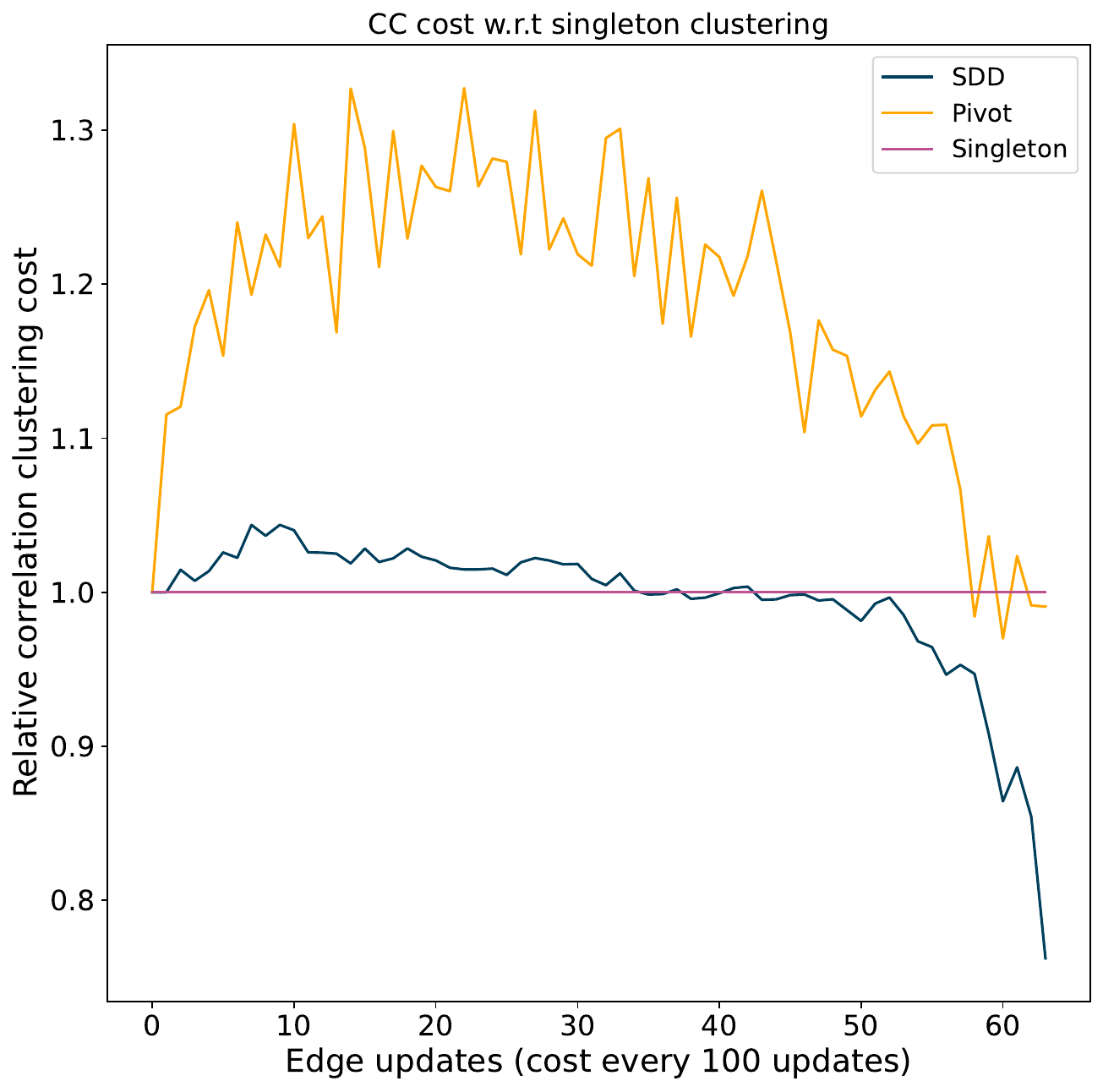}
    \caption{Random updates}
    \label{fig:sbm_250_10_rand}
  \end{subfigure}
  \begin{subfigure}[t]{.24\linewidth}
    \centering\includegraphics[width=\linewidth]{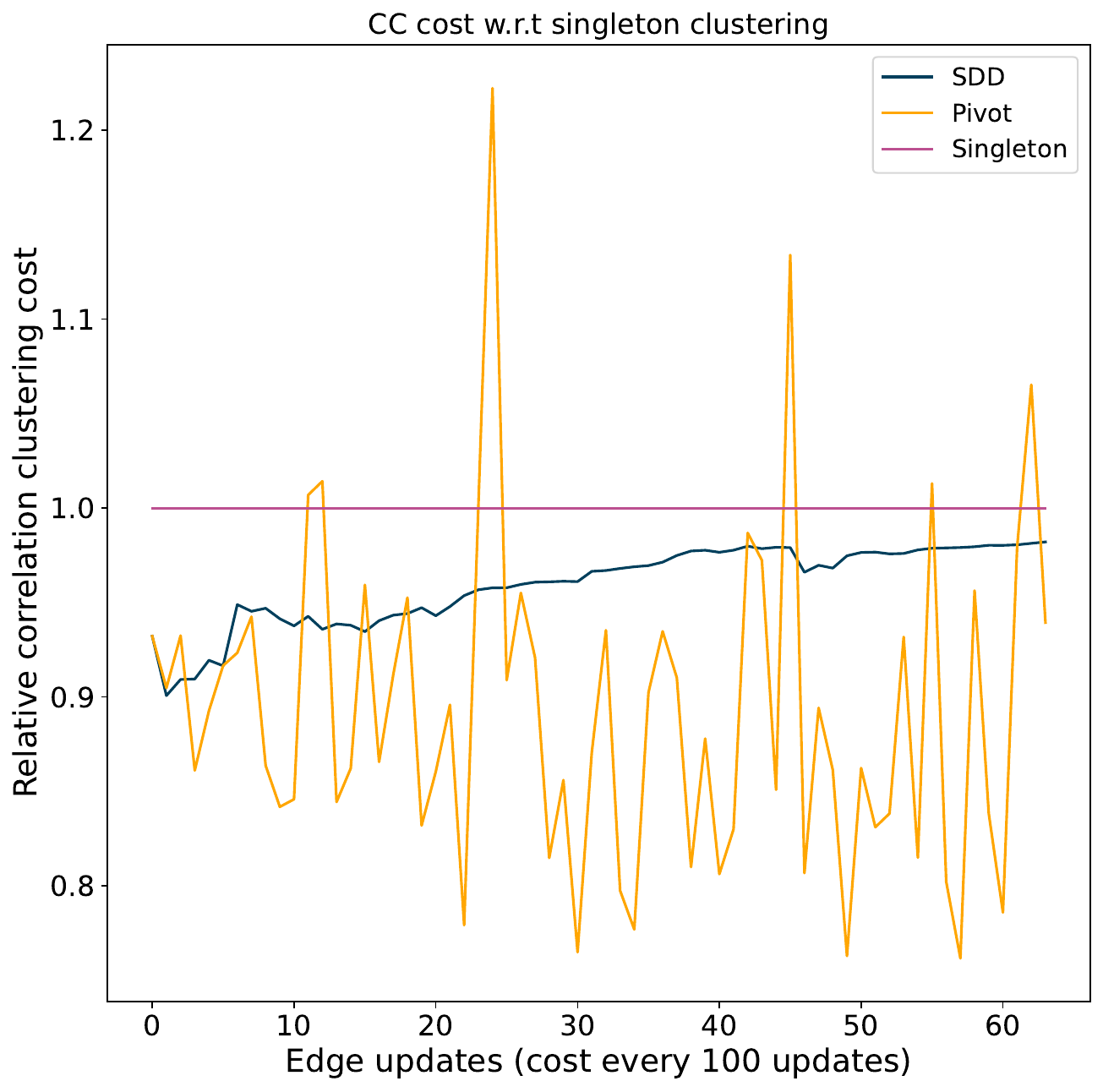}
    \caption{Targeted updates}
    \label{fig:sbm_250_10_adv}
  \end{subfigure}
  \caption{Plot of CC cost ratio w.r.t singletons for SBM: $n=250$ and $k=4$ (a,b) and $k=10$ (c,d).}
\end{figure}

\begin{figure}[!htb]
  \centering
  \begin{subfigure}[t]{.24\linewidth}
    \centering\includegraphics[width=\linewidth]{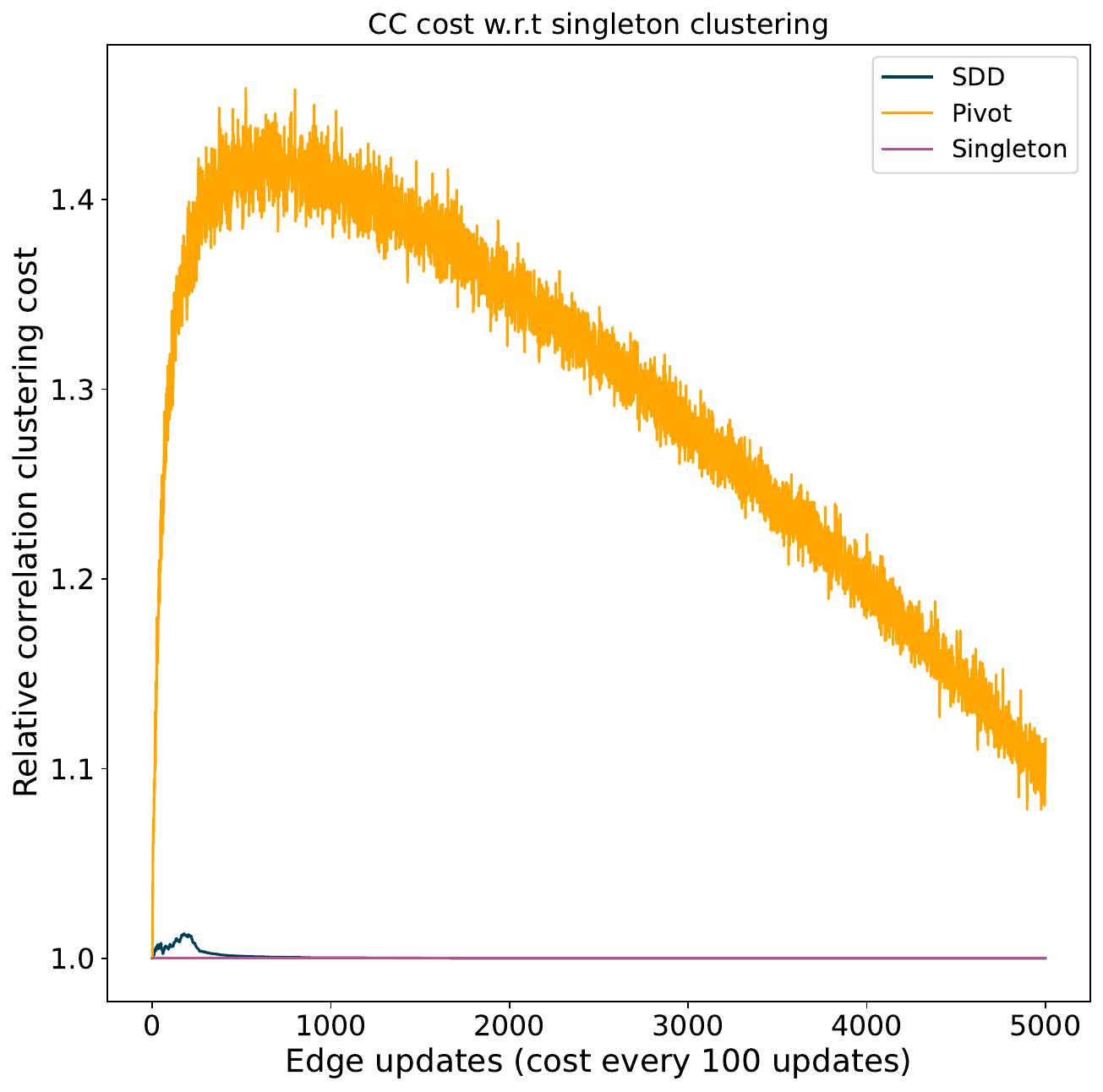}
    \caption{Random updates}
    \label{fig:sbm_2000_4_rand}
  \end{subfigure}
  \begin{subfigure}[t]{.24\linewidth}
    \centering\includegraphics[width=\linewidth]{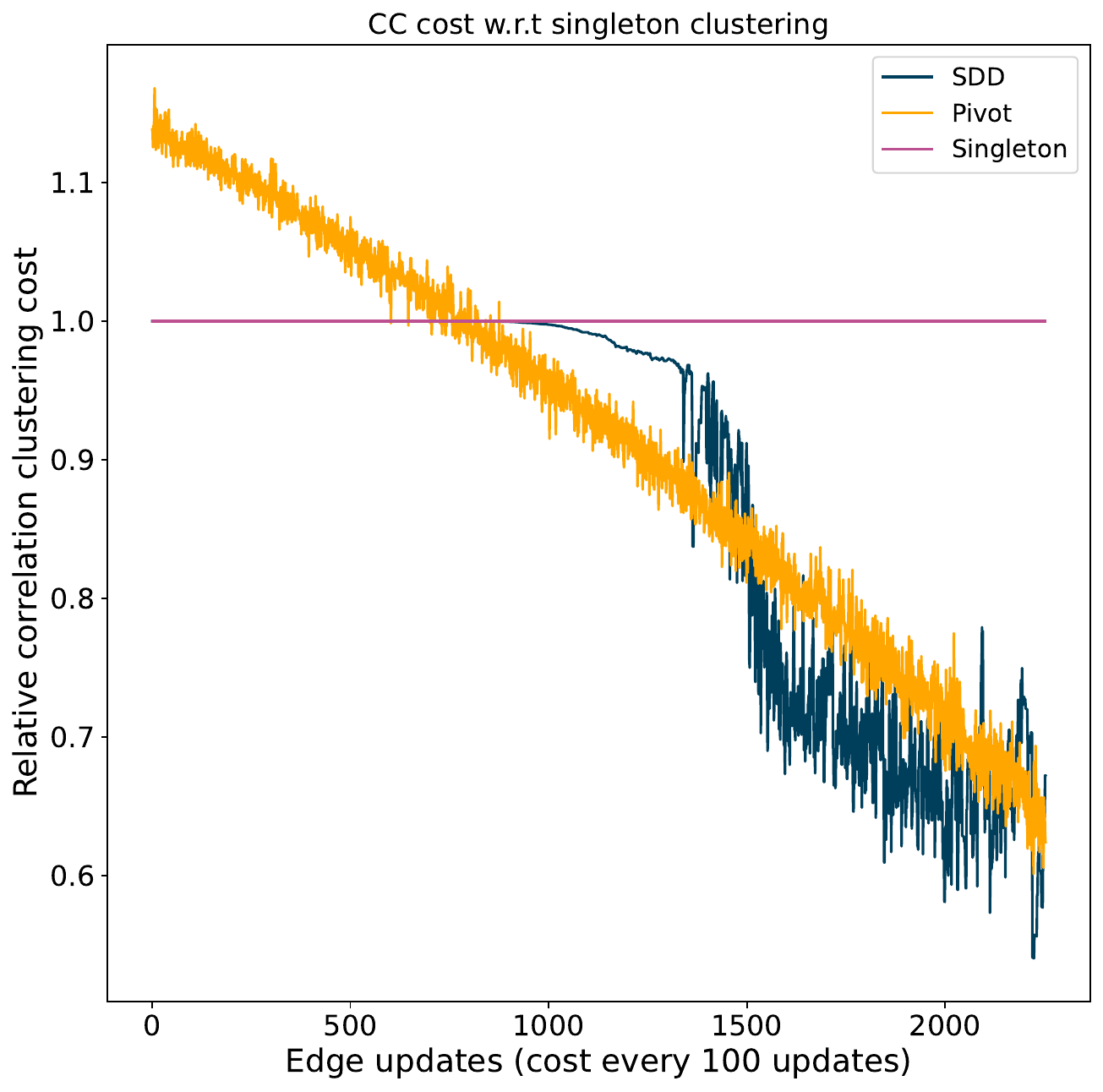}
    \caption{Targeted updates}
    \label{fig:sbm_2000_4_adv}
  \end{subfigure}
  \begin{subfigure}[t]{.24\linewidth}
    \centering\includegraphics[width=\linewidth]{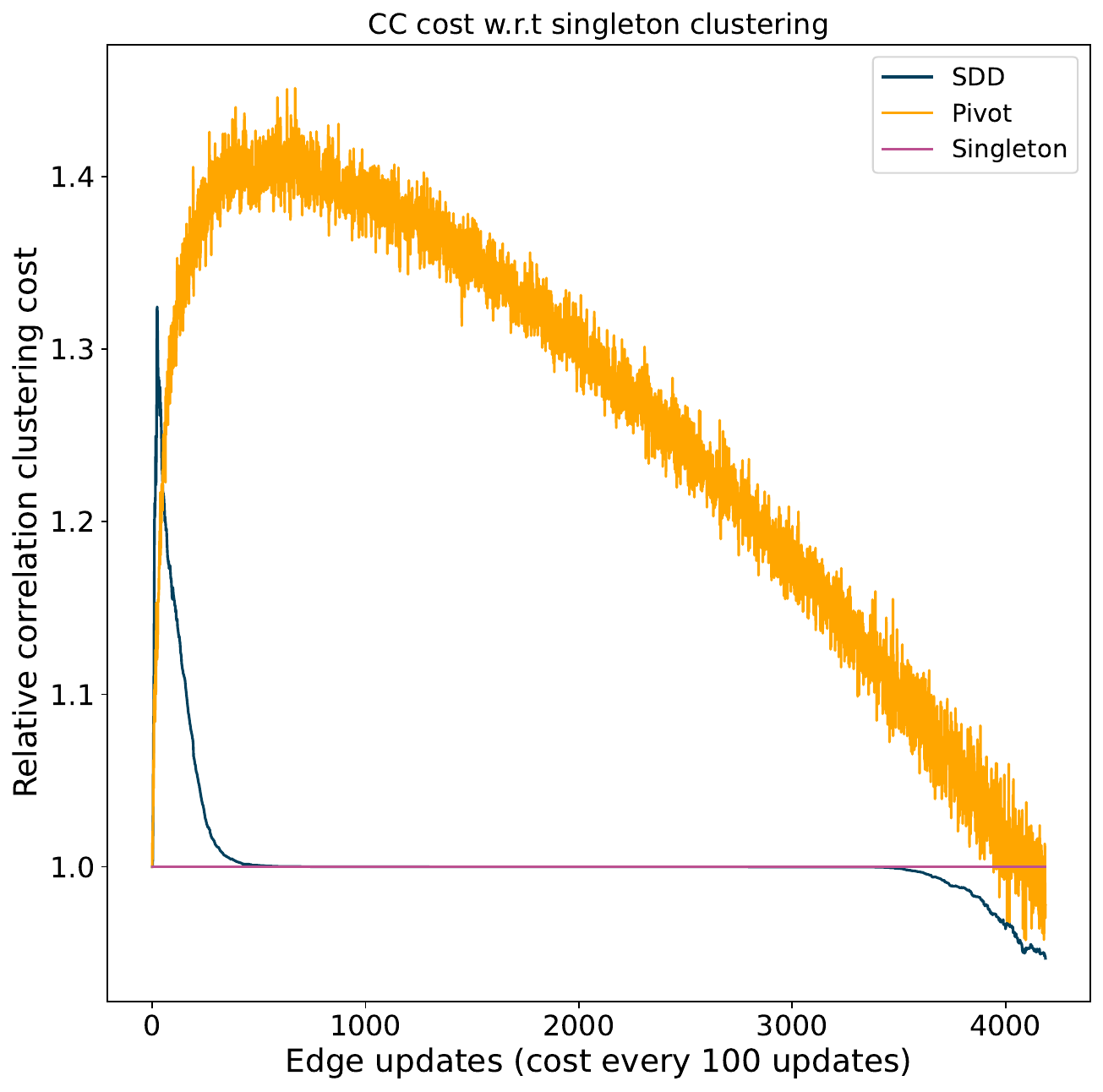}
    \caption{Random updates}
    \label{fig:sbm_2000_10_rand}
  \end{subfigure}
  \begin{subfigure}[t]{.24\linewidth}
    \centering\includegraphics[width=\linewidth]{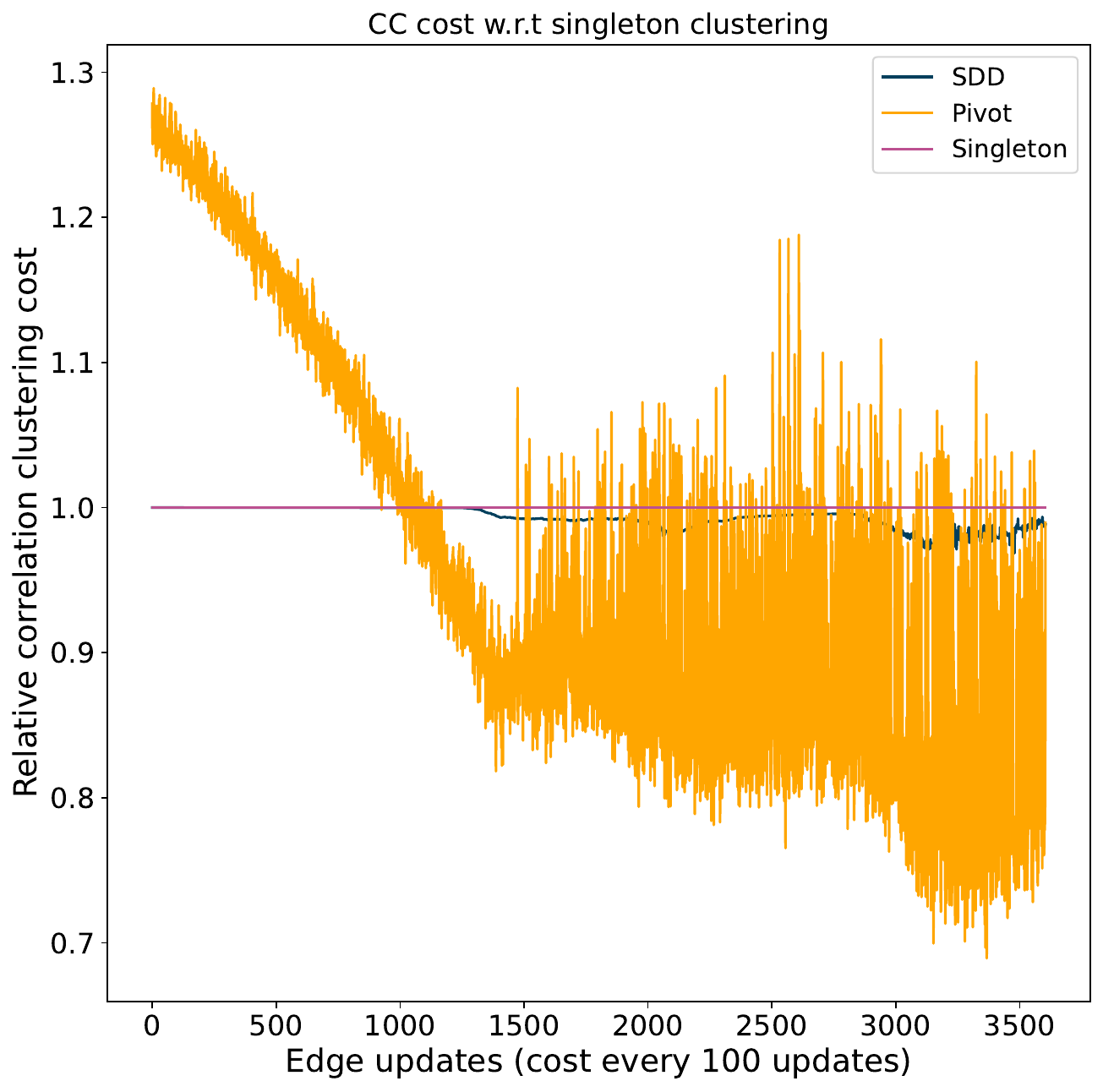}
    \caption{Targeted updates}
    \label{fig:sbm_2000_10_adv}
  \end{subfigure}
  \caption{Plot of CC cost ratio w.r.t singletons for SBM: $n=2000$ and $k=4$ (a,b) and $k=10$ (c,d).}
\end{figure}

\begin{figure}[!htb]
  \centering
  \begin{subfigure}[t]{.24\linewidth}
    \centering\includegraphics[width=\linewidth]{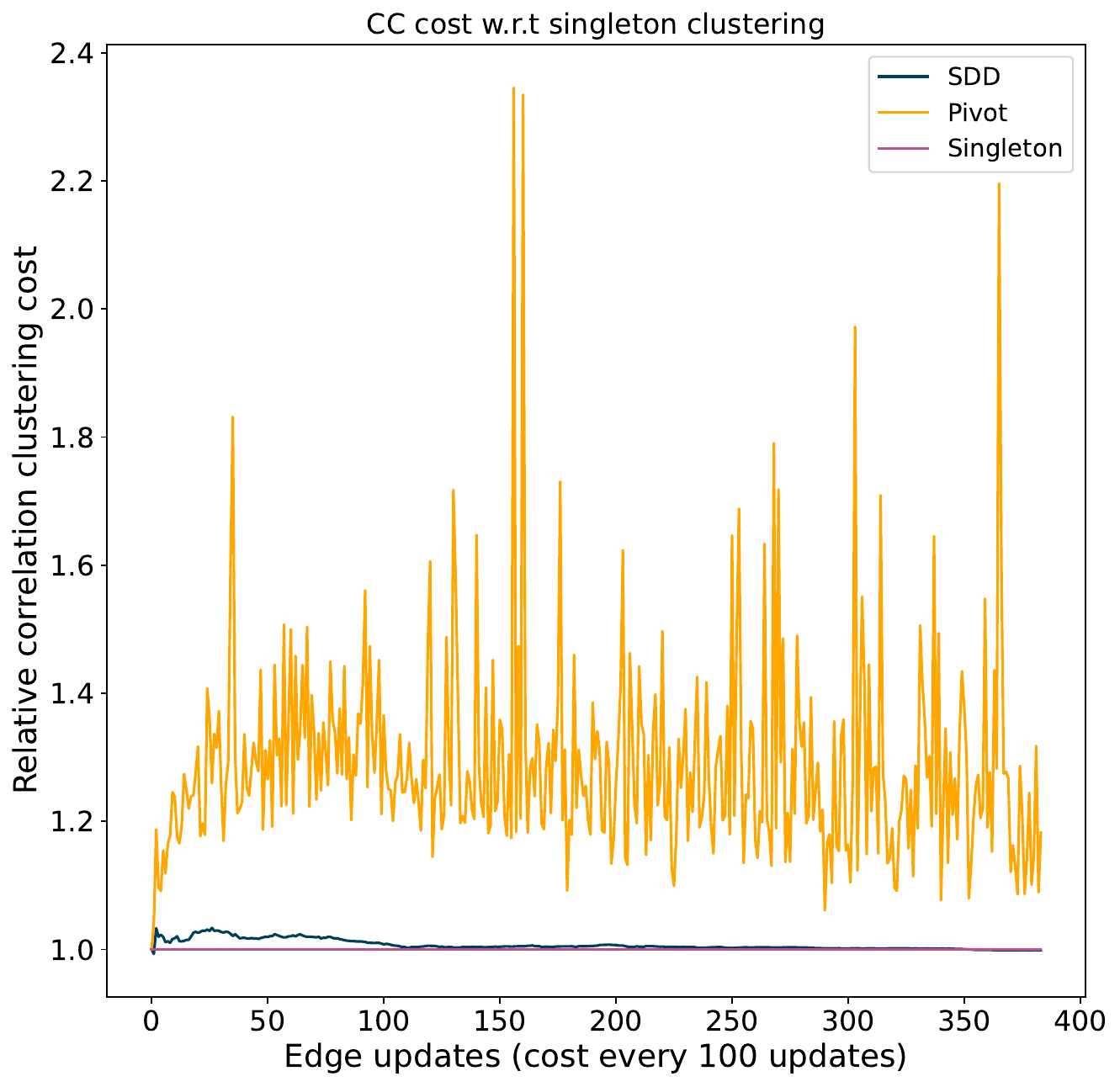}
    \caption{Random updates}
    \label{fig:email_rand}
  \end{subfigure}
  \begin{subfigure}[t]{.24\linewidth}
    \centering\includegraphics[width=\linewidth]{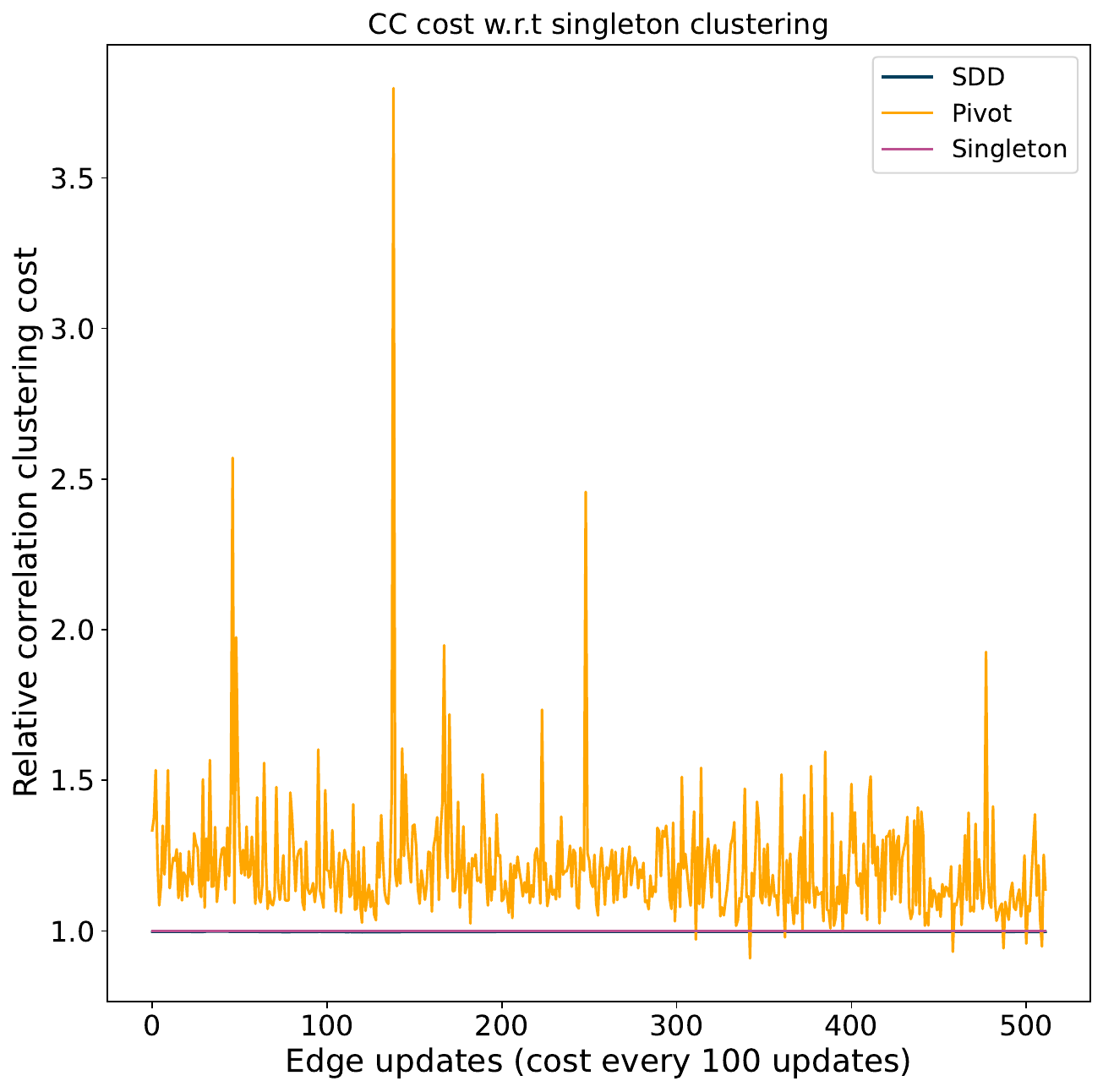}
    \caption{Targeted updates}
    \label{fig:email_adv}
  \end{subfigure}
  \begin{subfigure}[t]{.24\linewidth}
    \centering\includegraphics[width=\linewidth]{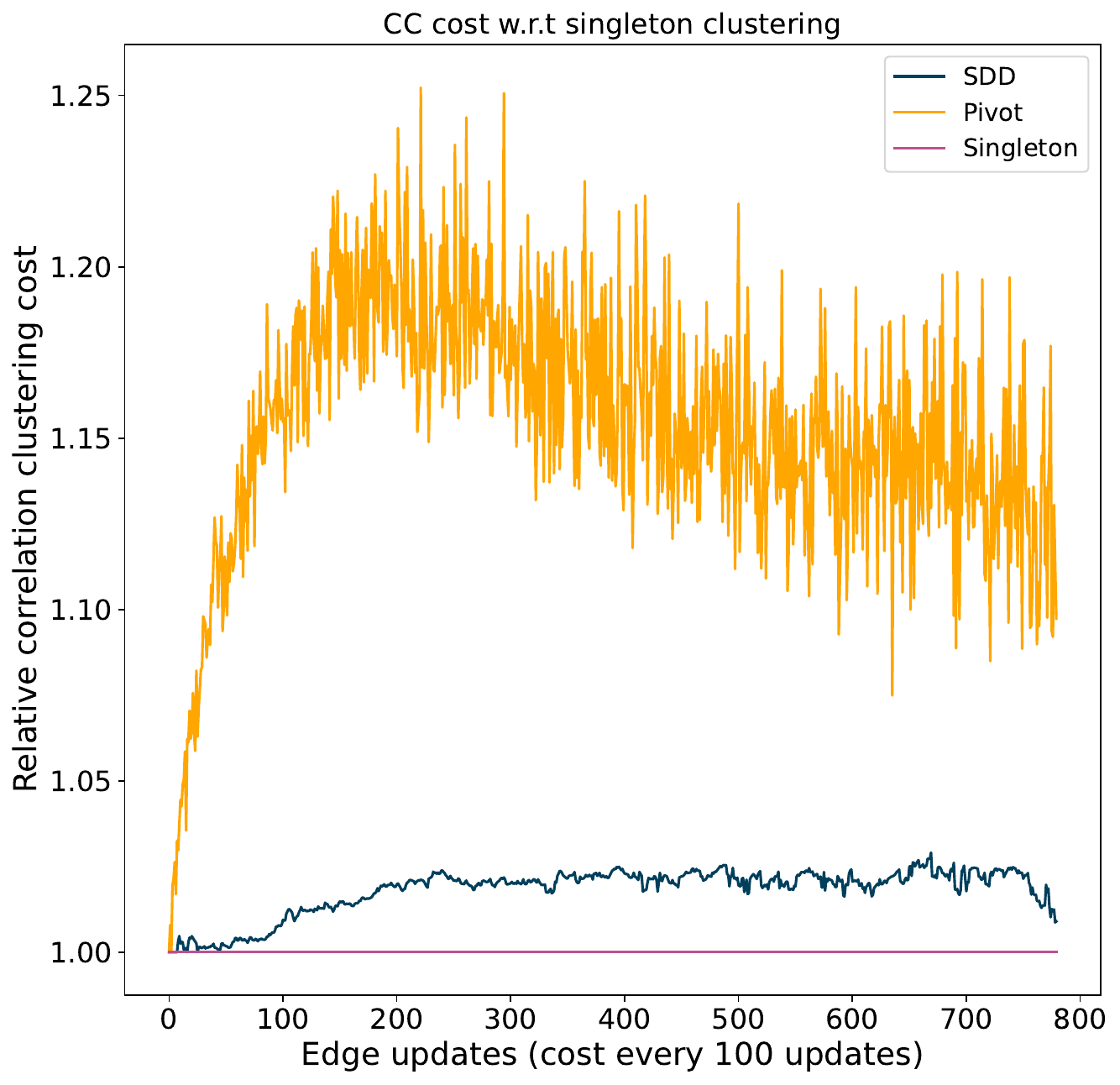}
    \caption{Random updates}
    \label{fig:hepth_rand}
  \end{subfigure}
  \begin{subfigure}[t]{.24\linewidth}
    \centering\includegraphics[width=\linewidth]{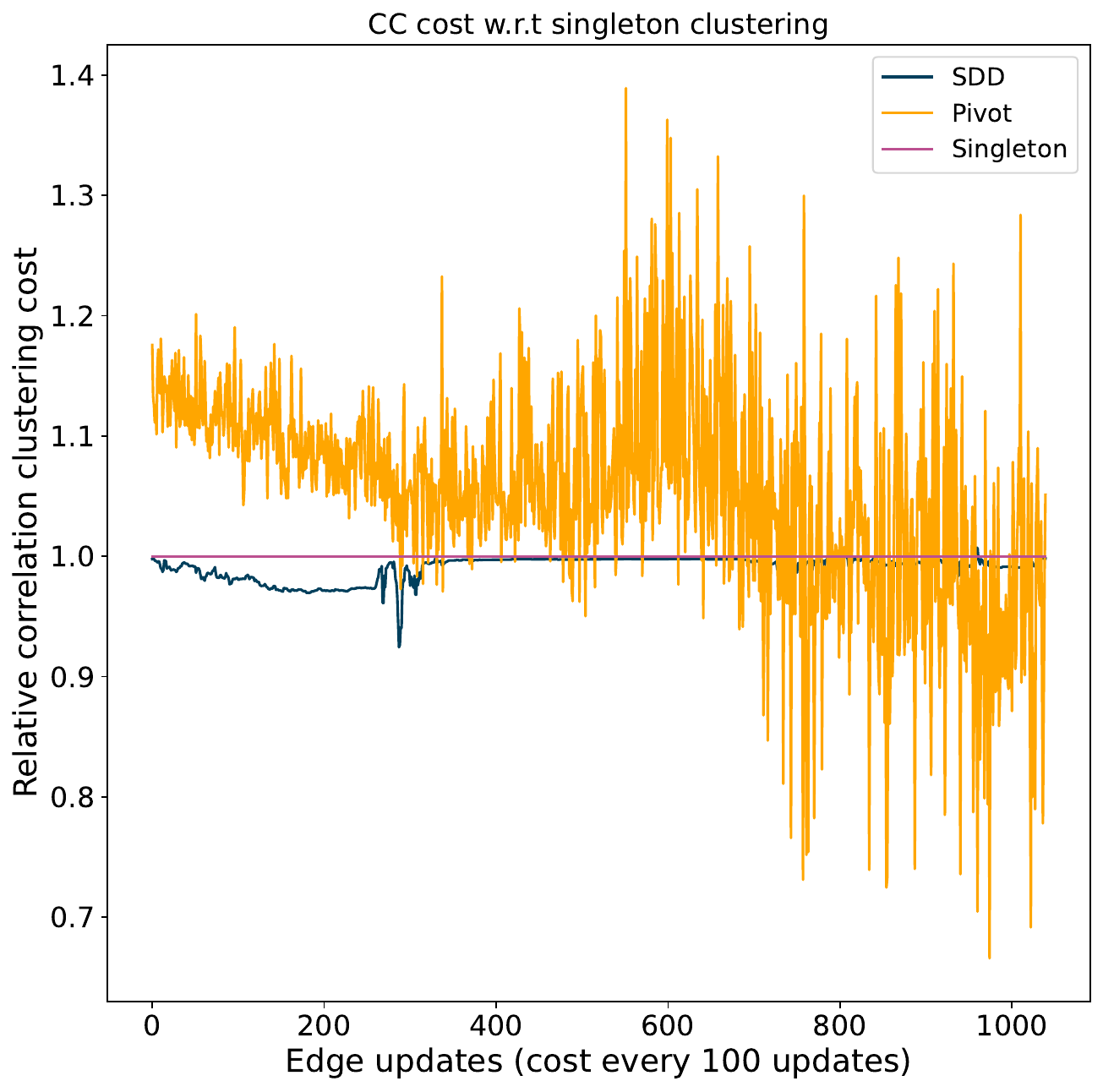}
    \caption{Targeted updates}
    \label{fig:hepth_adv}
  \end{subfigure}
  \caption{Plot of CC cost ratio w.r.t singletons for eemail-Enron (a,b) and ca-AstroPh (c,d).}
\end{figure}

%\clearpage

%\input{new-sdd}

%\input{vertex-dynamic-alg}

\allowdisplaybreaks

\FloatBarrier

\def\shortbib{0}
\bibliographystyle{alpha}
\bibliography{references}

\clearpage
\appendix

\end{document}